\DeclareMathOperator*{\argmin}{arg\,min}
\newcommand*{\rom}[1]{\expandafter\@slowromancap\romannumeral #1@}
\newtheorem{theorem}{Theorem}[section]
\newtheorem{lemma}[theorem]{Lemma}
\newtheorem{corollary}[theorem]{Corollary}
\newtheorem{proposition}[theorem]{Proposition}
\theoremstyle{remark}
\theoremstyle{plain}
\theoremstyle{definition}
\newtheorem{remark}{{\bf Remark}}
\newtheorem{assumption}{{\bf Assumption}}
\newcommand{\E}{\mathit{E}}
\def\var{{\mathrm{var}}}
\newcommand{\tr}{\mathrm{trace}}
\newcommand{\indep}{\;\, \rule[0em]{.03em}{.6em} \hspace{-.25em}
	\rule[0em]{.65em}{.03em} \hspace{-.25em}
	\rule[0em]{.03em}{.6em}\;\,}
\newcommand{\trans}{^{\mbox{\tiny {\sf T}}}}
\def\R{\mathbb R}
\def\nano{\scriptscriptstyle}
\newcommand\hi[1]{^{\nano #1}}
\def\real{\mathbb R}
\newcommand\ca[1]{{\cal{#1}}}
\newcommand\lo[1]{_{\nano #1}}
\def\var{\mathrm{var}}
\def\loo#1{\lo {\mathrm{#1}}}
\def\ran{\mathrm{ran}}
\def\ker{\mathrm{ker}}
\def\cran{\overline{\mathrm{ran}}}
\def\cov{\mathrm{cov}}
\def\nano{\scriptscriptstyle}
\def\inv{\hi{\nano -1}}
\def\nano{\scriptscriptstyle}
\def\ka{\kappa}
\def\cspan{\overline{\mathrm{span}}}
\def\tsum{\textstyle{\sum}}
\def\ali{&\,}
\def\mpinv{\hi{\dagger}}
\def\mpinv{\hi{\dagger}}
\def\convd{\stackrel{d}{\longrightarrow} }
\newfont{\rsfsten}{rsfs10 scaled 1050}
\newfont{\rsfstena}{rsfs10 scaled 750}
\newfont{\rsfstenb}{rsfs10 scaled 800}
\titleformat{\paragraph}[runin]
{\normalfont\normalsize\bfseries}{\theparagraph}{1em}{}
\titleformat{\subparagraph}[runin]
{\normalfont\normalsize\bfseries}{\thesubparagraph}{1em}{}
\def\E{{\rm E}\,}
\def\nat{\mathbb{N}}
\def\pri{\hi{\prime}}
\providecommand{\keywords}[1]
{
	\small	
	\textbf{\textit{Keywords---}} #1
}
\title{Nonlinear function-on-function regression by RKHS}
\author{Peijun Sang$^{1}$, Bing Li$^{2}$  \\
	\small $^{1}$Department of Statistics and Actuarial Science,
	University of Waterloo, \\
	\small $^{2}$Department of Statistics, 
	Pennsylvania State University \\
}
\date{} 
\begin{document}
	\maketitle
	
	\begin{abstract}
	We propose a nonlinear function-on-function regression model where both the covariate and the response are random functions. The nonlinear regression is carried out in two steps: we first construct Hilbert spaces to accommodate the functional covariate and the functional response, and then build a second-layer Hilbert space for the covariate to capture nonlinearity. The second-layer space is assumed to be a reproducing kernel Hilbert space, which is generated by a positive definite kernel determined by the inner product of the first-layer Hilbert space for $X$--this structure is known as the nested Hilbert spaces. We develop estimation procedures to implement the proposed method, which allows the functional data to be observed at different time points for different subjects. Furthermore, we establish the convergence rate of our estimator as well as the 
	weak convergence of the predicted response in the Hilbert space. Numerical studies including both simulations and a data application are conducted to investigate the performance of our estimator in finite sample. 
	\end{abstract} \hspace{10pt}
	
	\keywords{functional data analysis, linear operator, Tikhonov regularization, weak convergence, Hawaii Ocean Time-series}
	
\section{Introduction} \label{sec-intro}

With the development of techniques in data collection, functional data have become increasingly common in modern statistical applications. As a useful tool to treat such data, functional data analysis (FDA) has been widely applied to diverse fields such as neural science, chemometrics, environmetrics and finance. A comprehensive overview of FDA can be found in several monographs (\citealp{ramsay2005}; \citealp{ferraty2006}; \citealp{kokoszka2017}). An important problem in FDA is to study the relationship between a response, which can be a scalar, a vector or a function, and a functional covariate. The aforementioned references as well as other monographs like \cite{horvath2012} and \cite{hsing2015} describe many ideas and methods to tackle this problem. In this paper, we propose a general regression model that allows for flexible nonlinear relations between the functional covariate and the functional response. 

The scalar-on-function regression, where a scalar response is regressed against a functional covariate, has been extensively studied. In particular, estimation and inference for linear scalar-on-function regression models have been one of the focal points of the FDA research in the past two decades. There are two mainstream approaches to fitting a functional linear model. The first represents the slope function in the linear model as a linear combination of a finite number of basis functions, so that fitting the model is reduced to estimating the linear coefficients. The basis can be either a pre-determined basis such as the B-spline basis or a data-driven basis such as the estimated eigenfunctions from the functional principal component analysis (FPCA). See \cite{cardot2003}, \cite{hall2007} and references therein. The second approach assumes that the slope function belongs to a reproducing kernel Hilbert space (RKHS) and resorts to the representer theorem to fit the model; see, for example, \cite{yuan2010} and \cite{cai2012}. In the asymptotic development, the aforementioned work mainly considered convergence rates of estimation and prediction in functional linear models. \cite{shang2015} studied confidence intervals for regression mean and prediction intervals for a future response in generalized functional linear models. \cite{cuesta2019} considered goodness-of-fit of functional linear models to investigate whether they adequately characterize the relation between a scalar response and a functional covariate. 

Unlike scalar-on-function regression, function-on-function regression has not yet been extensively developed due to its computational complexity. However, predicting a random function by another random function is an important problem in many applications. For instance, in the Canadian weather data set (\citealp{ramsay2005}, Chapter 12.4), the temperature profile is used to predict the annual profile of precipitation, rather than the total precipitation. As with scalar-on-function regression, linear models were most frequently investigated in the literature of function-on-function regression. \cite{ramsay2005} used tensor products of spline basis functions  and \cite{yao2005b} and \cite{crambes2013} used estimated eigenfunctions through the FPCA  to estimate the (bivariate) slope function in a linear function-on-function regression model. The idea of a regularized estimation through RKHS proposed by \cite{yuan2010} for a linear scalar-on-function regression model was extended to the function-on-function case by \cite{sun2018optimal}. However, as pointed out by \cite{muller2008functional}, linear relations may not be adequate to characterize dependence of one random function on another function in some applications. The Hawaii ocean data set studied in \cite{qi2019} justified this statement. Nonlinear function-on-function regression models have received substantially less attention in FDA (\citealp{reimherr2018}), let alone statistical inference of such models. 

In this paper, we propose a nonlinear function-on-function regression model that relies on the structure of the nested Hilbert spaces. Specifically, we assume that the functional covariate $X$ and the functional response $Y$ each resides in a Hilbert space, denoted by $\ca H \lo X$ and $\ca H \lo Y$, respectively. To achieve a nonlinear relation, we build another Hilbert space of functions on $\ca H \lo X$, which is assumed to be an RKHS generated by a positive definite kernel $\ka$ determined by the inner product of $\ca H \lo X$. Following \cite{li2017}, we refer to this space as the second-layer space, and the two-space structure as the nested Hilbert spaces. Lastly, we construct a linear operator from the second-layer RKHS to the target space $\ca H \lo Y$, which gives rise to a nonlinear relation. This idea is similar in spirit to that of support vector machines for regression (\citealp{friedman2001elements}, Chapter 5.8). We then propose an implementation method based on this nonlinear function-on-function regression model, which allows for profiles with irregularly spaced observed time points for both the covariate and the response. Numerical studies demonstrate that our proposed method can still achieve relatively good predictive performance when random functions are sparsely observed. We establish consistency with the convergence rate of our estimator under some mild conditions. We find that the convergence rate of the estimated regression operator can be improved from $n\hi{-1/4}$ in \cite{li2017} in function-on-function sufficient dimension reduction to $n\hi{-1/3}$ in the current setting under mild conditions.
More importantly, unlike the previous theoretical work on function-on-function regression that was focused on convergence rates without an asymptotic distribution, we establish weak convergence of the predicted mean for a future observation in $\ca H \lo Y$. This enables us to construct both pointwise confidence intervals and a simultaneous confidence band for the conditional mean. 

The rest of the paper is organized as follows. In Section \ref{sec-model}, we propose the nonlinear function-on-function regression model. In Section \ref{sec-estimation}, we develop an algorithm to fit the model and propose suitable methods to select the tuning parameters that are involved in the estimation procedure. Consistency and weak convergence of the estimator proposed in Section \ref{sec-estimation} are studied in Sections \ref{section:convergence rate} and \ref{sec-CLT}. Pointwise confidence intervals and a simultaneous confidence band for the conditional mean are then constructed based on the weak convergence result. In Section \ref{sec-simu}, we conduct simulation studies to investigate the performance of our proposed model in finite samples. The new model is applied to a data set to further demonstrate its performance in Section \ref{sec-real}.
Some concluding remarks are made in Section \ref{sec-conc}. All technical proofs are delegated to the supplementary material.

\section{Model construction} \label{sec-model}

In this section, we first introduce the concept of the nested Hilbert space, which plays an important role in our model construction, and then lay out the detailed steps to build our nonlinear function-on-function regression model. 

\subsection{Nested Hilbert space for predictor} \label{sec-nested}

Let $(\Omega, \ca F, P)$ be a probability space, $T$ an interval in $\real$, $\ca H \lo X$ and $\ca H \lo Y$   Hilbert spaces of functions on $T$. Let $X: \Omega \to \ca H \lo X$, $Y: \Omega \to \ca H \lo Y$ be random elements in $\ca H \lo X$ and $\ca H \lo Y$ measurable with respect to $\ca F / \ca F \lo X$ and $\ca F / \ca F \lo Y$, where $\ca F \lo X$ and $\ca F \lo Y$ denote the Borel $\sigma$-algebra generated by the open sets in $\ca H \lo X$ and $\ca H \lo Y$.  Let $P \lo X$ and $P \lo Y$  denote the distributions of $X$ and $Y$, and $P \lo {Y | X}: \ca H \lo X \times \ca F \lo Y \rightarrow \R$  the conditional distribution of $Y$ given $X$. 

Let $\ka: \ca H \lo X \times \ca H \lo X \to \real$ be a positive definite kernel and $\frak M \lo X$ be the RKHS generated by $\ka$.  We assume that $\ka$ is induced by the inner product in $\ca H \lo X$; that is, there exists a function $\rho : \R\hi 3 \rightarrow \R\hi +$, such that for any $f, g \in \ca H \lo X$, 
$$
\ka(f, g) = \rho(\langle f, f \rangle\lo {\ca H \lo X},  \langle f, g \rangle\lo {\ca H \lo X},    \langle g,  g \rangle\lo {\ca H \lo X}).
$$
An example of such a kernel is $\ka (f, g) = \exp (-\gamma \| f - g \|\lo{\ca H \lo X}\hi 2)$, where $\gamma > 0$ is a tuning constant.  This is an extension of the Gaussian radial basis function with the Euclidean norm replaced by the $\ca H \lo X$-norm. Since the kernel of $\frak M \lo X$ is determined by the inner product of $\ca H \lo X$, we refer to the $\frak M \lo X$ the nested RKHS via $\rho$ (\citealp{li2017}).

\subsection{Nonlinear function-on-function regression} 

Let $L\lo 2(P\lo X)$ denote the class of all measurable functions of $X$ such that 
$\E[f\hi 2(X)] < \infty$ under $P \lo X$. Let $ L\lo 2(P\lo Y)$ be defined in the same way for $Y$. In the following, $\frak M \lo X + \R$ represents the space $\{f + c: f \in \frak M \lo X, c \in \R\}$. 

\begin{assumption} \label{ass-dense}
	$\frak M \lo X + \R$ is a dense subset of $L\lo 2(P\lo X)$
	and $ \E (\|Y\|\lo {\ca H \lo Y }\hi 2) < \infty $.
\end{assumption}

Assumption \ref{ass-dense} is essentially the same as Assumption (AS) in \cite{fukumizu2009}. This assumption ensures that any function $f \in L\lo 2(P \lo X)$ can be approximated by a sequence of functions $\{f\lo n\} \subseteq \frak M \lo X$ in the sense that $\var(f - f\lo n) \rightarrow 0$. For two Hilbert spaces $\ca H\lo 1$ and $\ca H\lo 2$,  let $\ca B(\ca H\lo 1, \ca H\lo 2)$ denote the class of all bounded linear operators from $\ca H\lo 1$ to $\ca H\lo 2$; the special case $\ca B(\ca H, \ca H)$ is abbreviated by $\ca B (\ca H)$. For a linear operator $A \in \ca B (\ca H\lo 1, \ca H\lo 2)$, we use $\ker(A)$ to denote the kernel of $A$; that is, $\ker(A) = \{f \in \ca H \lo 1: Af = 0\}$; we use $\ran(A)$ to denote the range of $A$; that is, $\ran(A) = \{Af: f \in \ca H\lo 1\}$; we use $\cran(A)$ to denote the closure of $\ran(A)$ in $\ca H\lo 2$. For a self-adjoint operator $A \in \ca B(\ca H)$, we have 
$\ker(A)\hi{\perp} = \cran(A)$ and $\cran(A)\hi{\perp} = \ran(A)\hi{\perp} = \ker(A)$. 

\begin{assumption}
	There exists a constant $C > 0$ such that for any $f \in \frak M \lo X$, $\E[f\hi 2(X)] \leq C \|f\|\lo { \frak M \lo X}\hi 2$. 
	\label{ass-var}
\end{assumption}

Assumption \ref{ass-var} ensures that the inclusion mapping $\frak M \lo X \rightarrow L\lo 2(P\lo X)$, $f \mapsto f$ is a bounded linear operator. 
It also guarantees that the bilinear form $\frak M \lo X \times \frak M \lo X \rightarrow \R$, $(f, g) \mapsto \cov(f(X), g(X) )$ is bounded. Therefore, there exists an operator 
$\Sigma \lo {XX}  \in \ca B (\frak M \lo X)$ such that $\langle f, \Sigma \lo {XX}  g \rangle\lo{\frak M \lo X } = \cov(f(X), g(X) )$. Similarly, under Assumptions \ref{ass-dense} and \ref{ass-var}, the bilinear form $\frak M \lo X \times \ca H \lo Y  \rightarrow \R$, $(f, g) \mapsto \cov(f(X), \langle g, Y \rangle \lo {\ca H \lo Y} )$ is bounded, and this implies that there is
an operator $\Sigma \lo {XY}  \in \ca B (\ca H \lo Y, \frak M \lo X)$ such that $\langle f, \Sigma \lo {XY}  g \rangle \lo {\frak M \lo X } =  \cov(f(X), \langle g, Y \rangle \lo {\ca H \lo Y})$
for any $f \in \frak M \lo X$ and $g \in \ca H \lo Y$.  Moreover, Assumption \ref{ass-var} also implies that the linear functional $f \mapsto \E[f(X)]$ on $\frak M \lo X$ is bounded. Let $\mu\lo X$ denote the Riesz representation of this linear functional; that is, $\langle f, \mu\lo X \rangle \lo {\frak M \lo X} = \E[f(X)]$  for all $f \in \frak M \lo X$. By construction, for any $x \in \ca H \lo X$, $\mu\lo X(x) = \langle \mu\lo X, \ka(\cdot, x) \rangle \lo {\frak M \lo X} = \E[\ka (X, x)]$.
Similarly, $\mu\lo Y$ is defined as the Riesz representation of the bounded linear functional $f \mapsto \E[\langle Y, f\rangle \lo {\ca H \lo Y}]$ where $f \in \ca H \lo Y$, that is, $\langle f, \mu\lo Y \rangle \lo {\ca H \lo Y} = \E[\langle f, Y \rangle \lo {\ca H \lo Y} ]$ for $f \in \ca H \lo Y$. The function $\mu\lo Y$ is the first moment of $Y$, and is denoted by $\E(Y)$; the function $\mu\lo X$ is called the mean element of $X$ in $\frak M \lo X$.

Using the same argument as in \cite{fukumizu2009}, it can be shown that
\begin{align*}
	& \Sigma \lo {XX}  = \E\{(\ka (\cdot, X) - \mu \lo X) \otimes (\ka (\cdot, X) - \mu \lo X)\},~~
	\Sigma \lo {XY} = \E\{(\ka (\cdot, X) - \mu \lo X) \otimes (Y - \mu\lo Y) \},\\
	& \Sigma\lo {YX} = \Sigma\lo {XY}\hi *  = \E\{(Y - \mu\lo Y)  \otimes (\ka (\cdot, X) - \mu \lo X)\}, 
\end{align*}
under Assumption \ref{ass-var}. Since $\cran(\Sigma\lo {XX})\hi\perp = \ker(\Sigma \lo {XX})$ consists of functions that are constants almost surely, $\cran(\Sigma\lo {XX})$ is the effective domain of $\Sigma\lo {XX}$. As shown in \cite{li2017}, this effective domain can be represented explicitly using the kernel as
\begin{equation} \label{eq-ranX}
	\cspan\{\ka(\cdot, x) - \mu\lo X : x \in \ca H \lo X\}.
\end{equation}
We will use $ \frak M \lo X\hi 0$ to denote the effective domain.

For any $g \in \ran(\Sigma \lo {XX})$, there exists some $f \in \frak M \lo X$ such that $\Sigma \lo {XX} f = g$. By Theorem 3.3.7 of \cite{hsing2015} and the fact that $\cran (\Sigma \lo {XX}) = \frak M \lo X\hi 0$, there exists a unique decomposition $f = f\lo 1 + f\lo 2$ such that $f\lo 1 \in \ker(\Sigma \lo {XX} )$ and $f\lo 2 \in \cran (\Sigma \lo {XX})  = \frak M \lo X\hi 0$. Therefore the mapping $g \mapsto f\lo 2$ from $\ran(\Sigma\lo {XX})$ to $\cran(\Sigma\lo {XX})$ is well-defined under Assumption \ref{ass-var}. We call this mapping the Moore-Penrose inverse of $\Sigma \lo {XX}$, and denote it by $\Sigma \lo {XX} \hi {\mpinv}$.

\begin{assumption} \label{ass-bounded}
	$\ran(\Sigma \lo {XY} ) \subseteq \ran(\Sigma \lo {XX})$ and $\Sigma \lo {XX}  \mpinv \Sigma \lo {XY}$ is a bounded operator. 
\end{assumption}

Under this assumption, $\Sigma \lo {XX} \mpinv\Sigma \lo {XY}$ is a well-defined bounded operator. Generally speaking, $\Sigma \lo {XX} \mpinv$ is not bounded since 
$\Sigma \lo {XX} $ is a Hilbert-Schmidt operator (\citealp{fukumizu2009}). However, as argued in \cite{li2017}, it is reasonable to assume that
$\Sigma \lo {XX} \mpinv \Sigma \lo {XY}$ is bounded, which is determined by the interaction of these two operators. 

Our function-on-function regression problem is to find $B \lo 0 \in \ca B ( \ca H \lo Y, \frak M \lo X \hi 0)$ such that, for each $g \in \ca H \lo Y $, 
\begin{align}\label{eq:ols}
	B\lo 0 = \argmin \lo {B \in \ca B ( \ca H \lo Y , \frak M \lo X \hi 0)}\E \left[ \left\{ \langle g, Y \rangle \lo {\ca H \lo Y} - \E [\langle g, Y \rangle \lo {\ca H \lo Y} ] - [(B g)(X) - \E\{(Bg)(X)\}]  \right\} \hi 2 \right].
\end{align}
This is, indeed, a generalization of multivariate regression. 


\begin{theorem}  \label{th-sol}
	Under Assumptions 1-3, the solution to \eqref{eq:ols} is
	\begin{align*}
		B \lo 0 = \Sigma \lo {XX} \mpinv \Sigma \lo {XY}
	\end{align*}
\end{theorem}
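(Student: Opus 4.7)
The plan is to fix $g \in \ca H \lo Y$, set $f = Bg \in \frak M \lo X \hi 0$ for any admissible $B \in \ca B(\ca H \lo Y, \frak M \lo X \hi 0)$, and rewrite the objective of \eqref{eq:ols} as a quadratic form in $f$ via the covariance operators. Since both $\langle g, Y\rangle \lo {\ca H \lo Y} - \E[\langle g, Y\rangle \lo {\ca H \lo Y}]$ and $(Bg)(X) - \E[(Bg)(X)]$ have mean zero, the objective expands into a sum of a variance and a covariance. Using the defining identities $\langle f, \Sigma \lo {XX} f\rangle \lo {\frak M \lo X} = \var(f(X))$ and $\langle f, \Sigma \lo {XY} g\rangle \lo {\frak M \lo X} = \cov(f(X), \langle g, Y\rangle \lo {\ca H \lo Y})$ recorded in Section \ref{sec-model}, the objective becomes
\begin{align*}
J(B) = \var(\langle g, Y\rangle \lo {\ca H \lo Y}) - 2\langle f, \Sigma \lo {XY} g\rangle \lo {\frak M \lo X} + \langle f, \Sigma \lo {XX} f\rangle \lo {\frak M \lo X},
\end{align*}
where only the last two terms depend on $f$.

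Next, because $\Sigma \lo {XX}$ is positive semi-definite, $J(B)$ is a convex quadratic in $f$, so any stationary point satisfies the normal equation $\Sigma \lo {XX} f = \Sigma \lo {XY} g$. Assumption \ref{ass-bounded} guarantees $\Sigma \lo {XY} g \in \ran(\Sigma \lo {XX})$, so at least one solution exists. Any two solutions differ by an element of $\ker(\Sigma \lo {XX})$, which is the orthogonal complement of $\cran(\Sigma \lo {XX}) = \frak M \lo X \hi 0$; consequently, the restriction of $\Sigma \lo {XX}$ to $\frak M \lo X \hi 0$ is injective, and the unique solution lying in $\frak M \lo X \hi 0$ is $f \lo \star := \Sigma \lo {XX} \mpinv \Sigma \lo {XY} g$, by the definition of the Moore--Penrose inverse recalled in Section \ref{sec-model}.

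Finally, setting $B \lo 0 g = f \lo \star$ for every $g \in \ca H \lo Y$ yields $B \lo 0 = \Sigma \lo {XX} \mpinv \Sigma \lo {XY}$, which belongs to $\ca B(\ca H \lo Y, \frak M \lo X \hi 0)$ by Assumption \ref{ass-bounded}. Strict convexity of $J$ on $\frak M \lo X \hi 0$ then confirms that $B \lo 0 g$ is the unique minimizer within this class, so $B \lo 0$ solves \eqref{eq:ols}.

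The main bookkeeping issue will be that $\Sigma \lo {XX}$ is not invertible on all of $\frak M \lo X$ because its kernel contains the constant functions; this is precisely why the optimization must be restricted to the effective domain $\frak M \lo X \hi 0 = \cran(\Sigma \lo {XX})$, on which the normal equation has a unique, well-defined solution delivered by $\Sigma \lo {XX} \mpinv$. Assumption \ref{ass-bounded} is the key hypothesis that promotes this formal Moore--Penrose inverse to a genuine bounded operator on all of $\ca H \lo Y$, which in turn guarantees that the resulting $B \lo 0$ is itself admissible.
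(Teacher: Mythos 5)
Your proof is correct and takes essentially the same route as the paper's: both arguments reduce \eqref{eq:ols} to the quadratic form $\langle f, \Sigma \lo {XX} f \rangle \lo {\frak M \lo X} - 2 \langle f, \Sigma \lo {XY} g \rangle \lo {\frak M \lo X}$ in $f = Bg$ and rest on the fact that $\Sigma \lo {XX} \Sigma \lo {XX} \mpinv$ is the identity on $\ran(\Sigma \lo {XX})$ — the paper verifies this by showing the cross term in the expansion around $B \lo 0$ vanishes, while you phrase it as solvability of the normal equation $\Sigma \lo {XX} f = \Sigma \lo {XY} g$ under Assumption \ref{ass-bounded} plus convexity. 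Your additional uniqueness step (injectivity of $\Sigma \lo {XX}$ on $\frak M \lo X \hi 0$, so the minimizing value $Bg$ is pinned down in the effective domain) goes slightly beyond what the paper proves, and is correct.
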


The operator $B\lo 0$ is a special case of the ``regression operator'' defined in \cite{lee2016variable}, and was also used implicitly in \citet{fukumizu2004, fukumizu2009}.
We now develop its properties in the context of nonlinear function-on-function regression, which is important for later development. The next proposition describes a relation between $B\lo 0$ and the conditional expectation $\E[\langle g, Y \rangle \lo {\ca H \lo Y} \mid X]$ for any $g \in \ca H \lo Y$. 

\begin{proposition}  \label{prop-reg}
	Under Assumptions 1-3, we have, for any $g \in \ca H \lo Y $, 
	\begin{align} \label{eq-reg}
		\E[\langle g, Y \rangle \lo {\ca H \lo Y} | X] = (B\lo 0 g)(X) + \E[\langle g, Y \rangle \lo {\ca H \lo Y}] - \E[(B\lo 0 g)(X)].
	\end{align}
\end{proposition}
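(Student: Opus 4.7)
The plan is to show that the right-hand side of \eqref{eq-reg} coincides with the $L_2(P_X)$-projection of $\langle g, Y\rangle_{\ca H \lo Y}$ onto the subspace of (square-integrable) functions of $X$. Since the conditional expectation is characterized as this very projection, equality will follow. The key device will be Assumption \ref{ass-dense}, which says $\frak M \lo X + \R$ is dense in $L \lo 2(P \lo X)$, so that equality of covariances against every $f \in \frak M \lo X$ will be enough to force equality (up to an additive constant) of functions in $L \lo 2(P \lo X)$.

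The first step is to verify the covariance identity: for every $f \in \frak M \lo X$,
\[
\cov\bigl(f(X), (B\lo 0 g)(X)\bigr) = \cov\bigl(f(X), \langle g, Y \rangle \lo {\ca H \lo Y}\bigr).
\]
By the definition of $\Sigma\lo{XX}$, the left side equals $\langle f, \Sigma \lo {XX} B \lo 0 g \rangle \lo {\frak M \lo X}$. Using Theorem \ref{th-sol} and Assumption \ref{ass-bounded}, which gives $\ran(\Sigma\lo{XY}) \subseteq \ran(\Sigma\lo{XX}) \subseteq \cran(\Sigma\lo{XX}) = \frak M \lo X \hi 0$, the operator identity $\Sigma \lo {XX} \Sigma \lo {XX} \mpinv \Sigma \lo {XY} = \Sigma \lo {XY}$ holds (since $\Sigma \lo {XX}\Sigma \lo {XX}\mpinv$ is the orthogonal projector onto $\frak M \lo X \hi 0$ and $\Sigma\lo{XY} g$ already lies there). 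Therefore the left side equals $\langle f, \Sigma \lo {XY} g\rangle\lo{\frak M \lo X}$, which by definition of $\Sigma \lo {XY}$ is the right side.

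Next, let $h(X) = \E[\langle g, Y\rangle\lo{\ca H \lo Y} \mid X] - (B \lo 0 g)(X)$. Both terms belong to $L\lo 2(P\lo X)$: the first by Jensen's inequality together with $\E(\|Y\|\lo{\ca H \lo Y}\hi 2) < \infty$ from Assumption \ref{ass-dense}, and the second because $B\lo 0 g \in \frak M \lo X$ combined with Assumption \ref{ass-var}. Using the tower property, the covariance identity of the previous step shows that $\cov(f(X), h(X)) = 0$ for every $f \in \frak M \lo X$, hence $h(X) - \E[h(X)]$ is $L\lo 2(P\lo X)$-orthogonal to the centered version of $\frak M \lo X + \R$. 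By Assumption \ref{ass-dense}, centered elements of $\frak M \lo X + \R$ are dense among centered elements of $L \lo 2(P \lo X)$, so $h(X) - \E[h(X)] = 0$ almost surely. Taking expectations of $h(X)$ identifies the constant $\E[h(X)] = \E[\langle g, Y\rangle\lo{\ca H \lo Y}] - \E[(B \lo 0 g)(X)]$, which rearranges to the desired formula \eqref{eq-reg}.

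The main obstacle is the density/projection step: one must take care that the orthogonality obtained against $\frak M \lo X$ upgrades to orthogonality against all centered $L \lo 2(P \lo X)$ functions. This is where Assumption \ref{ass-dense} is indispensable, because otherwise $h(X)$ could still carry a non-constant component of $L\lo 2(P\lo X)$ that is invisible to $\frak M \lo X$. The auxiliary algebraic fact $\Sigma \lo {XX} \Sigma \lo {XX} \mpinv \Sigma \lo {XY} = \Sigma \lo {XY}$, which I use implicitly, should be justified by recalling from the preceding discussion that $\Sigma\lo{XX}\mpinv$ is defined on $\ran(\Sigma\lo{XX})$ with range in $\cran(\Sigma\lo{XX}) = \frak M \lo X \hi 0$, and that $\Sigma\lo{XY} g$ lies in $\ran(\Sigma\lo{XX})$ by Assumption \ref{ass-bounded}.
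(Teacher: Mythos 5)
Your proposal is correct and follows essentially the same route as the paper's own proof: establish $\cov(f(X),\langle g, Y\rangle_{\ca H \lo Y} - (B\lo 0 g)(X)) = 0$ for all $f \in \frak M \lo X$ via the identity $\Sigma\lo{XX}\Sigma\lo{XX}\mpinv\Sigma\lo{XY} = \Sigma\lo{XY}$ (valid because $\Sigma\lo{XX}\Sigma\lo{XX}\mpinv$ acts as the identity on $\ran(\Sigma\lo{XX})$ and $\ran(\Sigma\lo{XY})\subseteq\ran(\Sigma\lo{XX})$ by Assumption 3), then invoke the density of $\frak M \lo X + \R$ in $L\lo 2(P\lo X)$ to conclude the residual is constant, and finally take expectations to identify the constant. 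Your write-up merely spells out the density/orthogonality step and the $L\lo 2$ membership in more detail than the paper does.
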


We define a (random) linear functional $T\lo X: \ca H\lo Y \rightarrow \real, h \mapsto 
\E[\langle Y, h\rangle \lo {\ca H\lo Y} | X]$. By construction, $\E(Y | X)$ is the Riesz representation of $T\lo X$. Proposition \ref{prop-reg} leads to the following relation between $B\lo 0$ and $\E(Y | X)$, which is a random element in $\ca H \lo Y$.

\begin{proposition}  \label{prop-condE}
	Under Assumptions 1-3, 
	\begin{align} \label{eq-condE}
		\E(Y | X) = B\lo 0 \hi * [\ka (\cdot, X) - \mu\lo X] + \mu\lo Y,
	\end{align}
	where $B\lo 0 \hi * = \Sigma \lo {YX}  \Sigma \lo {XX} \mpinv \in \ca B(\frak M \hi 0\lo X, \ca H \lo {Y} )$ is the adjoint operator of $B\lo 0$. 
\end{proposition}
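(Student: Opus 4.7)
The plan is to deduce Proposition \ref{prop-condE} from Proposition \ref{prop-reg} by recognizing the right-hand side of \eqref{eq-reg} as an inner product in $\ca H \lo Y$, and then invoking uniqueness of the Riesz representation. The key identification is that $\E(Y \mid X)$, as a random element in $\ca H \lo Y$, is defined as the Riesz representation of the bounded linear functional $T\lo X: h \mapsto \E[\langle Y, h \rangle\lo{\ca H \lo Y} \mid X]$, so it suffices to show that for every $g \in \ca H \lo Y$,
\[
\E[\langle g, Y \rangle\lo{\ca H \lo Y} \mid X] = \bigl\langle g,\, B\lo 0\hi *[\ka(\cdot, X) - \mu\lo X] + \mu\lo Y \bigr\rangle\lo{\ca H \lo Y}.
\]

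The main computation will rewrite each term of \eqref{eq-reg} as an inner product against $g$. First, since $B\lo 0 g \in \frak M \lo X\hi 0 \subseteq \frak M \lo X$, the reproducing property gives $(B\lo 0 g)(X) = \langle B\lo 0 g, \ka(\cdot, X) \rangle\lo{\frak M \lo X}$, while the defining relation of the mean element yields $\E[(B\lo 0 g)(X)] = \langle B\lo 0 g, \mu\lo X \rangle\lo{\frak M \lo X}$. Subtracting and moving $g$ out via the adjoint,
\[
(B\lo 0 g)(X) - \E[(B\lo 0 g)(X)] = \bigl\langle B\lo 0 g,\, \ka(\cdot, X) - \mu\lo X \bigr\rangle\lo{\frak M \lo X} = \bigl\langle g,\, B\lo 0 \hi *[\ka(\cdot, X) - \mu\lo X] \bigr\rangle\lo{\ca H \lo Y}.
\]
Similarly, $\E[\langle g, Y \rangle\lo{\ca H \lo Y}] = \langle g, \mu\lo Y \rangle\lo{\ca H \lo Y}$ by the definition of $\mu\lo Y$. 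Substituting these into \eqref{eq-reg} and comparing with the definition of $\E(Y \mid X)$ as the Riesz representation of $T\lo X$ gives the identity on a dense set and hence, by the arbitrariness of $g$, the equality \eqref{eq-condE}.

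It remains to verify the announced form $B\lo 0 \hi * = \Sigma \lo {YX} \Sigma \lo {XX} \mpinv$. By Theorem \ref{th-sol}, $B\lo 0 = \Sigma \lo {XX}\mpinv \Sigma \lo {XY}$, and this is a bounded operator by Assumption \ref{ass-bounded}. Taking adjoints and using $\Sigma\lo{YX} = \Sigma\lo{XY}\hi *$ together with the self-adjointness of $\Sigma\lo{XX}\mpinv$ on its effective domain (inherited from the self-adjointness of $\Sigma\lo{XX}$ restricted to $\cran(\Sigma\lo{XX}) = \frak M\lo X\hi 0$) delivers $B\lo 0\hi * = \Sigma\lo{YX}\Sigma\lo{XX}\mpinv$ as a bounded operator from $\frak M\lo X\hi 0$ into $\ca H \lo Y$.

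The step I expect to require the most care is not the inner-product manipulation itself but the bookkeeping around the Moore-Penrose inverse: $\Sigma\lo{XX}\mpinv$ is defined only on $\ran(\Sigma\lo{XX})$, so the composition $\Sigma\lo{XX}\mpinv\Sigma\lo{XY}$ is well-defined only because of the range inclusion in Assumption \ref{ass-bounded}, and the argument for self-adjointness of $\Sigma\lo{XX}\mpinv$ in taking the adjoint must be done on its effective domain $\frak M\lo X\hi 0$. Once these domain issues are handled, the rest is a direct two-line Riesz-representation argument.
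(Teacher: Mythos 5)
Your proposal is correct and follows essentially the same route as the paper: reduce to Proposition \ref{prop-reg}, use the adjoint identity $\langle B\lo 0 g, \ka(\cdot,X)-\mu\lo X\rangle\lo{\frak M \lo X} = \langle g, B\lo 0\hi *[\ka(\cdot,X)-\mu\lo X]\rangle\lo{\ca H \lo Y}$, and conclude by the arbitrariness of the test element, with the same care about extending $\Sigma\lo{YX}\Sigma\lo{XX}\mpinv$ to the effective domain $\frak M\lo X\hi 0$ that the paper notes at the outset. Your extra verification of $B\lo 0\hi * = \Sigma\lo{YX}\Sigma\lo{XX}\mpinv$ is a harmless addition the paper leaves implicit.
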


For convenience, in the following, we use $Y\lo c$ to represent $Y - \mu\lo Y$ and use $\ka\lo c(\cdot, x)$ to represent $\ka(\cdot, x) - \mu\lo X$. Note that, for any $f \in \frak M \lo X$, we have $\langle \ka\lo c(\cdot, x), f \rangle \lo {\frak M \lo X} = f(x) - \E[f(X)]$. 
Proposition \ref{prop-condE} indicates that, for a given $x \in \ca H\lo X$, the predicted value of $Y$  is given by
\begin{align}
	\nonumber
	\E(Y | X=x) & = \Sigma\lo {YX}\Sigma\lo {XX}\mpinv \ka\lo c (\cdot, x)  + \mu\lo Y \\
	& = \E\left[\left\{(\Sigma \lo {XX} \mpinv \ka\lo c(\cdot, x))(X)\right\}Y\lo c \right] + \mu\lo Y.
	\label{eq-condexp}
\end{align}

\section{Estimation}
\label{sec-estimation}
In the last section we have described the solution to the nonlinear function-on-function regression at the population level. In this section, we implement the regression at the sample level.  The key step is to construct the sample estimate of the regression operator based on $n$ i.i.d. observations on $(X, Y)$ by representing relevant operators as $n \times n$ matrices with a coordinate representation system. 
See, for example, \cite{johnson1985} and \cite{li2018}.

\subsection{Coordinate representation system} \label{sec-coor}
Suppose that $\ca L\lo 1$ is a finite-dimensional linear space with basis $\ca B = \{\xi\lo 1, \ldots, \xi\lo p\}$. Then for any $\xi \in \ca L\lo 1$, there is a unique vector $(a\lo 1, \ldots, a\lo p)\trans \in \R\hi p$ such that $\xi = \sum\lo {i = 1}\hi p a\lo i \xi\lo i$. The vector $(a\lo 1, \ldots, a\lo p)\trans$ is called the coordinate of $\xi$ with respect to $\ca B$, and denoted as $[\xi] \lo {\ca B}$.  Throughout this section we will reserve the square brackets $[\cdot]$ exclusively for coordinate representation. Next, we introduce the coordinate representation of a linear operator between two (finite-dimensional) linear spaces. Suppose $\ca L\lo 2$ is another linear space with basis $\ca C = \{\eta\lo 1, \ldots, \eta\lo q\}$ and
$A$ is a linear operator from $\ca L\lo 1$ to $\ca L\lo 2$. Then for any $\xi \in \ca L\lo 1$, we have
$$
A \xi = A\left(\sum\lo {i = 1}\hi p ([\xi] \lo {\ca B})\lo i \xi\lo i\right) = \sum\lo{i = 1}\hi p ([\xi] \lo {\ca B})\lo i  (A\xi\lo i) = \sum\lo {i = 1}\hi p ([\xi] \lo {\ca B})\lo i  \sum\lo {j = 1}\hi q ([A\xi\lo i] \lo {\ca C})\lo j \eta\lo j. 
$$
By the law of matrix multiplication, we can rewrite the right-hand side of the above equation as
$$
\sum\lo {j = 1}\hi q \sum\lo {i = 1}\hi p ([A\xi\lo i] \lo {\ca C})\lo j ([\xi] \lo {\ca B})\lo i  \eta\lo j = \sum\lo {j = 1}\hi q \{({}\lo {\ca C}[A]\lo {\ca B})([\xi] \lo {\ca B})\}\lo j \eta\lo j, 
$$
where ${{}\lo{\ca C}[A] \lo {\ca B}}$ is the $q \times p$ matrix with $(i, j)$th entry being $([A\xi\lo j] \lo {\ca C})\lo i$.  This equation indicates that $[A \xi] \lo {\ca C} =( {}\lo{\ca C}[A]\lo {\ca B})([\xi] \lo {\ca B})$. We therefore call the matrix ${}\lo{\ca C}[A]\lo{\ca B}$ the coordinate of the linear operator $A$ with respect to bases $\ca B$ and $\ca C$. If we have a third linear space $\ca L\lo 3$ with basis $\ca D = \{\zeta\lo 1, \ldots, \zeta\lo l\}$ and another linear operator $B: \ca L\lo 2 \rightarrow \ca L\lo 3$, then it is straightforward to show that
${{}\lo {\ca D}[BA] \lo {\ca B} }= ({{}\lo{\ca D}[B]\lo{\ca C}}) ({{}\lo{\ca C}[A]\lo{\ca B}})$. When the relevant bases are clear from the context and no confusion will be caused, we will drop subscripts and write ${{}\lo{\ca C}[A]\lo{\ca B}}$ and $[\xi] \lo {\ca B}$ as $[A]$ and $[\xi]$, respectively. 

\subsection{Construction of $\ca H \lo X$, $\ca H \lo Y$ and $\frak M \lo X$}
\label{sec-const}

Let $(X\lo 1, Y\lo 1), \ldots, (X\lo n, Y\lo n)$ be i.i.d. observations of $(X, Y)$. 
In practice, instead of observing the whole trajectory of $X\lo i$, we have observations at only a finite subset of $T$. Let $\{t\lo {i1}, \ldots, t \lo {im\lo i} \}$ be the set of time points at which $X\lo i$ is observed, which may vary from subject to subject. Let
$$
V = \cup\lo {i = 1}\hi n \{t\lo {i1}, \ldots, t \lo {im\lo i}\}.
$$
Let $m$ be the cardinality of $V$, and $\nu\lo 1, \ldots, \nu\lo m$ the (relabeled) members of $V$. Note that $m$ may or may not be $\sum\lo {i = 1}\hi n m\lo i$. 
Let $J\lo i$ be the index set of $\nu\lo k$ at which $X\lo i$ is observed; that is,
$$
J\lo i = \{k: \nu\lo k = t\lo{il~}\text{for some~} l = 1, \ldots, m\lo i \}.
$$
Let $T\lo i = \{\nu\lo k: k \in J\lo i\}$.

Recall that $\ka\lo T$ is a positive definite kernel defined on $T \times T$. 
Let $K\lo T$ be the $m \times m$ Gram matrix whose $(k, l)$th entry is $\ka \lo T (\nu\lo k, \nu\lo l)$. 
Let $\ca H \lo X$ be the RKHS generated by $\{\ka\lo T(\cdot, \nu\lo k): k = 1, \ldots, m\}$.
Then, using the coordinate representation in Section \ref{sec-coor}, 
the inner product between any $f, g \in \ca H \lo X$ can be expressed as
$$
\langle f, g \rangle \lo {\ca H \lo X} = [f]\trans K\lo T [g]. 
$$
As $X\lo i$ is observed at $m\lo i$ distinct time points, we only use functions in $\{\ka(\cdot, \nu\lo k): k \in J \lo i\}$
to represent it; that is,
$$
X\lo i = \sum\lo {k = 1}\hi m [X\lo i]\lo k ~\ka\lo T (\cdot, \nu\lo k) = \sum\lo {k \in J\lo i} [X\lo i] \lo k  ~\ka\lo T (\cdot, \nu\lo k). 
$$
This amounts to setting $[X\lo i]\lo k = 0$ if $k \not\in J\lo i$.
To approximate $X\lo i$ by the observed points $\{X\lo i(\nu\lo l): l \in J\lo i\}$, it suffices to estimate $\{[X\lo i]\lo k: k \in J\lo i\}$.
Let $[X\lo i]\hi 0 = \{([X\lo i]\hi 0)\lo k: k \in J\lo i\}$ denote this the $m\lo i$-dimensional vector, and let $K\lo T \hi {(i, j)}$ denote the $m\lo i \times m\lo j$ sub-matrix with entries $\{\ka\lo T(\nu\lo k, \nu \lo l): k \in J\lo i, l \in J\lo j \}$. Let $X\lo i (T\lo i) = \{X\lo i(\nu\lo k): k \in J\lo i\}$ denote the column vector of dimension $m\lo i$ consisting of the observed points of $X\lo i$. Then 
$X\lo i(T\lo i) = K\lo T \hi {(i, i)} [X\lo i] \hi 0$.  To enhance smoothness when recovering the trajectory of $X\lo i$, we impose the Tikhonov regularization and solve the following equation, which gives
$$
[X\lo i]\hi 0 = \left(K\lo T \hi {(i, i)} + \epsilon \lo T \hi {(X)} I \lo {m\lo i} \right)\inv X\lo i(T\lo i),
$$
where $\epsilon \lo T \hi {(X)}  > 0$ is a tuning parameter. 
Then $X\lo i$ is estimated by $\sum\lo{k \in J \lo i} [X \lo i] \lo k \ka\lo T(\cdot, \nu\lo k)$, where $[X \lo i] \lo k = [X \lo i] \lo k \hi 0$ for $k \in J \lo i$ and $[X \lo i] \lo k = 0$ for $k \notin J \lo i$. For convenience, we still use $X\lo i$ to denote the recovered trajectory of $X\lo i$.
It follows that the inner product between the recovered trajectories is
\begin{align}
	\nonumber
	\langle X\lo i, X\lo j \rangle \lo {\ca H \lo X} \ali  = ([X\lo i]\hi 0 )\trans K\lo T \hi {(i, j)} ([X\lo j]\hi 0 ) \\
	\ali = X\lo i \trans (T\lo i) (K\lo T \hi {(i, i)} + \epsilon \lo T \hi {(X)} I \lo {m\lo i})\inv K\lo T \hi {(i, j)} (K\lo T \hi {(j, j)} + \epsilon \lo T \hi {(X)} I \lo {m\lo j})\inv X\lo j (T\lo j). 
	\label{eq-innX}
\end{align}
The construction of $\ca H \lo Y$ is similar. Let $\mu\lo Y$ denote the Riesz representation of the linear functional $f \mapsto \E\lo n(\langle Y, f\rangle \lo {\ca H \lo Y})$ for $f \in \ca H \lo Y$, where $\E\lo n$ denotes the expectation based on the empirical distribution of $Y$. Obviously, $\mu\lo Y = n \inv \sum\lo {i = 1}\hi n Y\lo i$.

Next, we construct $\frak M \lo X$, which is an RKHS generated by a positive definite kernel on $\ca H \lo X$.
As mentioned in Section \ref{sec-nested}, $\ka(\cdot, \cdot)$ is uniquely determined by the function $\rho: \real\hi 3 \rightarrow \R$ and the inner product in $\ca H \lo X$: for any $u, v \in \ca H \lo X$, 
\begin{align*}
	\ka(u, v)  = \rho \left( \langle u ,  u\rangle \lo {\ca H \lo X}, \langle u,  v\rangle \lo {\ca H \lo X}, \langle v ,  v\rangle \lo {\ca H \lo X} \right),
\end{align*}
where $\langle \cdot, \cdot \rangle \lo {\ca H \lo X}$ is calculated according to \eqref{eq-innX}. In the following implementation, $\ka$ is taken as the Gaussian radial basis function (GRB). The RKHS $\frak M \lo X$ is spanned by $\{\ka (\cdot, X\lo i): i = 1, \ldots, n\}$ with inner product
$$
\langle f, g\rangle \lo {\frak M \lo X} = [f] \trans K\lo X [g], 
$$ for any $f, g \in \frak M \lo X$,
where $K\lo X$ is the $n \times n$ Gram matrix whose $(i, j)$th entry is $\ka  (X\lo i, X\lo j)$.

\subsection{Model fitting}
As indicated in Proposition \ref{prop-condE}, to estimate $\E(Y | X)$, we need to estimate the regression operator $B\lo 0$. Having constructed $\frak M \lo X$, we define $\mu\lo X$ as the Riesz representation of the linear functional $f \mapsto \E\lo n[f(X)]$. 
By Proposition 2 of \cite{li2017}, $\mu\lo X$ is the function $n\inv \sum\lo{i = 1}\hi n\ka(\cdot, X\lo i)$ in $\frak M \lo X$, and $\frak M \lo X \hi 0$ is spanned by $\{\ka(\cdot, X\lo i) - \mu\lo X: i = 1, \ldots, n\}$.

We estimate $\E(Y | X = x)$ by mimicking Equation \eqref{eq-condE} at the sample level. To do so, we next derive the coordinates of relevant operators therein. Here, we omit the associated bases from the notation of coordinate representation as they are obvious from the context. Let $Q = I\lo n - 1\lo n1\lo n \trans/n $, where $1\lo n$ denotes the column vector of length $n$ with each component being 1.
Let $G\lo X = Q K\lo X Q$. Then, by Proposition 3 of \cite{li2017}, 
\begin{align*}
	[\Sigma \lo {XX}] = n \inv G\lo X, \quad \quad [\Sigma \lo {YX}] = n\inv G\lo X,\quad\quad [\Sigma \lo {XX} \mpinv] = n G\lo X\mpinv, 
\end{align*}
with respect to the spanning system $\{\ka(\cdot, X\lo i) - \mu\lo X: i = 1, \ldots, n\}$.
Let $h\lo Y = (Y\lo 1, \ldots, Y\lo n) \trans$. 
Then, given $x \in \ca H \lo X$, 
\begin{align*}
	B\lo 0 \hi * \{\ka (\cdot, x) - \mu\lo X\} & = h\lo Y \trans Q\{[B \lo 0 \hi *] [\ka(\cdot, x) - \mu \lo X]\} \\
	& = h\lo Y \trans Q\left\{[\Sigma \lo {YX}] [\Sigma \lo {XX} \mpinv][\ka(\cdot, x) - \mu \lo X]\right\} \\
	& =  h\lo Y \trans  \{G\lo X G \lo X \mpinv [\ka(\cdot, x) - \mu\lo X]\},
\end{align*}
where the last equality holds because $G \lo X = QK\lo X Q$ and $Q \hi 2 = Q$.
We estimate the Moore-Penrose inverse of $G\lo X$ by the Tikhonov-regularized inverse $(G\lo X + \epsilon\lo X I\lo n)\inv$ to prevent overfitting, where $\epsilon\lo X > 0$ is a tuning constant. It remains to figure out the coordinate of $\ka(\cdot, x) - \mu\lo X$ with respect to the spanning system $\{\ka(\cdot, X\lo i) - \mu \lo X: i = 1, \ldots, n\}$. 
Suppose that $x$ is observed at time points ${t \lo {1}, \ldots, t \lo {m(x)}}$. 
To find the coordinate of $\ka(\cdot, x) - \mu \lo X$,  we first express $x$ as
$$
x = \sum\lo {l = 1}\hi {m(x)} [x]\hi 0 \lo l \ka \lo T (\cdot, t\lo l),
$$
where $[x]\hi 0 = \left(K\lo T (x) + \epsilon\lo T\hi {(x)}I \lo {m(x)} \right) \inv x(T(x))$ with 
$T(x) = \{t\lo 1, \ldots, t\lo{m(x)}\}$, 
$$
x(T(x)) = (x(t\lo 1), \ldots, x(t\lo{m(x)}) )\trans \quad\text{and}\quad  K\lo T (x) = \{\ka\lo T(t\lo i, t\lo j)\}\lo {i, j = 1}\hi {m(x)}. 
$$

Having found the coordinate of $x$, we next identify the coordinate of 
$\ka(\cdot, x) - \mu \lo X$.
Suppose that $[\ka(\cdot, x) - \mu \lo X] = c \lo x$ for some $c \lo x \in \R \hi n$.
Then 
$$
\langle \ka(\cdot, x) - \mu \lo X, \ka(\cdot, X \lo i)  \rangle \lo {\frak M \lo X} = e \lo i \trans K \lo X c \lo x ~ - \frac{1}{n}(e\lo i \trans K \lo X 1 \lo n)(1 \lo n \trans c \lo x) =  e \lo i \trans K Q c\lo x,
$$
where $e\lo i$ denotes the  vector whose $i$th component is 1 and all others are 0. Taking $i = 1, \ldots, n$, we have
$d \lo x = K \lo X Q c \lo x$, where $d \lo x$ is a vector of length $n$ with $i$th component $\ka(X \lo i, x) - \E\lo n \ka(X \lo i, X)$. With the Tikhonov regularization, we obtain the solution $c \lo x = Q(K \lo X + \epsilon \lo X I \lo n)\inv d \lo x$. Lastly, by \eqref{eq-condE}, the predicted value of $y$ is 
\begin{equation}
	\hat{y}(x) = h\lo Y \trans  G \lo X (G \lo X + \epsilon\lo X I\lo n)\inv c \lo x + \frac{1}{n}h\lo Y \trans 1\lo n .
	\label{eq-pred}
\end{equation}

\subsection{Tuning parameter selection} \label{sec:tuning}
This section is concerned with tuning parameters. We have constructed three RKHS's: $\ca H\lo X, \ca H \lo Y$ and $\frak M \lo X$. If we use the GRB as the kernels, then we have tuning parameters: $(\epsilon\lo T\hi{(X)}, \gamma \lo T  \hi {(X)})$,  $(\epsilon\lo T\hi{(Y)}, \gamma \lo T \hi {(Y)})$, and $(\epsilon\lo X, \gamma\lo X)$ for $\ca H \lo X$, $\ca H \lo Y$, and $\frak M \lo X$, respectively. 

Since constructions of $\ca H \lo X$ and $\ca H \lo Y$ are essentially the same, we only illustrate the choice of
$(\epsilon\lo T\hi{(X)}, \gamma\lo T\hi{(X)})$. From the construction for $\ca H \lo X$ in Section \ref{sec-const}, we see that the predicted value of $X\lo i$ at any $t \in T$ is given by
$$
\hat{X}\lo i(t; \gamma\lo T \hi{(X)}, \epsilon\lo T\hi{(X)}) = [X\lo i(T\lo i)]\trans \left\{K\lo T \hi {(i, i)}(\gamma \lo T\hi{(X)}) + \epsilon \lo T \hi {(X)} I \lo {m\lo i} \right\}\inv \ka\lo T(t, T\lo i; \gamma \lo T\hi{(X)}), 
$$
where $\ka\lo T(t, T\lo i; \gamma\lo T\hi{(X)}) = (\ka\lo T(t, t\lo{i1}), \ldots, \ka\lo T(t, t\lo {im\lo i}) )\trans$. Since the function $\hat{X}\lo i$ can be viewed as a linear smoother from the perspective of nonparametric smoothing, we suggest using the generalized cross validation (GCV) to choose the optimal $(\epsilon\lo T\hi{(X)}, \gamma\lo T\hi{(X)})$. Specifically, let
$$
\text{GCV}(\epsilon\lo T\hi{(X)}, \gamma\lo T\hi{(X)}) := \sum\lo {i = 1}\hi n \frac{m\lo i\inv \sum\lo{j = 1}\hi{m\lo i} [X\lo i(t\lo {ij}) - \hat{X}\lo i(t\lo {ij}; \gamma\lo T\hi{(X)}, \epsilon\lo T\hi{(X)}) ]\hi 2} {\{1 - \tr[S\lo i(\epsilon\lo T\hi{(X)},\gamma\lo T\hi{(X)})]/m\lo i\}\hi 2},
$$
where $S\lo i(\epsilon\lo T\hi{(X)},\gamma\lo T\hi{(X)}) = K\lo T \hi {(i, i)} (\gamma\lo T\hi{(X)}) \left\{K\lo T \hi {(i, i)}(\gamma\lo T\hi{(X)}) + \epsilon \lo T \hi {(X)} I \lo {m\lo i} \right\} \inv$ is the 
smoother matrix for $X\lo i, i = 1, \ldots, n$. The optimal $(\epsilon\lo T\hi{(X)}, \gamma\lo T\hi{(X)})$ is chosen by minimizing the GCV score over a grid of $(\epsilon\lo T\hi{(X)}, \gamma\lo T \hi {(X)})$.

Similarly, we also choose the tuning parameters $(\epsilon\lo X, \gamma\lo X)$ by GCV. By \eqref{eq-pred}, the fitted value of $Y\lo i$ at $X\lo i$ is $$
\hat{Y}(X \lo i) = [Q\lo i\hi\trans G\lo X (G\lo X + \epsilon\lo X I\lo n) \inv  + 1\lo n\hi\trans/n] h\lo Y,
$$
where $Q\lo i = Qe \lo i$ is the $i$th column of the projection matrix $Q$. Therefore, the GCV score in this case is defined as
$$
\text{GCV}(\epsilon\lo X, \gamma\lo X) = \frac{1}{n}\sum\lo {i=1}\hi n \frac{\|Y\lo i - \hat{Y}\lo i\|\lo{\ca H \lo Y}\hi 2}{\{1 - \tr[QG\lo X(G\lo X + \epsilon\lo X I\lo n) \inv + 1\lo n1\lo n\hi\trans/n]/n\}\hi 2}.
$$
The optimal $(\epsilon\lo X, \gamma\lo X)$ is chosen by minimizing GCV over a grid of $(\epsilon\lo X, \gamma\lo X)$.

\section{Convergence rates}\label{section:convergence rate}

In this section we develop the convergence rates of our nonparametric regression. In particular, we are interested in  the the following two rates:
\begin{enumerate}
	\item the convergence rate of the estimated regression operator $\hat B$;
	\item the convergence rate of the regression estimate $\widehat \E (Y | x \lo 0 )$ at a new predictor and at  any time point $t$.
\end{enumerate}
We will also derive the optimal tuning parameter $\epsilon \lo n = \epsilon \lo X$ that makes these rates the fastest.

\subsection{Some preliminary  lemmas}

Let $\hat \Sigma \lo {XX}$ and $\hat \Sigma \lo {XY}$ be the estimates of $\Sigma \lo {XX}$ and $\Sigma \lo {XY}$ as defined in Section \ref{sec-estimation}. We first introduce  some notations about linear operators. Let $\ca G \lo 1$ and $\ca G \lo 2$ be two generic separable Hilbert spaces and $A: \ca G \lo 1 \to \ca G \lo 2$ a linear operator. Then $A$ is a Hilbert-Schmidt operator if $\sum \lo {i \in \nat} \, \| A  e \lo i \| \lo {\ca G \lo 2} \hi 2 < \infty$, where $\nat = \{1, 2, 3, \ldots\}$, and $\{ e \lo i : i \in \nat \}$ is any orthonormal basis (ONB) of $\ca G \lo 1$. The square root of this finite number  is the Hilbert-Schmidt norm,  and is denoted by $\| A \| \loo {HS}$. We will use $\| \cdot \| \loo {OP}$ to denote the operator norm. Given two arbitrary positive sequences $\{a \lo n: n \in \nat\}$ and $\{ b \lo n: n \in \nat \}$, we write $a \lo n \prec b \lo n$ if $a \lo n / b \lo n \to 0$, write $ a\lo n \succ b\lo n$ if $b \lo n \prec a \lo n$, write $a \lo n \preceq b \lo n$ if $a \lo n / b\lo n$ is a bounded sequence and write $a \lo n \asymp b \lo n$ if $a \lo n \preceq b \lo n$ and $b \lo n \preceq a \lo n$. For two real numbers $a$ and $b$, we use $a \wedge b$ to represent the minimum of $a$ and $b$. We make the following assumption. 

\setcounter{assumption}{3}
\begin{assumption}\label{assumption:finte moments and beta}  \ \
	\begin{enumerate}
		\item[(i)]  $\E[ \ka (X, X) ] < \infty$, $\E ( \| Y \| \lo {\ca H \lo Y} \hi 2) < \infty$;
		\item[(ii)] there is a $\beta > 0$ such that $\Sigma \lo {XY} = \Sigma \lo {XX} \hi {1+\beta} S \lo {XY}$ for some bounded linear operator $S \lo {XY}: \ca H \lo Y \to \frak M \lo X$.
	\end{enumerate}
\end{assumption}

It can be shown that, under the assumption $\E [\ka (X, X)] < \infty$, $\Sigma \lo {XX}$ is a trace-class operator.
As argued in \cite{li2017} and \cite{li2018}, Assumption \ref{assumption:finte moments and beta}(ii)  represents a degree of smoothness in the relation between $X$ and $Y$. It requires the output functions of $B \lo 0$ to be sufficiently concentrated on the low-frequency components of $\Sigma \lo {XX}$.
Indeed, if $\{(\lambda \lo j, \varphi \lo j ): j \in \nat \}$ is the eigenvalue-eigenfunction sequence of $\Sigma \lo {XX}$ with $\lambda \lo 1 \ge \lambda \lo 2 \ge \cdots$, then $\Sigma \lo {XY} = \Sigma \lo {XX} \hi {1+\beta} S \lo {XY}$ implies that, for any $g \in \ca H \lo Y$,
\begin{align}\label{eq:equivalent condtion for beta}
	\tsum \lo {j \in \nat} \, \lambda \lo j \hi {-2\beta}\, { \langle B \lo 0 g, \varphi \lo j \rangle \lo {\frak M \lo X} \hi 2}< \infty.
\end{align}
The following lemma gives the convergence rates of $\hat \Sigma \lo {XX}$ and $\hat \Sigma \lo {XY}$, whose proof is similar to that of Lemma 5 of \cite{fukumizu2007} and is omitted.

\begin{lemma}\label{lemma:fuku et al} Under Assumption \ref{assumption:finte moments and beta}(i),   $\Sigma \lo {XX}$ and $\Sigma \lo {XY}$ are Hilbert-Schmidt operators and
	\begin{align*}
		\| \hat \Sigma \lo {XX} - \Sigma \lo {XX} \| \loo {HS} = O \lo P ( n \hi {-1/2}), \quad \| \hat \Sigma \lo {XY} - \Sigma \lo {XY} \| \loo {HS} = O \lo P ( n \hi {-1/2}).
	\end{align*}
\end{lemma}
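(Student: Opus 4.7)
The plan is to establish both claims of Lemma \ref{lemma:fuku et al} in three steps.

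First, I would verify that $\Sigma \lo {XX}$ and $\Sigma \lo {XY}$ are Hilbert--Schmidt. For a rank-one operator $u \otimes v$ between separable Hilbert spaces, $\| u \otimes v \| \loo {HS} = \| u \| \, \| v \|$, so Jensen's inequality passes this norm through the Bochner expectation:
\[
\| \Sigma \lo {XX} \| \loo {HS} \leq \E \| \ka ( \cdot, X ) - \mu \lo X \| \lo { \frak M \lo X } \hi 2 \leq \E [ \ka (X, X) ],
\]
via the reproducing identity $\| \ka ( \cdot, x ) \| \lo { \frak M \lo X } \hi 2 = \ka (x, x)$, and Cauchy--Schwarz gives
\[
\| \Sigma \lo {XY} \| \loo {HS} \leq \left( \E [ \ka (X, X) ] \right) \hi {1/2} \left( \E \| Y \| \lo {\ca H \lo Y } \hi 2 \right) \hi {1/2}.
\]
Both bounds are finite under Assumption \ref{assumption:finte moments and beta}(i).

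Second, for the rate of $\hat \Sigma \lo {XX}$, I would decompose
\[
\hat \Sigma \lo {XX} - \Sigma \lo {XX} = A \lo n - B \lo n ,
\]
where
\[
A \lo n = n \inv \tsum \lo {i = 1} \hi n \ka ( \cdot, X \lo i ) \otimes \ka ( \cdot, X \lo i ) - \E [ \ka ( \cdot, X ) \otimes \ka ( \cdot, X ) ] , \qquad B \lo n = \hat \mu \lo X \otimes \hat \mu \lo X - \mu \lo X \otimes \mu \lo X .
\]
$A \lo n$ is a centered i.i.d.\ average in the Hilbert space of HS operators, so Chebyshev's inequality gives $\E \| A \lo n \| \loo {HS} \hi 2 \leq n \inv \var ( \ka ( \cdot, X ) \otimes \ka ( \cdot, X ) )$, hence $\| A \lo n \| \loo {HS} = O \lo P ( n \hi {-1/2} )$. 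For $B \lo n$ I would use the identity
\[
B \lo n = ( \hat \mu \lo X - \mu \lo X ) \otimes \hat \mu \lo X + \mu \lo X \otimes ( \hat \mu \lo X - \mu \lo X ),
\]
combined with $\| \hat \mu \lo X - \mu \lo X \| \lo { \frak M \lo X } = O \lo P ( n \hi {-1/2} )$ (Chebyshev on the i.i.d.\ Hilbert-space-valued average $\hat \mu \lo X = n \inv \tsum \lo {i = 1} \hi n \ka ( \cdot, X \lo i )$, whose summands have finite second moment under Assumption \ref{assumption:finte moments and beta}(i)) and tightness of $\| \hat \mu \lo X \| \lo { \frak M \lo X }$, to conclude $\| B \lo n \| \loo {HS} = O \lo P ( n \hi {-1/2} )$.

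Third, for $\hat \Sigma \lo {XY}$ the analogous decomposition applies with $\ka ( \cdot, X \lo i )$ replaced by $Y \lo i$ in the second slot of each tensor product. The cross-moment controlling the Chebyshev step is $\E \| \ka ( \cdot, X ) \otimes Y \| \loo {HS} \hi 2 = \E [ \ka (X, X) \, \| Y \| \lo {\ca H \lo Y } \hi 2 ]$, which is handled by Cauchy--Schwarz together with Assumption \ref{assumption:finte moments and beta}(i); the remaining algebra is verbatim. The main obstacle is that the Chebyshev step on $A \lo n$ requires $\E [ \ka (X, X) \hi 2 ] < \infty$, which is strictly stronger than Assumption \ref{assumption:finte moments and beta}(i). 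For the Gaussian radial basis kernel used throughout the paper this is automatic since $\ka (x, x) \equiv 1$; in the general case one invokes boundedness of $\ka$ or a moment strengthening. Once past that moment bound, the remaining ingredients---tensor-product calculus, Jensen, Cauchy--Schwarz, and the standard $n \hi {-1/2}$ bound for i.i.d.\ averages in a separable Hilbert space---are routine.
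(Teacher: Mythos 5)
Your overall route is the one the paper intends: the paper omits its own argument and points to Lemma 5 of \cite{fukumizu2007}, whose proof is precisely your scheme --- write each empirical covariance as an uncentered i.i.d.\ average of rank-one operators plus mean-correction terms, then apply a second-moment (Chebyshev) bound in the Hilbert space of Hilbert--Schmidt operators; your first step establishing that $\Sigma_{XX}$ and $\Sigma_{XY}$ are Hilbert--Schmidt under Assumption \ref{assumption:finte moments and beta}(i) is also correct.

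The gap is in your third step. You claim the cross-covariance second moment $\E[\ka(X,X)\,\|Y\|_{\ca H_Y}^2]$ ``is handled by Cauchy--Schwarz together with Assumption \ref{assumption:finte moments and beta}(i)''. It is not: Cauchy--Schwarz bounds it by $(\E[\ka(X,X)^2])^{1/2}(\E\|Y\|_{\ca H_Y}^4)^{1/2}$, and Assumption \ref{assumption:finte moments and beta}(i) provides only $\E[\ka(X,X)]<\infty$ and $\E\|Y\|_{\ca H_Y}^2<\infty$, which control neither factor. So exactly the moment strengthening you flagged for the $A_n$ term of $\hat\Sigma_{XX}$ is needed here as well --- in fact more of it: you need either the joint condition $\E[\ka(X,X)\|Y\|_{\ca H_Y}^2]<\infty$ (this is the form of hypothesis used in \cite{fukumizu2007}), or $\E[\ka(X,X)^2]<\infty$ together with $\E\|Y\|_{\ca H_Y}^4<\infty$, or boundedness of $\ka$, in which case $\E[\ka(X,X)\|Y\|_{\ca H_Y}^2]\le M\,\E\|Y\|_{\ca H_Y}^2$ and Assumption \ref{assumption:finte moments and beta}(i) does suffice. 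Your diagnosis of the $\hat\Sigma_{XX}$ case is legitimate: under Assumption \ref{assumption:finte moments and beta}(i) alone the summands $\ka(\cdot,X_i)\otimes\ka(\cdot,X_i)$ need not have finite second HS-moment, and the $n^{-1/2}$ rate can genuinely fail (already in the scalar linear-kernel case with $\E X^2<\infty$ but $\E X^4=\infty$, where $\hat\Sigma_{XX}$ is the sample variance). In this paper's setting the issue is resolved because $\ka$ is taken to be the Gaussian radial basis kernel and boundedness of $\ka$ is assumed explicitly in Theorem \ref{theorem:clt}; the clean fix is to invoke that single hypothesis (bounded $\ka$, or the joint moment condition) once and use it uniformly for both the $XX$ and $XY$ terms, rather than asserting that Cauchy--Schwarz plus Assumption \ref{assumption:finte moments and beta}(i) closes the $XY$ case.
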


Let $\hat B= ( \hat \Sigma \lo {XX} + \epsilon \lo n I ) \inv \hat \Sigma \lo {XY}$ denote the sample estimator of $B \lo 0$ in Section \ref{sec-estimation}, where we have used $\epsilon \lo n$ to replace $\epsilon \lo X$ to highlight the dependence on the sample size $n$. Under Assumption \ref{assumption:finte moments and beta}, the best convergence rate of $\hat B$ to $B \lo 0$ developed by  \cite{li2017} is $n \hi {-\beta/[2 (\beta +1)]}$. If $\beta=1$, this rate reaches its fastest possible level $ n \hi {-1/4}$. In the next subsection we will show that, in our regression setting and with an additional assumption on $\Sigma \lo {XX}$, the convergence rate of $\hat B$ can approach  $n \hi {-1/3}$.

Let $U = Y - \E(Y|X)$ be the population-level residual, which is a random element in $\ca H \lo Y$. Let $\Sigma \lo {XU} = \E[ ( \ka (\cdot, X) - \mu \lo X)\otimes U ]$. Let $\hat \mu \lo U$ and $\hat \Sigma \lo {XU}$ be the sample estimates of $\mu \lo U$ and $\Sigma \lo {XU}$ defined by
\begin{align*}
	\hat \mu \lo U = \E \lo n (U), \quad \hat \Sigma \lo {XU} = \E \lo n [ (\ka (\cdot, X) - \hat{\mu} \lo X) \otimes (U - \hat \mu \lo U)].
\end{align*}

\begin{lemma}\label{lemma:reg and res} Under Assumption \ref{assumption:finte moments and beta}(i),
	\begin{enumerate}
		\item[{\em (1.)}] $\Sigma \lo {XU} = 0$;
		\item[{\em (2.)}] $\hat \Sigma \lo {XY} = \hat \Sigma \lo {XU} + \hat \Sigma \lo {XX} B \lo 0$.
	\end{enumerate}
\end{lemma}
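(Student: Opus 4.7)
The proof breaks naturally into the two parts of the lemma. Part (1) is a standard tower-property argument. I would work coordinate-wise: for arbitrary $f \in \frak M \lo X$ and $g \in \ca H \lo Y$,
\begin{align*}
\langle f, \Sigma \lo {XU} g\rangle \lo {\frak M \lo X}
&= \E\bigl[(f(X)-\E f(X))\,\langle g, U\rangle \lo {\ca H \lo Y}\bigr]\\
&= \E\bigl[(f(X)-\E f(X))\,\E[\langle g, U\rangle \lo {\ca H \lo Y}\mid X]\bigr],
\end{align*}
and then invoke the defining property of the Bochner conditional expectation, $\E[\langle g, U\rangle \lo {\ca H \lo Y}\mid X]=\langle g, \E(U\mid X)\rangle \lo {\ca H \lo Y}$ almost surely, which vanishes because $\E(U\mid X)=\E(Y\mid X)-\E(Y\mid X)=0$. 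Since this holds for all $f,g$, the operator is zero. The only subtlety worth flagging is the measurability/integrability needed to pull $\langle g,\cdot\rangle$ inside the conditional expectation; this is standard for the Bochner-integrable $Y\in \ca H \lo Y$ we get from Assumption \ref{assumption:finte moments and beta}(i).

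For part (2) the plan is to substitute the population decomposition of $Y$ from Proposition \ref{prop-condE} into the empirical cross-covariance and re-center. Writing $Y = B \lo 0 \hi *(\ka(\cdot,X)-\mu \lo X) + \mu \lo Y + U$ and averaging over the sample gives $\hat\mu \lo Y = B \lo 0 \hi *(\hat\mu \lo X-\mu \lo X)+\mu \lo Y+\hat\mu \lo U$, so after subtraction the additive term $\mu \lo Y$ and the shift by $\mu \lo X$ cancel and we obtain the clean identity
\begin{align*}
Y-\hat\mu \lo Y = B \lo 0 \hi *[\ka(\cdot,X)-\hat\mu \lo X] + (U-\hat\mu \lo U).
\end{align*}
Plug this into $\hat \Sigma \lo {XY}=\E \lo n[(\ka(\cdot,X)-\hat\mu \lo X)\otimes (Y-\hat\mu \lo Y)]$. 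The $(U-\hat\mu \lo U)$-part is exactly $\hat \Sigma \lo {XU}$. For the remaining piece, I would verify bilinearly that
\begin{align*}
\E \lo n\bigl[(\ka(\cdot,X)-\hat\mu \lo X)\otimes B \lo 0 \hi *(\ka(\cdot,X)-\hat\mu \lo X)\bigr]=\hat\Sigma \lo {XX}B \lo 0,
\end{align*}
by testing against $f\in \frak M \lo X$ and $g\in \ca H \lo Y$, using $\langle g, B \lo 0 \hi *\,\cdot\rangle \lo {\ca H \lo Y}=\langle B \lo 0 g,\cdot\rangle \lo {\frak M \lo X}$ and the definition of $\hat\Sigma \lo {XX}$. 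Summing the two pieces yields the claim.

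Neither step is really an obstacle; if anything, the main care point is bookkeeping in part (2) to make sure the centering by $\hat\mu \lo Y$ rather than $\mu \lo Y$ is handled correctly — specifically, confirming that $B \lo 0 \hi *$ commutes with the empirical expectation (it does, being a bounded linear operator), so that $\E \lo n[B \lo 0 \hi *(\ka(\cdot,X)-\mu \lo X)] = B \lo 0 \hi *(\hat\mu \lo X-\mu \lo X)$ and the decomposition simplifies. No regularization or limiting argument enters; the whole statement is an algebraic identity in finite-sample operators together with a single tower-property computation.
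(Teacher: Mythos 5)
Your proposal is correct, and part (2) is essentially the paper's argument: both substitute the representation $Y-\E(Y\mid X)=Y-\mu\lo Y-B\lo 0\hi *(\ka(\cdot,X)-\mu\lo X)$ into the empirical cross-covariance, use the identity $f\lo 1\otimes(B\lo 0\hi * f\lo 2)=(f\lo 1\otimes f\lo 2)B\lo 0$, and exploit that $\E\lo n[\ka(\cdot,X)-\hat\mu\lo X]=0$ so the centering at $\mu\lo X,\mu\lo Y$ versus $\hat\mu\lo X,\hat\mu\lo Y$ is immaterial; you do this by pre-centering through the averaged model equation, the paper by swapping the population means for sample means inside the tensor product, which is the same bookkeeping. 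Part (1) is where you genuinely diverge: you argue directly by conditioning, $\langle f,\Sigma\lo{XU}g\rangle\lo{\frak M\lo X}=\E\{(f(X)-\E f(X))\,\E[\langle g,U\rangle\lo{\ca H\lo Y}\mid X]\}=0$, since $\E[\langle g,U\rangle\lo{\ca H\lo Y}\mid X]=0$ follows from the paper's very definition of $\E(Y\mid X)$ as the Riesz representation of $T\lo X$ (no extra Bochner machinery is really needed). The paper instead stays inside the operator algebra: it writes $\Sigma\lo{XU}=\Sigma\lo{XY}-\Sigma\lo{XX}B\lo 0$ via the same tensor identity and concludes it is zero because $B\lo 0=\Sigma\lo{XX}\mpinv\Sigma\lo{XY}$ and $\Sigma\lo{XX}\Sigma\lo{XX}\mpinv$ is the identity on $\ran(\Sigma\lo{XX})\supseteq\ran(\Sigma\lo{XY})$ (Assumption 3). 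Your route is more elementary and for part (1) does not lean on Assumption 3 at all, only on existence of the conditional mean and the moment bound; the paper's route avoids any conditioning and makes explicit the identity $\Sigma\lo{XY}=\Sigma\lo{XX}B\lo 0$, which is reused elsewhere in the asymptotic analysis. Both are valid under the paper's standing assumptions.
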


Let
$\tilde \Sigma \lo {XU}= E \lo n [ ( \ka (\cdot, X) - \mu \lo X) \otimes U]$, which  is an intermediate operator between $\hat \Sigma \lo {XU}$ and $\Sigma \lo {XU}$.
\begin{lemma}\label{lemma:hat vs tilde} Under Assumption \ref{assumption:finte moments and beta}(i), we have
	\begin{align*}
		\| \hat \Sigma \lo {XU} - \tilde \Sigma \lo {XU} \| \loo {HS} = O \lo P(n \inv).
	\end{align*}
\end{lemma}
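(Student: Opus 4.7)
The plan is to expand both operators by bilinearity of the tensor product, observe that all cross-terms collapse into a single rank-one correction, and then bound each factor of that correction at rate $n\hi{-1/2}$ using the Hilbert-space central limit theorem.

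The key preliminary observation is that $\mu\lo U := \E(U) = \E[Y] - \E[\E(Y | X)] = 0$, so $\hat \mu \lo U = \E\lo n(U)$ is already a centred empirical mean. Expanding $\hat \Sigma \lo {XU}$ via bilinearity and using the identities $\E\lo n[\ka(\cdot, X)] = \hat \mu \lo X$ and $\E\lo n[U] = \hat \mu \lo U$ gives
\begin{align*}
\hat \Sigma \lo {XU} &= \E\lo n[\ka(\cdot, X) \otimes U] - \hat \mu \lo X \otimes \hat \mu \lo U - \hat \mu \lo X \otimes \hat \mu \lo U + \hat \mu \lo X \otimes \hat \mu \lo U \\
&= \E\lo n[\ka(\cdot, X) \otimes U] - \hat \mu \lo X \otimes \hat \mu \lo U,
\end{align*}
whereas $\tilde \Sigma \lo {XU} = \E\lo n[\ka(\cdot, X) \otimes U] - \mu \lo X \otimes \hat \mu \lo U$, since $\mu \lo X$ is deterministic. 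Subtracting produces the rank-one identity
$$
\hat \Sigma \lo {XU} - \tilde \Sigma \lo {XU} = (\mu \lo X - \hat \mu \lo X) \otimes \hat \mu \lo U.
$$

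Since $\|u \otimes v\|\loo{HS} = \|u\|\lo{\frak M \lo X}\|v\|\lo{\ca H \lo Y}$ for a rank-one operator, it remains to show that both factors are $O\lo P(n\hi{-1/2})$. For the first factor, Assumption \ref{assumption:finte moments and beta}(i) gives $\E\|\ka(\cdot, X)\|\lo{\frak M \lo X}\hi 2 = \E[\ka(X, X)] < \infty$, and a one-line second-moment bound yields $\|\hat \mu \lo X - \mu \lo X\|\lo{\frak M \lo X} = O\lo P(n\hi{-1/2})$. For the second, $\E\|U\|\lo{\ca H \lo Y}\hi 2 \le 4\,\E\|Y\|\lo{\ca H \lo Y}\hi 2 < \infty$ together with $\mu \lo U = 0$ yields $\|\hat \mu \lo U\|\lo{\ca H \lo Y} = \|\hat \mu \lo U - \mu \lo U\|\lo{\ca H \lo Y} = O\lo P(n\hi{-1/2})$. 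Multiplying the two rates delivers the claimed $O\lo P(n\inv)$ bound.

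There is essentially no obstacle here; the one substantive point---and the reason the rate sharpens from the $n\hi{-1/2}$ of Lemma \ref{lemma:fuku et al} to $n\inv$---is the vanishing of $\mu \lo U$, which converts what would otherwise be an $O\lo P(1)$ quantity $\hat \mu \lo U$ into an $O\lo P(n\hi{-1/2})$ fluctuation around zero. Once that identity is recorded, the remainder is routine tensor algebra and standard Hilbert-space second-moment estimates.
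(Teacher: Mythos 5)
Your proof is correct and follows essentially the same route as the paper: derive the rank-one identity $\hat\Sigma\lo{XU}-\tilde\Sigma\lo{XU}=(\hat\mu\lo X-\mu\lo X)\otimes\hat\mu\lo U$ (your version differs only by an immaterial sign), use $\|u\otimes v\|\loo{HS}=\|u\|\lo{\frak M\lo X}\|v\|\lo{\ca H\lo Y}$, and bound each factor at $O\lo P(n\hi{-1/2})$ by Chebyshev/second moments. Your explicit remarks that $\mu\lo U=0$ and that $\E\|U\|\lo{\ca H\lo Y}\hi 2<\infty$ simply spell out details the paper leaves implicit.
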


Since $\Sigma \lo {XX}$ is a trace-class operator under Assumption \ref{assumption:finte moments and beta}(i), we have $\tsum \lo {j \in \nat} \lambda \lo j < \infty$. The next  assumption strengthens this condition. It also strengthens the condition $\Sigma \lo {XU} = 0$. 

\setcounter{assumption}{4}
\begin{assumption}\label{assumption:KL for kappa}  \ \
	\begin{enumerate}
		\item[(i)] $X \indep U$;
		\item[(ii)] $\lambda \lo  j \asymp j \hi {-\alpha}$ for some $\alpha > 1$.
	\end{enumerate}
\end{assumption}
\noindent Part (i) of this assumption would be satisfied if our function-on-function regression model is
\begin{align}\label{eq:RKHS regression}
	Y = f(X) + U,
\end{align}
where $Y$ and $U$ are random elements in $\ca H \lo Y$, $X$ is a random element in $\ca H \lo X$, and $f$ is a (nonlinear) mapping from $\ca H \lo X$ to $\ca H \lo Y$.
Part (ii) of this assumption is about the niceness of the random function $X$: its variation is concentrated on the low-frequency domain of the spectrum of the covariance operator $\Sigma \lo {XX}$. The next lemma reveals how Assumption \ref{assumption:KL for kappa}(ii) interacts with Tychonoff regularization.

\begin{lemma}\label{lemma:calculus} Under Assumption \ref{assumption:KL for kappa}({ii}), if $\epsilon \lo n \prec 1$, then 
	$$
	\tsum \lo {j \in \nat}  \lambda \lo j  ( \lambda \lo j + \epsilon \lo n ) \hi {-2} = O   (\epsilon \lo n \hi {-(\alpha+1)/ \alpha }).
	$$
\end{lemma}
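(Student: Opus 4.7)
The plan is to compare the series with an integral by splitting at the natural threshold where $\lambda_j$ crosses $\epsilon_n$, and exploit the polynomial decay rate $\lambda_j \asymp j^{-\alpha}$. Since $\lambda_j \asymp j^{-\alpha}$, there are constants $0 < c_1 \le c_2 < \infty$ so that $c_1 j^{-\alpha} \le \lambda_j \le c_2 j^{-\alpha}$ for all $j$, and consequently
\[
\sum_{j \in \mathbb{N}} \frac{\lambda_j}{(\lambda_j + \epsilon_n)^2} \asymp \sum_{j \in \mathbb{N}} \frac{j^{-\alpha}}{(j^{-\alpha} + \epsilon_n)^2},
\]
so it suffices to bound the right-hand side. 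Define the cutoff $j_n = \lfloor \epsilon_n^{-1/\alpha} \rfloor$, which is the (integer) index at which $j^{-\alpha}$ and $\epsilon_n$ are of the same order; because $\epsilon_n \prec 1$, $j_n \to \infty$.

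First I would handle the low-frequency block $j \le j_n$. Here $j^{-\alpha} \ge \epsilon_n$, hence $j^{-\alpha} + \epsilon_n \asymp j^{-\alpha}$ and the summand is $\asymp j^\alpha$. Therefore
\[
\sum_{j = 1}^{j_n} \frac{j^{-\alpha}}{(j^{-\alpha} + \epsilon_n)^2} \asymp \sum_{j = 1}^{j_n} j^{\alpha} \asymp \frac{j_n^{\alpha+1}}{\alpha + 1} = O\bigl(\epsilon_n^{-(\alpha+1)/\alpha}\bigr),
\]
using the elementary bound $\sum_{j=1}^{N} j^\alpha \le \int_0^{N+1} x^\alpha\, dx = (N+1)^{\alpha+1}/(\alpha+1)$.

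Next, for the high-frequency tail $j > j_n$, we have $j^{-\alpha} < \epsilon_n$, so $j^{-\alpha} + \epsilon_n \asymp \epsilon_n$ and the summand is $\asymp j^{-\alpha} \epsilon_n^{-2}$. Because $\alpha > 1$, the series $\sum_{j > j_n} j^{-\alpha}$ converges and, by comparison with $\int_{j_n}^{\infty} x^{-\alpha}\, dx = j_n^{1-\alpha}/(\alpha - 1)$, is of order $j_n^{1-\alpha} \asymp \epsilon_n^{(\alpha-1)/\alpha}$. Hence
\[
\sum_{j > j_n} \frac{j^{-\alpha}}{(j^{-\alpha} + \epsilon_n)^2} \asymp \epsilon_n^{-2}\sum_{j > j_n} j^{-\alpha} \asymp \epsilon_n^{-2} \epsilon_n^{(\alpha-1)/\alpha} = \epsilon_n^{-(\alpha+1)/\alpha}.
\]
Adding the two blocks yields the stated rate $O\bigl(\epsilon_n^{-(\alpha+1)/\alpha}\bigr)$.

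There is no real obstacle here; this is an elementary calculation and the only thing to check carefully is that the constants absorbed in $\asymp$ do not depend on $n$, which is immediate because the two-sided bound $c_1 j^{-\alpha} \le \lambda_j \le c_2 j^{-\alpha}$ holds uniformly in $j$ and the threshold $j_n$ is chosen deterministically from $\epsilon_n$. One could alternatively give a single-line argument by the substitution $u = x^{-\alpha}$ followed by $u = \epsilon_n v$ in $\int_1^\infty x^{-\alpha}(x^{-\alpha} + \epsilon_n)^{-2}\, dx$, which produces the prefactor $\epsilon_n^{-(\alpha+1)/\alpha}$ times a convergent Beta-type integral $\int_0^\infty v^{-1/\alpha}(v+1)^{-2}\, dv$ (finite because $\alpha > 1$ keeps both endpoints integrable); I would only include this alternative if space allowed, since the split-sum version is more self-contained.
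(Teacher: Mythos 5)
Your proof is correct and follows essentially the same route as the paper: both split the series at the cutoff $\lfloor \epsilon_n^{-1/\alpha} \rfloor$, bound the low-frequency block by $\sum j^{\alpha}$ and the tail by $\epsilon_n^{-2}\sum j^{-\alpha}$, and compare with the corresponding integrals to get $O(\epsilon_n^{-(\alpha+1)/\alpha})$. The only cosmetic difference is that the paper uses the one-sided bounds $\lambda_j(\lambda_j+\epsilon_n)^{-2}\le \lambda_j^{-1}$ and $\le \epsilon_n^{-2}\lambda_j$ directly rather than first replacing the summand by its $j^{-\alpha}$ analogue, which sidesteps any need to justify a two-sided comparison of the non-monotone map $t\mapsto t/(t+\epsilon_n)^2$.
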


\subsection{Convergence rate for estimated regression operator}

For convenience, we abbreviate $(\hat \Sigma \lo {XX} + \epsilon \lo n I) \inv$, $( \Sigma \lo {XX} + \epsilon \lo n I) \inv$, and $\Sigma \lo {XX} \hi\dagger$ by $\hat V$, $V \lo n$ and $V$, respectively. The following Fourier expansion of $\ka (\cdot, X) - \mu \lo X$ with respect to the eigenfunction orthonormal basis (ONB) $\{\varphi \lo j: j \in \nat \}$ will be useful:
\begin{align}\label{eq:kappa expansion}
	\ka (\cdot, X) - \mu \lo X = \tsum \lo {j \in \nat} \, \langle \ka (\cdot, X) - \mu \lo X, \varphi \lo j \rangle \lo {\frak M \lo X} \,   \varphi \lo j \equiv \tsum \lo {j \in \nat} \, \zeta \lo j  \varphi \lo j,
\end{align}
where $\zeta \lo 1, \zeta \lo 2, \ldots$ are uncorrelated variables with $\E(\zeta \lo j) = 0$ and $\var ( \zeta \lo j) = \lambda \lo j$.

\begin{theorem}\label{theorem:convergence rate} Suppose Assumptions 1 through 3 hold; Assumption 4 holds for some $\beta > 0$; Assumption 5 holds for some $\alpha > 1$; $\epsilon \lo n \prec 1$.
	Then
	\begin{enumerate}
		\item [{\em (1.)}]
		\begin{align}\label{eq:goal 1}
			\| \hat B - B \lo 0 \| \loo {OP} = O \lo P ( n \hi {-1/2} \epsilon \lo n \hi {(\beta \wedge 1) - 1} + \epsilon \lo n \hi {\beta\wedge 1}  +  n\hi {-1} \epsilon \lo n \hi {- (3 \alpha+ 1)/ (2 \alpha)} + n \hi {-1/2} \epsilon \lo  n \hi {-(\alpha + 1)/(2 \alpha)} ).
		\end{align}
		\item [{\em (2.)}]
		If
		$
		\epsilon \lo n \succ \max ( n \hi {- 1/ [2\{1-(\beta \wedge 1)\}]}, n \hi {-2 \alpha / ( 3 \alpha + 1)} ),
		$
		then the right-hand side of (\ref{eq:goal 1}) tends to 0.
	\end{enumerate}
\end{theorem}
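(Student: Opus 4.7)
The plan is to decompose $\hat B - B \lo 0$ into a stochastic residual term and a regularization-bias term, and show that each contributes two of the four summands in \eqref{eq:goal 1}. Using Lemma~\ref{lemma:reg and res}(2), $\hat \Sigma \lo {XY} = \hat \Sigma \lo {XU} + \hat \Sigma \lo {XX} B \lo 0$, and the identity $\hat V \hat \Sigma \lo {XX} = I - \epsilon \lo n \hat V$, I would begin by writing
\begin{align*}
\hat B - B \lo 0 = \hat V \hat \Sigma \lo {XY} - B \lo 0 = \hat V \hat \Sigma \lo {XU} - \epsilon \lo n \hat V B \lo 0,
\end{align*}
and then attack the two summands separately.

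For the stochastic summand, I would first replace $\hat \Sigma \lo {XU}$ by $\tilde \Sigma \lo {XU}$ via Lemma~\ref{lemma:hat vs tilde}, absorbing the remainder $\hat V (\hat \Sigma \lo {XU} - \tilde \Sigma \lo {XU})$ into an $O \lo P(n \inv \epsilon \lo n \inv)$ contribution that is dominated by the third target term. Next I would split $\hat V \tilde \Sigma \lo {XU} = V \lo n \tilde \Sigma \lo {XU} + (\hat V - V \lo n)\tilde \Sigma \lo {XU}$. Writing $\tilde \Sigma \lo {XU} = n \inv \tsum \lo i (\ka(\cdot, X \lo i) - \mu \lo X) \otimes U \lo i$ and invoking $X \indep U$ from Assumption~\ref{assumption:KL for kappa}(i) together with $\E U = 0$ makes all $i \ne j$ cross terms disappear, so that
\begin{align*}
\E \| V \lo n \tilde \Sigma \lo {XU} \| \loo {HS} \hi 2 = n \inv \, \mathrm{tr}(V \lo n \hi 2 \Sigma \lo {XX}) \, \E \| U \| \lo {\ca H \lo Y} \hi 2,
\end{align*}
and Lemma~\ref{lemma:calculus} converts the trace to $O(\epsilon \lo n \hi {-(\alpha+1)/\alpha})$, producing the fourth term $O \lo P(n \hi {-1/2} \epsilon \lo n \hi {-(\alpha+1)/(2\alpha)})$. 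For the cross piece I would use the resolvent identity $\hat V - V \lo n = \hat V (\Sigma \lo {XX} - \hat \Sigma \lo {XX}) V \lo n$, bound by $\|\hat V\| \loo {OP} \le \epsilon \lo n \inv$, Lemma~\ref{lemma:fuku et al}, and the operator-norm bound on $V \lo n \tilde \Sigma \lo {XU}$ just obtained, producing the third term.

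For the bias summand, Assumption~\ref{assumption:finte moments and beta}(ii) lets me write $B \lo 0 = \Sigma \lo {XX} \hi \beta S \lo {XY}$ on the effective domain, and I would split $\epsilon \lo n \hat V B \lo 0 = \epsilon \lo n V \lo n B \lo 0 + \epsilon \lo n (\hat V - V \lo n) B \lo 0$. The elementary spectral inequality $\sup \lo {\lambda \ge 0}\, \epsilon \lo n \lambda \hi \beta /(\lambda + \epsilon \lo n) \le C \epsilon \lo n \hi {\beta \wedge 1}$ (verified separately on $\lambda \le \epsilon \lo n$ and $\lambda > \epsilon \lo n$) immediately yields $\|\epsilon \lo n V \lo n B \lo 0\| \loo {OP} = O(\epsilon \lo n \hi {\beta \wedge 1})$, the second term. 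The same spectral argument gives $\|V \lo n B \lo 0 \| \loo {OP} = O(\epsilon \lo n \hi {(\beta \wedge 1) - 1})$; feeding this through the resolvent identity and Lemma~\ref{lemma:fuku et al} produces $O \lo P(n \hi {-1/2} \epsilon \lo n \hi {(\beta \wedge 1) - 1})$, the first term.

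Part~(2) is then a bookkeeping check that each of the four summands tends to zero under the stated lower bound on $\epsilon \lo n$; note that $\alpha > 1$ gives $2\alpha/(3\alpha+1) < \alpha/(\alpha+1)$, so $\epsilon \lo n \succ n \hi {-2\alpha/(3\alpha+1)}$ automatically controls the fourth term. The main obstacle, in my view, is avoiding the crude estimate $\|V \lo n\| \loo {OP} \le \epsilon \lo n \inv$: applied blindly, every appearance of $V \lo n$ costs a full $\epsilon \lo n \inv$ and inflates the rates well beyond \eqref{eq:goal 1}. The sharper bounds require exploiting the interaction between $V \lo n$ and either $\Sigma \lo {XX} \hi {1/2}$ (through the second-moment/trace bound and Lemma~\ref{lemma:calculus}) or $\Sigma \lo {XX} \hi \beta$ (through the spectral inequality), along with the independence $X \indep U$ which kills the cross terms.
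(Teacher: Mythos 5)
Your proposal is correct and takes essentially the same route as the paper: your identity $\hat B - B_0 = \hat V\hat\Sigma_{XU} - \epsilon_n\hat V B_0$, once you split $-\epsilon_n\hat V B_0 = -\epsilon_n V_n B_0 + \epsilon_n\hat V(\hat\Sigma_{XX}-\Sigma_{XX})V_n B_0$, reproduces term-by-term the paper's $(B_n-B_0)+(\hat B_{\mathrm{reg}}-B_n)$ with $B_n=V_n\Sigma_{XY}$, and your handling of $\hat V\hat\Sigma_{XU}$ (swap in $\tilde\Sigma_{XU}$, isolate $V_n\tilde\Sigma_{XU}$, use $X\independent U$ and the spectral-sum bound of Lemma \ref{lemma:calculus} for its second moment, resolvent identity for the cross piece) is exactly the paper's argument, as is the bookkeeping in part (2). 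The only cosmetic point is that your spectral inequality should be taken over the bounded spectrum of $\Sigma_{XX}$ rather than all $\lambda\ge 0$ when $\beta>1$ (the paper treats $\beta>1$ separately using $\|\Sigma_{XX}\|_{\mathrm{OP}}<\infty$ to get $\|V_nB_0\|_{\mathrm{OP}}=O(\epsilon_n^{(\beta\wedge1)-1})$).
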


%
%
%

%

\subsection{Optimal turning and convergence}

Next, we derive the optimal convergence rate of (\ref{eq:goal 1}) where $\epsilon \lo n$ is of the form $\epsilon \lo n \asymp n \hi {-\delta}$ for some $\delta > 0$. With $\epsilon \lo n$ in this form, the four terms in (\ref{eq:goal 1}) reduce to
\begin{align*}
	n \hi {-1/2+\delta \{1-(\beta \wedge 1)\}}, \quad   n \hi {- \delta (\beta \wedge 1)}, \quad   n\hi {-1+\delta (3 \alpha+ 1)/ (2 \alpha)}, \quad  n \hi {-1/2+  \delta (\alpha + 1)/(2 \alpha)}.
\end{align*}
Let
$\ell  \lo 1, \ldots, \ell \lo 4$ be the  linear functions of $\delta$ in the exponents; that is,
\begin{align*}
	\ell \lo 1 (\delta)= {-1/2+\delta \{1-(\beta \wedge 1)\}}, \quad   \ell \lo 2 (\delta)= {- \delta (\beta \wedge 1)}, \quad  \ell \lo 3(\delta)= {-1+\delta (3 \alpha+ 1)/ (2 \alpha)}, \\
	\ell \lo 4(\delta)= {-1/2+  \delta (\alpha + 1)/(2 \alpha)}. \hspace{1.4in}
\end{align*}
Let $m(\delta) = \max \{ \ell \lo 1(\delta), \ldots, \ell \lo 4(\delta) \}$. Then the rate in (\ref{eq:goal 1}) can be rewritten as $n \hi {m (\delta)}$.
Letting $\delta \loo {opt}$ be the $\delta$ that minimizes  $m(\delta)$,  the optimal tuning parameter is $\epsilon \lo n =   n \hi {- \delta \loo {opt}}$, and the corresponding convergence rate  is $n \hi {m(\delta \loo {opt})} \equiv \rho \loo{opt}$.

\begin{theorem}\label{theorem:optimal rate} Suppose the conditions in Theorem \ref{theorem:convergence rate} hold for some $\alpha > 1$, $\beta > 0$.
	\begin{enumerate}
		\item [({\em 1})] if $\beta > (\alpha - 1)/(2 \alpha)$, then
		$
		\delta \loo {opt} = \frac{\alpha}{2 \alpha (\beta \wedge 1) + \alpha + 1}$, $\rho \loo {opt} = n \hi {-\frac{\alpha (\beta \wedge 1)}{2 \alpha (\beta \wedge 1) + \alpha + 1}}.
		$
		\item [({\em 2})] if $\beta \le (\alpha - 1)/(2 \alpha)$, then
		$
		\delta \loo {opt} = \frac{1}{2}$, $\rho \loo {opt} = n \hi {- \frac{\beta}{2}}.
		$
	\end{enumerate}
\end{theorem}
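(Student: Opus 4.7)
The plan is to treat this as a routine optimization of a piecewise-linear function. Write $m(\delta) = \max\{\ell_1(\delta), \ell_2(\delta), \ell_3(\delta), \ell_4(\delta)\}$ as the pointwise maximum of four affine functions on $(0, \infty)$. Observe that $\ell_2$ is the unique affine piece with a strictly negative slope $-(\beta \wedge 1)$, while $\ell_1, \ell_3, \ell_4$ all have non-negative slopes; moreover $\ell_1, \ell_2, \ell_4$ start at $-1/2$ at $\delta = 0$ and $\ell_3$ starts at $-1$. So $m$ is convex and eventually increasing, hence its minimum is attained at the unique $\delta > 0$ where the decreasing piece $\ell_2$ first meets the upper envelope $E(\delta) = \max\{\ell_1, \ell_3, \ell_4\}$ of the three non-decreasing pieces.

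The first step is to identify $E$. A direct computation of pairwise differences shows that $\ell_3 - \ell_4 = -1/2 + \delta$, so $\ell_4 \geq \ell_3$ on $[0, 1/2]$ and $\ell_3 \geq \ell_4$ on $[1/2, \infty)$, and that $\ell_1 - \ell_4$ is proportional to $(\alpha - 1)/(2\alpha) - (\beta \wedge 1)$, so $\ell_1 \geq \ell_4$ everywhere when $\beta \leq (\alpha-1)/(2\alpha)$ and $\ell_4 \geq \ell_1$ everywhere otherwise. This naturally splits into the two cases stated in the theorem.

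In case (1), $\beta > (\alpha-1)/(2\alpha)$, so $\ell_4$ dominates $\ell_1$; on $[0, 1/2]$ the envelope is $\ell_4$. Solving the linear equation $\ell_2(\delta) = \ell_4(\delta)$ yields
\begin{equation*}
\delta_{\mathrm{opt}} = \frac{\alpha}{2\alpha(\beta \wedge 1) + \alpha + 1},
\end{equation*}
and the condition $\beta > (\alpha-1)/(2\alpha)$ is precisely what guarantees $\delta_{\mathrm{opt}} \leq 1/2$, so that $\ell_3$ is still below $\ell_4$ at this point and therefore does not affect the minimum. Substituting into $\ell_2$ gives $m(\delta_{\mathrm{opt}}) = -\alpha(\beta \wedge 1)/[2\alpha(\beta\wedge 1) + \alpha + 1]$, which matches the stated $\rho_{\mathrm{opt}}$. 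In case (2), $\beta \leq (\alpha-1)/(2\alpha)$ forces $\beta \wedge 1 = \beta$ and $\ell_1 \geq \ell_4$ throughout; solving $\ell_2 = \ell_1$ gives $\delta_{\mathrm{opt}} = 1/2$, and a quick check shows $\ell_3(1/2) = \ell_4(1/2) = -(\alpha-1)/(4\alpha) \leq -\beta/2 = \ell_2(1/2)$ by the case hypothesis, so $\ell_3$ and $\ell_4$ are not binding. Substituting yields $\rho_{\mathrm{opt}} = n^{-\beta/2}$.

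There is no real obstacle here, since the argument is just bookkeeping over linear intersections. The only subtle part is verifying in each case that the two non-binding lines are indeed below the binding pair at the candidate $\delta_{\mathrm{opt}}$, and that the assumption of case (1) versus case (2) is exactly the threshold that makes the optimum jump from an interior crossing of $\ell_2$ with $\ell_4$ to the crossing at $\delta = 1/2$ with $\ell_1$ (equivalently with $\ell_3$ and $\ell_4$, which coincide there). Finally, one should note that the admissibility condition $\epsilon_n \succ \max(n^{-1/[2\{1-(\beta\wedge 1)\}]}, n^{-2\alpha/(3\alpha+1)})$ from Theorem \ref{theorem:convergence rate}(2) is automatically satisfied by $\epsilon_n = n^{-\delta_{\mathrm{opt}}}$ in both cases, since $\delta_{\mathrm{opt}} < 2\alpha/(3\alpha+1)$ and $\delta_{\mathrm{opt}} < 1/[2\{1-(\beta \wedge 1)\}]$ can both be read off from the same intersection inequalities used above.
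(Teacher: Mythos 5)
Your proposal is correct and takes essentially the same route as the paper: an elementary comparison of the four exponent lines, with the same case split at $\beta=(\alpha-1)/(2\alpha)$ and the same binding intersections ($\ell_2$ with $\ell_4$ at $\delta=\alpha/\{2\alpha(\beta\wedge 1)+\alpha+1\}$ in case (1), $\ell_2$ with $\ell_1$ at $\delta=1/2$ in case (2)). The only differences are cosmetic: you justify the relative positions of the lines via the decreasing-line-meets-nondecreasing-envelope argument instead of the paper's appeal to its figure, and your closing check that $\epsilon_n=n^{-\delta_{\mathrm{opt}}}$ satisfies the admissibility condition of Theorem \ref{theorem:convergence rate}(2) is a correct, harmless addition.
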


The best rate for the regression operator reported in \cite{li2017} is
\begin{align*}
	\rho \loo {LS} = n \hi {-(\beta \wedge 1) / [2 \{1 + (\beta \wedge 1)\}]}.
\end{align*}
It is easy to check that $\rho \loo {opt}$ converges to 0 faster than $\rho \loo {LS}$ in both scenarios of $\beta$; that is,
\begin{align*}
	n \hi {-{(\alpha \beta \wedge \alpha)}/{(2 \alpha (\beta \wedge 1) + \alpha + 1)}} \prec n \hi {-(\beta \wedge 1) / [2 \{1 + (\beta \wedge 1)\}]}, \quad n \hi {- {\beta}/{2}}\prec n \hi {-\beta / [2 ( 1 + \beta)]}
\end{align*}
for all $\beta > 0$ and $\alpha > 1$. The reason for this improvement is twofold: first, we are dealing with the  more specific regression problem (\ref{eq:RKHS regression}), whereas \cite{li2017} dealt with a general problem where the regression operator corresponds directly to a conditional distribution, without any regression structure; second, we have made Assumption \ref{assumption:KL for kappa}(ii), which was not made in \cite{li2017}.  Note that, when $\beta = 1$, Li and Song's rate is $n \hi {-1/4}$, whereas our current rate is always faster than $n \hi {-1/4}$ regardless of the value of $\alpha$, and approaches $n \hi {-1/3}$ when $\alpha$ is large.

\subsection{Convergence rate for regression estimate}
In this section we develop the convergence rate of our nonparametric regression estimate
$\widehat \E (Y | x \lo 0 )$ to the true mean response $\E ( Y | x \lo 0)$ at any given time point $t$.
We will use  $\E ( Y | x \lo 0)(t)$ to denote  the function $\E(Y|X= x \lo 0)$, which is a member of $\ca H \lo Y$, evaluated at time $t$; the same  applies to $\widehat \E (Y | x \lo 0 )(t)$.
Assuming  $\ca H \lo Y$ is an RKHS with kernel $\ka \lo T$, the conditional mean $\E ( Y | x \lo 0)(t)$ can be written as $\langle \ka \lo T (\cdot, t), \E ( Y | x \lo 0) \rangle \lo {\ca H \lo Y}$. Since
$
\E(Y| x\lo 0) = B \lo 0 \hi * ( \ka (\cdot, x \lo 0) - \mu \lo X) + \mu \lo Y,
$
we have
\begin{align}\label{eq:true regression function}
	\E ( Y | x \lo 0)(t) = \ali  \langle B \lo 0 \, \ka \lo T (\cdot, t),    \ka (\cdot, x \lo 0) - \mu \lo X  \rangle \lo {\frak M \lo X}  +   \mu \lo Y (t).
\end{align}
The estimate of the above is
\begin{align}\label{eq:estimated regression function}
	\widehat \E ( Y | x \lo 0)(t)
	= \langle \hat B \, \ka \lo T (\cdot, t),  \ka (\cdot, x \lo 0) - \hat \mu \lo X \rangle \lo {\frak M \lo X} +  \hat \mu \lo Y (t).
\end{align}
The next corollary shows that  $\widehat \E (Y | x \lo 0 )(t)$ has the same convergence rate as $\hat B$.

\begin{corollary}\label{corollary:regression estimate} Suppose
	\begin{enumerate}
		\item [({\em 1})]the conditions in Theorem \ref{theorem:convergence rate} hold for some $\alpha > 1$, $\beta > 0$,
		\item [({\em 2})]
		$\max ( n \hi {- 1/ [2\{1- (\beta \wedge 1)\}]}, n \hi {-2 \alpha / ( 3 \alpha + 1)} )\prec \epsilon \lo n \prec 1 $;
		\item [({\em 3})]
		$\ca H \lo Y$ is an RKHS generated by a kernel $\ka \lo T$.
	\end{enumerate}
	Then $\widehat \E (Y | x \lo 0) (t)$ is consistent with convergence rate
	\begin{align}\label{eq:hat E rate}
		\widehat \E (Y | x \lo 0) (t) - \E (Y | x \lo 0) (t)  =
		O \lo P ( n \hi {-1/2} \epsilon \lo n \hi {(\beta \wedge 1) - 1} + \epsilon \lo n \hi {\beta\wedge 1}  +  n\hi {-1} \epsilon \lo n \hi {- (3 \alpha+ 1)/ (2 \alpha)} + n \hi {-1/2} \epsilon \lo  n \hi {-(\alpha + 1)/(2 \alpha)} ).
	\end{align}
	Furthermore, the conclusions of Theorem \ref{theorem:optimal rate} also hold.
\end{corollary}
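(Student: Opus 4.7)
The plan is to subtract \eqref{eq:true regression function} from \eqref{eq:estimated regression function} and decompose the difference into three pieces by inserting $\pm \langle \hat B \ka \lo T(\cdot, t), \ka(\cdot, x\lo 0) - \mu \lo X\rangle \lo {\frak M \lo X}$:
\begin{align*}
\widehat \E(Y|x\lo 0)(t) - \E(Y|x\lo 0)(t)
= \ali \underbrace{\langle (\hat B - B \lo 0)\ka \lo T(\cdot, t),\, \ka(\cdot, x\lo 0) - \mu \lo X\rangle \lo {\frak M \lo X}}\lo{\mathrm{I}} \\
 \ali + \underbrace{\langle \hat B \ka \lo T(\cdot, t),\, \mu \lo X - \hat \mu \lo X\rangle \lo {\frak M \lo X}}\lo{\mathrm{II}} + \underbrace{\hat \mu \lo Y (t) - \mu \lo Y(t)}\lo{\mathrm{III}}.
\end{align*}
The idea is that Term~I inherits the rate of $\|\hat B - B \lo 0\|\loo{OP}$ from Theorem~\ref{theorem:convergence rate}, while Terms~II and~III are of order $O \lo P(n \hi{-1/2})$, which under the hypotheses on $\epsilon \lo n$ is dominated by the bound in (\ref{eq:goal 1}).

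For Term~I, apply Cauchy--Schwarz and the operator-norm bound to get
$|\mathrm{I}| \le \|\hat B - B \lo 0\|\loo{OP}\, \|\ka \lo T(\cdot, t)\|\lo{\ca H \lo Y}\, \|\ka(\cdot, x \lo 0) - \mu \lo X\|\lo{\frak M \lo X}$.
Since $\ca H \lo Y$ is an RKHS with kernel $\ka \lo T$, $\|\ka \lo T(\cdot, t)\|\lo{\ca H \lo Y} = \sqrt{\ka \lo T(t,t)}$, which is finite at the fixed $t$; the factor involving $x \lo 0$ is likewise a deterministic constant. Hence $|\mathrm{I}| = O \lo P(\|\hat B - B \lo 0\|\loo{OP})$, which equals the right side of (\ref{eq:hat E rate}) by Theorem~\ref{theorem:convergence rate}.

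For Term~II, by the same Cauchy--Schwarz argument
$|\mathrm{II}| \le \|\hat B\|\loo{OP}\, \sqrt{\ka \lo T(t,t)}\, \|\hat \mu \lo X - \mu \lo X\|\lo{\frak M \lo X}$.
Theorem~\ref{theorem:convergence rate} implies $\|\hat B\|\loo{OP} = O \lo P(1)$ (since it converges to the bounded $B \lo 0$); the standard Hilbert-space root-$n$ CLT, combined with Assumption~\ref{assumption:finte moments and beta}(i) (which gives $\E \ka(X,X) < \infty$, hence $\E \|\ka(\cdot, X)\|\lo{\frak M \lo X}\hi 2 < \infty$), yields $\|\hat \mu \lo X - \mu \lo X\|\lo{\frak M \lo X} = O \lo P(n \hi{-1/2})$. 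For Term~III, the reproducing property gives $|\mathrm{III}| \le \sqrt{\ka \lo T(t,t)}\, \|\hat \mu \lo Y - \mu \lo Y\|\lo{\ca H \lo Y}$, which is $O \lo P(n \hi{-1/2})$ by the same CLT under $\E\|Y\|\lo{\ca H \lo Y}\hi 2 < \infty$. Under the hypothesis on $\epsilon \lo n$, the bound in (\ref{eq:hat E rate}) is at least $n \hi{-1/2} \epsilon \lo n \hi{(\beta \wedge 1) - 1} \succeq n \hi{-1/2}$, so Terms~II and~III are absorbed into the right-hand side.

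The only delicate point, and the one I would most want to double-check, is the claim $\|\hat B\|\loo{OP} = O \lo P(1)$: it requires the right-hand side of (\ref{eq:goal 1}) to be $o \lo P(1)$, which is exactly what part~(2) of Theorem~\ref{theorem:convergence rate} asserts under the assumed lower bound on $\epsilon \lo n$. Once consistency is established at this rate, the ``furthermore'' statement about Theorem~\ref{theorem:optimal rate} is immediate, since the rate expression in (\ref{eq:hat E rate}) is identical to that in (\ref{eq:goal 1}), so minimizing $m(\delta)$ produces the same $\delta \loo{opt}$ and $\rho \loo{opt}$.
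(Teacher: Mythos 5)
Your proposal is correct and takes essentially the same route as the paper: the paper's expansion merely splits your Term~II further into $(\hat B - B\lo 0)$ and $B\lo 0$ pieces, bounding them by $\|\hat B - B\lo 0\|\loo{OP}$ and $\|B\lo 0\|\loo{OP}$ rather than invoking $\|\hat B\|\loo{OP} = O\lo P(1)$, which is only a trivial regrouping (and under hypothesis (2) your $O\lo P(1)$ claim is indeed justified by Theorem \ref{theorem:convergence rate}(2)). Everything else---Cauchy--Schwarz with the kernel norms, the $O\lo P(n\hi{-1/2})$ rates for $\hat\mu\lo X$ and $\hat\mu\lo Y(t)$, absorption of these terms since the bound contains $n\hi{-1/2}\epsilon\lo n\hi{(\beta\wedge 1)-1}\succeq n\hi{-1/2}$, and the observation that the ``furthermore'' part follows because the rate expression coincides with (\ref{eq:goal 1})---matches the paper's argument.
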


\section{Central limit theorem} \label{sec-CLT}

\def\cid{\stackrel{\ca D}\longrightarrow}

\subsection{Pointwise central limit theorem} 
In this section we develop the central limit theorem of the regression estimate $\widehat \E (Y | x \lo 0) (t)$, which is useful for constructing the confidence interval for the mean response $\E(Y|x \lo 0) (t)$. We will only consider the case $\beta > (\alpha-1)/(2 \alpha)$ and $\delta > \alpha / ( 2 \alpha \beta + \alpha + 1)$, which means the relation between $Y$ and $X$ is relatively smooth and $\epsilon \lo n$ is chosen so that the bias term is of a smaller order than the dominating term. More specifically, recall that
\begin{align*}
	\hat B - B \lo 0
	=  \ali  \hat B \loo {res} +   (\hat B \loo {reg} - B \lo n)  +   (B \lo n - B \lo 0 ), \quad \mbox{where} \\
	\hat B \loo {res}=  \ali (\hat V \hat \Sigma \lo {XU} - \hat V \tilde \Sigma \lo {XU}  )  + (\hat V \tilde \Sigma \lo {XU} - V \lo n \tilde  \Sigma \lo {XU} ) +   V \lo n \tilde  \Sigma \lo {XU}.
\end{align*}
Let $B\lo {n,4}$ and $B \lo {n,5}$ be the last two terms of the first equation, and $B \lo {n,1}, B \lo {n,2}, B \lo {n,3}$ be the three terms of $\hat B \loo {res}$ in the second equation. Let
\begin{align*}
	A \lo {n,r} = \langle B \lo {n, r} \ka \lo T (\cdot, t), \ka (\cdot, x \lo 0) - \mu\lo X \rangle \lo {\frak M \lo X}, \quad r = 1, \ldots, 5.
\end{align*}
Note that $A \lo {n,5}$ is a nonrandom number. By Theorem \ref{theorem:convergence rate},  when $\beta > (\alpha-1)/(2 \alpha)$ and $\alpha / ( 2 \alpha \beta + \alpha + 1) < \delta < 1/2$, $B \lo {n,3}$ is the dominating term among all the other terms. Hence it is reasonable to expect that
$A \lo {n,3}$ is also the dominating term. Our central limit theorem is based on this assumption. 

\begin{assumption}\label{assumption:dominating}  $A \lo {n,1}, \ldots, A \lo {n,4}$ have finite variances $\sigma \lo {n,1} \hi 2, \ldots, \sigma \lo {n,4} \hi 2$ and
	\begin{align*}
		\sigma \lo {n,3}  \succ \max ( \sigma \lo {n, 1} , \sigma \lo {n,2} , \sigma \lo {n,4}, |A \lo {n, 5}|).
	\end{align*}
\end{assumption}

\medskip

\begin{theorem}\label{theorem:clt} Suppose the conditions in Theorem \ref{theorem:convergence rate} are satisfied for some $\alpha > 1$ and $\beta > (\alpha-1)/(2 \alpha)$, and Assumption \ref{assumption:dominating} is satisfied. Furthermore, suppose that the kernel $\ka$ is bounded. Then the following statements hold true:
	\begin{enumerate}
		\item [({\em 1})] $\sigma \lo {n,3} \hi 2 =n \inv \E [ U \hi 2 (t)] \, \tsum \lo {j \in \nat} (\lambda \lo j + \epsilon \lo n)\hi {-2} \lambda \lo j
		[\varphi \lo j (x \lo 0)]\hi 2$;
		\item [({\em 2})] if $\sigma \lo {n,3} \succ  \epsilon \lo n \hi {\beta \wedge 1}$ and $\epsilon \lo n \succ n \hi {-1/2}$, then for any $x \lo 0 \in \ca H \lo X$ and $t \in T$,
		\begin{align*}
			\sigma \lo {n, 3} \hi {-1} [ \widehat \E (Y|x \lo 0)(t) - \E (Y|x \lo 0) (t) ] \stackrel{\ca D}\longrightarrow N(0,1).
		\end{align*}
	\end{enumerate}
\end{theorem}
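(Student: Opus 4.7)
My plan is to treat the two parts in order. For Part~(1), I compute $\sigma_{n,3}^2$ directly from the definition of $A_{n,3}$. Unpacking $\tilde\Sigma_{XU}\,\kappa_T(\cdot,t) = n^{-1}\sum_{i=1}^{n} U_i(t)\,(\kappa(\cdot,X_i)-\mu_X)$ gives
\begin{equation*}
A_{n,3} = n^{-1}\sum_{i=1}^{n} W_i,\qquad W_i := U_i(t)\,Z_i,\qquad Z_i := \langle V_n(\kappa(\cdot,X_i)-\mu_X),\,\kappa(\cdot,x_0)-\mu_X\rangle_{\mathfrak M_X}.
\end{equation*}
Assumption~\ref{assumption:KL for kappa}(i) gives $U_i\perp X_i$, while $\mathrm{E}(U)=0$ in $\mathcal H_Y$ implies $\mathrm{E}[U(t)] = \langle \mathrm{E}(U),\kappa_T(\cdot,t)\rangle_{\mathcal H_Y}=0$, so the $W_i$ are i.i.d.\ mean zero. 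The Fourier expansion~(\ref{eq:kappa expansion}) together with $V_n\varphi_j = (\lambda_j+\epsilon_n)^{-1}\varphi_j$ yields $Z_i = \sum_{j}\zeta_j^{(i)}(\lambda_j+\epsilon_n)^{-1}\,\varphi_j(x_0)$, where ``$\varphi_j(x_0)$'' is interpreted as the $\mathfrak M_X$-coefficient $\langle\varphi_j,\kappa(\cdot,x_0)-\mu_X\rangle_{\mathfrak M_X}$, consistently with the theorem. Since the $\zeta_j^{(i)}$ are uncorrelated across $j$ with variance $\lambda_j$ and $U_i\perp X_i$,
\begin{equation*}
\sigma_{n,3}^2 = n^{-1}\,\mathrm{Var}(W_i) = n^{-1}\,\mathrm{E}[U^2(t)]\,\mathrm{E}[Z_i^2] = n^{-1}\,\mathrm{E}[U^2(t)]\sum_{j\in\mathbb N}(\lambda_j+\epsilon_n)^{-2}\lambda_j\,[\varphi_j(x_0)]^2,
\end{equation*}
which is Part~(1).

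For Part~(2), the decomposition stated immediately before the theorem gives $\widehat{\mathrm E}(Y|x_0)(t)-\mathrm E(Y|x_0)(t) = \sum_{r=1}^{5}A_{n,r}$. Under Assumption~\ref{assumption:dominating} and Chebyshev applied term by term, $A_{n,r} = o_P(\sigma_{n,3})$ for $r\in\{1,2,4\}$ (with the nonrandom biases $\mathrm{E}A_{n,r}$ controlled in the same way that bounds $|A_{n,5}|=o(\sigma_{n,3})$), so by Slutsky it suffices to prove $\sigma_{n,3}^{-1}A_{n,3} \stackrel{\mathcal D}{\longrightarrow} N(0,1)$. I establish this by Lindeberg's theorem for the i.i.d.\ triangular array $\{W_i\}_{i=1}^{n}$ with $\mathrm{Var}(W_i)=n\sigma_{n,3}^2$. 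Boundedness of $\kappa$ gives $\|\kappa(\cdot,X)-\mu_X\|_{\mathfrak M_X}\le C_1$ almost surely, hence $|Z_i|\le C_1^{2}\epsilon_n^{-1}$, so for any $\eta>0$,
\begin{equation*}
\{|W_i|>\eta n\sigma_{n,3}\}\ \subseteq\ \{|U_i(t)|>M_n\},\qquad M_n := C_1^{-2}\eta\,n\sigma_{n,3}\epsilon_n.
\end{equation*}
Writing $\epsilon_n = L(n)n^{-1/2}$ with $L(n)\to\infty$, the combination $\sigma_{n,3}\succ\epsilon_n^{\beta\wedge 1}$ and $\epsilon_n\succ n^{-1/2}$ forces $n\sigma_{n,3}\epsilon_n \gtrsim K(n)\,L(n)^{1+\beta\wedge 1}\,n^{(1-\beta\wedge 1)/2}\to\infty$ with some $K(n)\to\infty$ (a short case split on $\beta\wedge 1 = 1$ versus $\beta\wedge 1 < 1$), so $M_n\to\infty$. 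Using $U_i\perp X_i$ and $n\sigma_{n,3}^2 = \mathrm{E}[U^2(t)]\,\mathrm{E}[Z_i^2]$,
\begin{equation*}
\frac{\mathrm{E}[W_i^{2}\,\mathbf 1(|W_i|>\eta n\sigma_{n,3})]}{n\sigma_{n,3}^2}\ \le\ \frac{\mathrm{E}[Z_i^2]\,\mathrm{E}[U^2(t)\,\mathbf 1(|U(t)|>M_n)]}{\mathrm{E}[U^2(t)]\,\mathrm{E}[Z_i^2]}\ \longrightarrow\ 0
\end{equation*}
by dominated convergence, where $\mathrm{E}[U^2(t)]<\infty$ follows from $\mathrm{E}\|Y\|_{\mathcal H_Y}^2<\infty$ together with boundedness of the evaluation functional $h\mapsto h(t)$ on the RKHS $\mathcal H_Y$. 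Lindeberg is verified and the CLT follows.

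The main obstacle is precisely this Lindeberg verification: $\sigma_{n,3}$, the truncation threshold $M_n$, and the pointwise bound on $|Z_i|$ all scale in $n$ at rates governed by $\beta$ and $\alpha$, and forcing $M_n\to\infty$ requires the simultaneous use of both auxiliary rate conditions of Part~(2); the boundary case $\beta\wedge 1=1$ with $\epsilon_n$ just barely $\succ n^{-1/2}$ is tight and would collapse without the strict asymptotic inequalities. A cosmetic secondary subtlety is the reading of ``$\varphi_j(x_0)$'' in the variance formula as the centered coefficient $\langle\varphi_j,\kappa(\cdot,x_0)-\mu_X\rangle_{\mathfrak M_X}$: this is what actually enters $A_{n,3}$ since $\kappa(\cdot,x_0)-\mu_X\in\mathfrak M_X^0$, even though the $\Sigma_{XX}$-eigenfunctions $\varphi_j$ are not in general $P_X$-centered.
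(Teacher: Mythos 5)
Your proof is correct and, in substance, follows the same architecture as the paper's: Part (1) is obtained exactly as in the paper (write $A_{n,3}=n^{-1}\sum_i U_i(t)\,\langle V_n(\kappa(\cdot,X_i)-\mu_X),\,\kappa(\cdot,x_0)-\mu_X\rangle_{\mathfrak M_X}$, use $U\independent X$, the eigen-expansion of $\kappa(\cdot,X)-\mu_X$, and uncorrelatedness of the scores), and Part (2) reduces to a CLT for $A_{n,3}$ after invoking Assumption \ref{assumption:dominating}. Where you genuinely diverge is the CLT verification. The paper checks Lyapounov's condition with $c=2$: it bounds $E(Z_{ni}^4)\le M\epsilon_n^{-2}E(Z_{ni}^2)$ using boundedness of $\kappa$, lower-bounds $E(Z_{ni}^2)$ by a constant through its first eigen-term, and concludes $L_n(2)=O(n^{-1}\epsilon_n^{-2})\to 0$ from $\epsilon_n\succ n^{-1/2}$; this tacitly uses a finite fourth moment of $\langle U,\kappa_T(\cdot,t)\rangle_{\mathcal H_Y}$. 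You instead verify Lindeberg directly, using the a.s.\ bound $|Z_i|\le C\epsilon_n^{-1}$ to shift the truncation onto $U(t)$ and finishing by dominated convergence, which needs only $E[U^2(t)]<\infty$; in that respect your route is slightly more economical in moment assumptions on $U$. Two small remarks. First, your threshold $M_n\propto n\sigma_{n,3}\epsilon_n$ diverges more cheaply than your case split suggests: the same single-term lower bound the paper uses, $\sigma_{n,3}^2\ge n^{-1}E[U^2(t)](\lambda_1+\epsilon_n)^{-2}\lambda_1[\varphi_1(x_0)]^2$ (assuming, as both arguments implicitly do, some $\varphi_j(x_0)\neq 0$ so that $\sigma_{n,3}>0$), gives $n\sigma_{n,3}\epsilon_n\gtrsim n^{1/2}\epsilon_n\to\infty$ from $\epsilon_n\succ n^{-1/2}$ alone, so the condition $\sigma_{n,3}\succ\epsilon_n^{\beta\wedge 1}$ is not what rescues the boundary case $\beta\ge 1$ (its real role, as in the paper, is to make the bias negligible). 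Second, strictly speaking $\widehat{\mathrm E}(Y|x_0)(t)-\mathrm E(Y|x_0)(t)$ equals $\sum_{r}A_{n,r}$ only up to the $O_P(n^{-1/2})$ terms coming from $\hat\mu_X-\mu_X$ and $\hat\mu_Y-\mu_Y$ in the expansion \eqref{eq:hat E expansion} used for Corollary \ref{corollary:regression estimate}, which the paper dispatches explicitly; and your treatment of $A_{n,1},A_{n,2},A_{n,4}$ (Chebyshev on the variances in Assumption \ref{assumption:dominating}, with the means folded in informally) is at the same level of rigor as the paper's ``$A_{n,3}$ is the dominating term'' argument. Neither point affects the correctness of your approach.
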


To use this theorem to construct confidence intervals, we need to have an estimate of $\sigma \lo {n,3} \hi 2$.
As will be discussed later, we can substitute the estimates of $\lambda \lo j$, $\varphi \lo j$ and $\E[ U(t) \hi 2]$ to estimate $\sigma \lo {n,3}\hi 2$ for constructing the confidence interval.

\subsection{Uniform central limit theorem} \label{sec:unifclt}

Following the idea of \cite{cardot2007},  we now study the weak convergence of the regression estimate as a random function in the Hilbert space $\ca H \lo Y$.
With a slight abuse of notation, we denote the Riesz representation of $T\lo x$ defined in Section \ref{sec-model} by $M(x)$ given $X = x$ in $\ca H \lo X$, which is actually $\E(Y | X = x) \in \ca H \lo Y$. Let $\widehat{M}(x)$ denote the predicted value in $\ca H \lo Y$ for a new value $x$ obtained by means of the estimation method introduced in Section \ref{sec-estimation}. We are interested in the following problem. Given a new random element $X\lo {n + 1} \in \ca H \lo X$ that is a copy of $X$ and independent of $X\lo 1, \ldots, X\lo n$, we aim to investigate the weak convergence of $ a\lo n [\widehat{M}(X\lo {n + 1}) - \E(Y\lo {n + 1} | X\lo {n + 1})]$
in $\ca H \lo Y$ for some normalizing constant $a\lo n$. The following lemma illustrates the stochastic order of the crucial term in establishing weak convergence of $\widehat{M}(X\lo {n + 1})$.

\begin{lemma} \label{le-main}
	Suppose the conditions in Theorem \ref{theorem:convergence rate} are satisfied for some $\alpha > 1$, $\beta \geq 1$, and $n\hi {-1/2} \prec \epsilon\lo n \prec 1$. We further assume that $S\lo{XY}$ in Assumption \ref{assumption:finte moments and beta}(ii) is a Hilbert-Schmdit operator. 
	Let $W\lo n =  \sum\lo {i = 1}\hi n Z\lo {i, n}$, where $Z\lo {i,n} = \frac{1}{n} U\lo i  \langle \hat{V} G\lo i, G\lo {n+1} \rangle \lo {\frak M \lo X}$ and $G\lo i = \ka(\cdot, X\lo {i}) - {\mu}\lo X$.
	Then the following statements hold true:
	\begin{enumerate}
		\item[{\em 1.}] $\widehat{M}(X\lo {n + 1}) - \E(Y\lo {n + 1} | X\lo {n + 1}) = W\lo n +  O\lo P\left(n\hi {-1/2}\epsilon\lo n\hi{(\alpha-1)/(2\alpha)} + \epsilon\lo n + n\hi{-1} \epsilon\lo n\hi {-(\alpha+1)/(2\alpha)}  \right)$;
		\item[{\em 2.}] $\E(\|W\lo n\|\lo {\ca H \lo Y}\hi 2 ) = O(n\inv\epsilon\lo n\inv)$.
		
	\end{enumerate}
	
\end{lemma}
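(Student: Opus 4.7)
The plan is to derive a clean starting identity for $\widehat M(X_{n+1}) - \E(Y_{n+1} \mid X_{n+1})$ and then reduce it to the resolvent structure. First I would substitute $Y_i = B_0^* G_i + \mu_Y + U_i$, which follows from Proposition \ref{prop-condE}, into $\hat\mu_Y = n^{-1}\sum_{i=1}^n Y_i$ to obtain $\hat\mu_Y - \mu_Y = B_0^*(\hat\mu_X - \mu_X) + \hat\mu_U$. Combined with the sample- and population-level versions of Proposition \ref{prop-condE} and the identity $\hat G_{n+1} = G_{n+1} - (\hat\mu_X - \mu_X)$ (where $\hat G_{n+1} = \ka(\cdot, X_{n+1}) - \hat\mu_X$), the $B_0^*(\hat\mu_X - \mu_X)$ contributions telescope to yield
\begin{align*}
\widehat M(X_{n+1}) - \E(Y_{n+1} \mid X_{n+1}) = (\hat B - B_0)^* \hat G_{n+1} + \hat\mu_U.
\end{align*}
Using Lemma \ref{lemma:reg and res} and the resolvent identity $\hat V \hat\Sigma_{XX} = I - \epsilon_n \hat V$, I would write $\hat B - B_0 = \hat V \hat\Sigma_{XU} - \epsilon_n \hat V B_0$, so that $(\hat B - B_0)^* \hat G_{n+1} = \hat\Sigma_{XU}^* \hat V \hat G_{n+1} - \epsilon_n B_0^* \hat V \hat G_{n+1}$. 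Expanding $\hat G_{n+1} = G_{n+1} - (\hat\mu_X - \mu_X)$ and $\hat\Sigma_{XU}^* = \tilde\Sigma_{XU}^* + (\hat\Sigma_{XU} - \tilde\Sigma_{XU})^*$ isolates the leading piece $\tilde\Sigma_{XU}^* \hat V G_{n+1}$, which by construction equals $W_n$, together with three remainders $R_1, R_2, R_3$ bounded below.

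For the remainders, $R_1 = (\hat\Sigma_{XU} - \tilde\Sigma_{XU})^* \hat V G_{n+1}$ is controlled by Lemma \ref{lemma:hat vs tilde} ($\|\hat\Sigma_{XU} - \tilde\Sigma_{XU}\|\loo{OP} = O_P(n^{-1})$) together with the variance identity $\E\|\hat V G_{n+1}\|_{\frak M_X}^2 \asymp \tr(V_n^2 \Sigma_{XX}) = O(\epsilon_n^{-(\alpha+1)/\alpha})$ from Lemma \ref{lemma:calculus}, yielding $\|R_1\|_{\ca H_Y} = O_P(n^{-1}\epsilon_n^{-(\alpha+1)/(2\alpha)})$. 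For $R_3 = -\epsilon_n B_0^* \hat V \hat G_{n+1}$, Assumption \ref{assumption:finte moments and beta}(ii) with $\beta \ge 1$ gives $B_0 = \Sigma_{XX}^\beta S_{XY}$, so the eigenvalue bound $\sup_j \epsilon_n \lambda_j^\beta / (\lambda_j + \epsilon_n) = O(\epsilon_n)$ produces $\|\epsilon_n V_n B_0\|\loo{OP} = O(\epsilon_n)$; controlling the operator perturbation $\hat V - V_n$ under $\epsilon_n \succ n^{-1/2}$ and using the boundedness of $\hat G_{n+1}$ (from boundedness of $\ka$) concludes $\|R_3\|_{\ca H_Y} = O_P(\epsilon_n)$. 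The main obstacle is $R_2 = -\hat\Sigma_{XU}^* \hat V (\hat\mu_X - \mu_X)$: the naive bound $\|\hat V(\hat\mu_X - \mu_X)\|_{\frak M_X} \le \epsilon_n^{-1}\|\hat\mu_X - \mu_X\|_{\frak M_X} = O_P(n^{-1/2}\epsilon_n^{-1})$ is too weak, so I would split $\hat V = V_n + (\hat V - V_n)$ and exploit the sharper spectral variance identity $\E\|V_n(\hat\mu_X - \mu_X)\|_{\frak M_X}^2 = n^{-1}\tr(V_n^2 \Sigma_{XX}) = O(n^{-1}\epsilon_n^{-(\alpha+1)/\alpha})$ coming from the zero-mean independence of $G_1, \ldots, G_n$. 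Combined with $\|\hat\Sigma_{XU}\|\loo{OP} = O_P(n^{-1/2})$ (from $\Sigma_{XU} = 0$ via Lemma \ref{lemma:reg and res} and Lemma \ref{lemma:fuku et al}), the dominant piece gives $\|R_2\|_{\ca H_Y} = O_P(n^{-1}\epsilon_n^{-(\alpha+1)/(2\alpha)})$, and the perturbation piece is controlled under $\epsilon_n \succ n^{-1/2}$; the term $\hat\mu_U$ contributes an $O_P(n^{-1/2})$ piece that is absorbed into the stated error bound together with the leading $n^{-1/2}\epsilon_n^{(\alpha-1)/(2\alpha)}$ term.

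For part 2, I would condition on $X_1, \ldots, X_{n+1}$; by Assumption \ref{assumption:KL for kappa}(i) the $U_i$ are i.i.d.\ mean-zero random elements of $\ca H_Y$ independent of the scalar coefficients $c_{i,n} = \langle \hat V G_i, G_{n+1}\rangle_{\frak M_X}$, so orthogonality of cross-terms yields
\begin{align*}
\E\bigl[\|W_n\|_{\ca H_Y}^2 \bigm| X_1, \ldots, X_{n+1}\bigr] = \frac{\E\|U\|_{\ca H_Y}^2}{n^2}\sum_{i=1}^n c_{i,n}^2 = \frac{\E\|U\|_{\ca H_Y}^2}{n}\langle \hat V G_{n+1}, \tilde\Sigma_{XX} \hat V G_{n+1}\rangle_{\frak M_X},
\end{align*}
the second equality from $\sum_i G_i \otimes G_i = n\tilde\Sigma_{XX}$. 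Replacing $\hat V$ by $V_n$ and $\tilde\Sigma_{XX}$ by $\Sigma_{XX}$ at perturbation cost controlled by $\epsilon_n \succ n^{-1/2}$, and taking expectation over $X_{n+1}$, produces $\tr(V_n \Sigma_{XX} V_n \Sigma_{XX}) \le \|V_n \Sigma_{XX}\|\loo{OP}\,\tr(V_n \Sigma_{XX}) \le \tr(V_n \Sigma_{XX}) = O(\epsilon_n^{-1/\alpha})$ by Lemma \ref{lemma:calculus}, which is absorbed into the stated $O(n^{-1}\epsilon_n^{-1})$ since $\alpha > 1$.
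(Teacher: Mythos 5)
Your part 1 is essentially the paper's argument in a slightly cleaner packaging. The exact identity $\widehat M(X_{n+1})-\E(Y_{n+1}\mid X_{n+1})=(\hat B-B_0)^{*}\hat G_{n+1}+\hat\mu_U$, obtained by feeding $Y_i=B_0^{*}G_i+\mu_Y+U_i$ into $\hat\mu_Y$, makes the mean-estimation terms telescope, whereas the paper carries $\hat\mu_Y-\mu_Y$ and $\hat\Sigma_{YX}\hat V(\mu_X-\hat\mu_X)$ along as separate $O_P(n^{-1/2})$ pieces; after that, your split $\hat B-B_0=\hat V\hat\Sigma_{XU}-\epsilon_n\hat VB_0$ reproduces exactly the paper's three blocks ($W_n$ plus the analogues of its $E_{1n}$, $E_{2n}$ and the $\bar U$-remainder of $E_{3n}$), and your bounds land inside the stated rate because every $O_P(n^{-1/2})$ term is absorbed by $\epsilon_n$ under $\epsilon_n\succ n^{-1/2}$. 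A nice byproduct of your operator-norm bound $\|\epsilon_n\Sigma_{XX}^{\beta}V_n\|=O(\epsilon_n)$ is that the bias piece needs only boundedness of $S_{XY}$ (the paper uses the Hilbert--Schmidt property to compute $\E\|E_{2n}\|^2$ exactly). Two small repairs: the statement $\E\|\hat VG_{n+1}\|^2\asymp\mathrm{trace}(V_n^2\Sigma_{XX})$ is not literally meaningful since $\hat V$ is random — route it through $V_n$ plus the perturbation $\|( \hat V-V_n)G_{n+1}\|\le\|\hat V\Delta\|\,\|V_nG_{n+1}\|$ as you do for $R_2$ (and as the paper does); and boundedness of $\ka$ is not among this lemma's hypotheses, but $\E\ka(X,X)<\infty$ already gives $\|\hat G_{n+1}\|=O_P(1)$. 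For the perturbation inside $R_3$, note that the crude bound $\|\hat V-V_n\|\le\epsilon_n^{-2}\|\Delta\|$ is not good enough; you must keep either the smoothing factor $\Sigma_{XX}^{\beta}$ next to $\hat V$ (the paper's $E_{11n}$ trick) or write $\hat V-V_n=-V_n\Delta\hat V$ so that your $\|\epsilon_nB_0^{*}V_n\|=O(\epsilon_n)$ bound applies — the ingredients you list do suffice, but that is the step doing the work.

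The genuine soft spot is part 2. The claim is about the expectation $\E\|W_n\|^2$, so "replacing $\hat V$ by $V_n$ and $\tilde\Sigma_{XX}$ by $\Sigma_{XX}$ at perturbation cost controlled by $\epsilon_n\succ n^{-1/2}$" must be costed in expectation, and the obvious costing fails: using $\|\hat V-V_n\|\le\epsilon_n^{-2}\|\Delta\|$ and $\|\tilde\Sigma_{XX}\|=O_P(1)$ the replacement error in $\E\langle\hat VG_{n+1},\tilde\Sigma_{XX}\hat VG_{n+1}\rangle$ is of order $n^{-1/2}\epsilon_n^{-3}$, which is only $O(\epsilon_n^{-1})$ when $\epsilon_n\succeq n^{-1/4}$ — stronger than the assumed $\epsilon_n\succ n^{-1/2}$. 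One can close this with a sharper pairing (bound the discrepancy by $\|\hat V\Delta\|\,\|V_nG_{n+1}\|$, use $\E\|\Delta\|=O(n^{-1/2})$, $\E\|V_nG_{n+1}\|^2=O(\epsilon_n^{-(\alpha+1)/\alpha})$ and independence of $G_{n+1}$ from the sample), but you have not said how, and the paper deliberately avoids the replacement altogether: it writes $\E\|W_n\|^2=n^{-1}\sigma_U^2\,\E[\mathrm{trace}(\hat V\Sigma_{XX}\hat V\tilde\Sigma_{XX})]\le n^{-1}\sigma_U^2\,\mathrm{trace}(\Sigma_{XX})\,\E\|\hat V\tilde\Sigma_{XX}\hat V\|$, then uses the deterministic bounds $\|\hat V\|\le\epsilon_n^{-1}$, $\|\hat\Sigma_{XX}\hat V\|\le1$ and $\E\|\tilde\Sigma_{XX}-\hat\Sigma_{XX}\|=O(n^{-1})$ to get $C\epsilon_n^{-2}n^{-1}+\epsilon_n^{-1}=O(\epsilon_n^{-1})$ directly, with no spectral computation at all. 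Also, $\mathrm{trace}(V_n\Sigma_{XX})=O(\epsilon_n^{-1/\alpha})$ is not what Lemma \ref{lemma:calculus} states (it bounds $\sum_j\lambda_j(\lambda_j+\epsilon_n)^{-2}$); the same truncation argument proves it, but cite it as a separate easy computation rather than as that lemma.
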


\begin{remark} \label{rem:beta}
	We impose the condition $\beta \geq 1$ to facilitate the analysis of the stochastic order of $\hat{\Sigma}\lo {XX}\hi {\beta} - \Sigma\lo {XX}\hi {\beta}$. Without this assumption,
	even though we can still prove that it is $o\lo p(1)$, determining its convergence rate is quite complicated. 
\end{remark}

By \eqref{eq-Mhatexpansion} (in the appendix) in the proof of Lemma \ref{le-main} part {\em 1}, after ignoring $O\lo P(n\hi{-1/2})$ terms, we have
\begin{align*}
	\ali\quad\quad\widehat{M}(X\lo {n + 1}) - \E(Y\lo {n+1} | X\lo {n + 1}) \\
	\ali = (\hat{\Sigma}\lo {YX}\hat{V} - \Sigma\lo{YX}\Sigma\lo {XX}\mpinv)\{\ka(\cdot, X\lo {n+1}) - {\mu}\lo X\} \\
	\ali = B\lo 0\hi*(\hat{\Sigma}\lo{XX}\hat{V}-\Sigma\lo{XX}V\lo n)G\lo{n+1} + 
	B\lo 0\hi* ({\Sigma}\lo{XX}V\lo n - I) G\lo {n+1}   +  \frac{1}{n} \sum\lo {i = 1}\hi n U\lo i \langle \hat{V} \tilde{G}\lo i, G\lo {n+1} \rangle \lo {\frak M \lo X}   \\
	& = F\lo{1n} + F\lo{2n} + W\lo n + R\lo n,
\end{align*}
where $R\lo n = \bar{U} \langle \hat{V} (\mu\lo X - \hat{\mu}\lo X), G\lo {n+1} \rangle \lo {\frak M \lo X} = O\lo P(n\hi {-1} \epsilon\lo n\hi {-(\alpha+1)/(2\alpha)})$. 
It is straightforward to check that $n\hi{-1/2}\epsilon\lo n \hi {-1/2} \succ \max(n\hi {-1/2}\epsilon\lo n\hi {(\alpha-1)/(2\alpha)}, \epsilon\lo n, n\inv \epsilon\lo n\hi {-(\alpha+1)/(2\alpha)} )$ if $n\hi {-1/2} \prec \epsilon\lo n \prec n \hi {-1/3}$ for $\alpha > 1$. Thus $W\lo n$ is the dominating term among all the other terms when $n\hi {-1/2} \prec \epsilon\lo n \prec n \hi {-1/3}$. 
Let $s\lo n \hi 2 = n\inv  \E[\langle \hat{V} G\lo i, G\lo{n+1} \rangle \lo {\frak M \lo X}\hi 2]$. The  weak convergence of $\widehat{M}(X\lo {n + 1})$ in $\ca H \lo Y$  is based on the assumption that $W\lo n$ is the dominating term.

\begin{assumption} \label{assumption: uniform dominating}
	\begin{align*}
		s\lo n \hi 2 \succ \max(\var(F \lo{1n}), \var(F \lo{2n}), \var(R\lo n)). 
	\end{align*}
\end{assumption}

\begin{theorem} \label{th-strongCLT}
	Suppose assumptions in Lemma \ref{le-main} are met and Assumption \ref{assumption: uniform dominating} is satisfied. 
	Then
	\begin{equation}
		s\lo n\hi {-1}[\widehat{M}(X\lo {n + 1}) - \E(Y\lo {n+1} | X\lo {n + 1}) ]
		\convd  \ca N,
		\label{eq-strongCLT}
	\end{equation}
	where $\ca N$ is a centered Gaussian element taking values in $\ca H \lo Y$ with covariance operator $\Sigma\lo {UU}$. 
\end{theorem}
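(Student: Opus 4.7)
The plan is to combine Lemma \ref{le-main} with a central limit theorem for triangular arrays of i.i.d.\ random elements in a separable Hilbert space. Lemma \ref{le-main}(1) already expresses $\widehat M(X\lo{n+1}) - \E(Y\lo{n+1}\mid X\lo{n+1})$ as $W\lo n$ plus a remainder of norm $O\lo P(n\hi{-1/2}\epsilon\lo n\hi{(\alpha-1)/(2\alpha)} + \epsilon\lo n + n\inv \epsilon\lo n\hi{-(\alpha+1)/(2\alpha)})$. Matching these three pieces to the quantities $F\lo{1n}$, $F\lo{2n}$, $R\lo n$ defined just after Lemma \ref{le-main}, Assumption \ref{assumption: uniform dominating} implies that each has variance of smaller order than $s\lo n\hi 2$. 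Dividing by $s\lo n$, it therefore suffices to show $s\lo n\inv W\lo n \convd \ca N$ in $\ca H\lo Y$.

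The next step is to linearize by replacing $\hat V$ with $V\lo n = (\Sigma\lo{XX}+\epsilon\lo n I)\inv$. Define
\[
\tilde W\lo n = \frac{1}{n}\sum\lo{i=1}\hi n U\lo i\, \langle V\lo n G\lo i, G\lo{n+1}\rangle\lo{\frak M\lo X}.
\]
Conditioning on $(X\lo 1,\ldots,X\lo n, X\lo{n+1})$ and using $U\lo i \indep X$ together with $\E U\lo i = 0$ and mutual independence of the $U\lo j$, the summands of $W\lo n - \tilde W\lo n$ are orthogonal in $L\lo 2(\Omega;\ca H\lo Y)$, giving
\[
\E\|W\lo n - \tilde W\lo n\|\lo{\ca H\lo Y}\hi 2 = \frac{\E\|U\|\lo{\ca H\lo Y}\hi 2}{n}\,\E\bigl[\langle(\hat V - V\lo n)G\lo 1, G\lo{n+1}\rangle\lo{\frak M\lo X}\hi 2\bigr].
\]
The resolvent identity $\hat V - V\lo n = V\lo n(\Sigma\lo{XX}-\hat\Sigma\lo{XX})\hat V$, Lemma \ref{lemma:fuku et al}, and boundedness of $\ka$ then deliver $\|W\lo n - \tilde W\lo n\|\lo{\ca H\lo Y} = O\lo P(n\inv\epsilon\lo n\hi{-2})$, which is $o\lo P(s\lo n)$ under Assumption \ref{assumption: uniform dominating}. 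Hence $W\lo n$ may be replaced by $\tilde W\lo n$ in the target weak convergence.

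Now $\tilde W\lo n = \sum\lo{i=1}\hi n \tilde Z\lo{i,n}$ is a sum of i.i.d.\ random elements of $\ca H\lo Y$ with $\tilde Z\lo{i,n} = n\inv U\lo i \langle V\lo n G\lo i, G\lo{n+1}\rangle\lo{\frak M\lo X}$. Since $U\lo i$ has mean zero and is independent of the scalar multiplier,
\[
\cov(s\lo n\inv \tilde Z\lo{i,n}) = \frac{1}{n\hi 2 s\lo n\hi 2}\,\E\bigl[\langle V\lo n G\lo i, G\lo{n+1}\rangle\lo{\frak M\lo X}\hi 2\bigr]\,\Sigma\lo{UU} = \frac{1}{n}\Sigma\lo{UU},
\]
after a short auxiliary argument showing that the definition of $s\lo n\hi 2$ based on $\hat V$ agrees to leading order with the analog based on $V\lo n$. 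Hence $\cov(s\lo n\inv \tilde W\lo n) \to \Sigma\lo{UU}$ in trace norm, which identifies the candidate limit $\ca N$. For weak convergence in $\ca H\lo Y$, I would verify convergence of every finite-dimensional projection by applying the Lindeberg--Feller theorem to the scalar i.i.d.\ sum $\langle s\lo n\inv \tilde W\lo n, g\rangle\lo{\ca H\lo Y}$ for each $g\in\ca H\lo Y$, using $\|G\lo i\|\lo{\frak M\lo X}\le C$ (bounded kernel), $\|V\lo n\|\loo{OP}\le \epsilon\lo n\inv$, and $\E\|U\|\lo{\ca H\lo Y}\hi 2 <\infty$; tightness then follows from trace-norm convergence of the covariance operators.

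The main obstacle is the verification of the Lindeberg condition. The pointwise bound $\|s\lo n\inv\tilde Z\lo{i,n}\|\lo{\ca H\lo Y} \le C\|U\lo i\|\lo{\ca H\lo Y}/(n\epsilon\lo n s\lo n)$ forces the separate lower bound $n\epsilon\lo n s\lo n \to \infty$, which is only implicit in Assumption \ref{assumption: uniform dominating} and must be extracted from the Karhunen--Lo\`eve expansion (\ref{eq:kappa expansion}) combined with the eigenvalue decay $\lambda\lo j\asymp j\hi{-\alpha}$ of Assumption \ref{assumption:KL for kappa}(ii). A secondary technical nuisance is making the $\hat V \to V\lo n$ replacement in the linearization step rigorous on a high-probability event whose complement has probability negligible relative to the Lindeberg tail.
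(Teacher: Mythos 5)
Your opening reduction (use Lemma \ref{le-main} plus Assumption \ref{assumption: uniform dominating} to discard $F_{1n}$, $F_{2n}$, $R_n$ and work with $W_n$) is exactly the paper's first step, but the two moves you make afterwards each contain a genuine gap. The first is the linearization $\hat V \to V_n$. Your bound $\|W_n-\tilde W_n\|_{\mathcal{H}_Y}=O_P(n^{-1}\epsilon_n^{-2})$ is not $o_P(s_n)$ under the stated hypotheses: Assumption \ref{assumption: uniform dominating} only compares $s_n^2$ with $\mathrm{var}(F_{1n})$, $\mathrm{var}(F_{2n})$, $\mathrm{var}(R_n)$ and says nothing about $n^{-1}\epsilon_n^{-2}$. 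Quantitatively, $s_n^2=n^{-1}\E[\langle \hat V G_i,G_{n+1}\rangle^2]\preceq n^{-1}\epsilon_n^{-1}$ (this is Lemma \ref{le-main} part 2; under Assumption \ref{assumption:KL for kappa}(ii) it is in fact of order $n^{-1}\epsilon_n^{-1/\alpha}$), so $n^{-1}\epsilon_n^{-2}/s_n \succeq n^{-1/2}\epsilon_n^{-3/2}$, which diverges precisely in the regime $n^{-1/2}\prec\epsilon_n\prec n^{-1/3}$ where the paper shows $W_n$ dominates the other terms and where Assumption \ref{assumption: uniform dominating} is meant to hold (for $\epsilon_n\succ n^{-1/3}$ the bias term $F_{2n}=O_P(\epsilon_n)$ is already not dominated by $s_n$). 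The same unproved comparison reappears in your ``short auxiliary argument'' identifying the $\hat V$-based $s_n^2$ with its $V_n$-analogue. The paper avoids this replacement altogether: it keeps $\hat V$ inside the summands, observes that $H_i(y)=\langle Z_{i,n},y\rangle$ is a martingale difference array with respect to $\mathcal{F}_i=\sigma\{X_1,U_1,\ldots,X_i,U_i\}$ (only $U\independent X$ and $\E(U)=0$ are used), and applies McLeish's (1974) martingale CLT for the finite-dimensional laws.

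The second gap is independence. Even after replacing $\hat V$ by $V_n$, the summands $\tilde Z_{i,n}=n^{-1}U_i\langle V_nG_i,G_{n+1}\rangle$ are not independent: they all contain the common random element $G_{n+1}=\kappa(\cdot,X_{n+1})-\mu_X$. They are conditionally i.i.d.\ given $X_{n+1}$ and unconditionally uncorrelated, but Lindeberg--Feller for independent summands does not apply as you state it. Conditioning on $X_{n+1}$ makes the conditional variance $n^{-2}\sum_i\E[\langle V_nG_i,G_{n+1}\rangle^2\mid X_{n+1}]$ random, and after normalizing by the deterministic $s_n$ you would obtain a variance mixture of Gaussians unless you additionally prove that this conditional variance divided by $s_n^2$ converges in probability to one --- a step absent from your outline, and exactly the dependence the paper's filtration/martingale device is designed to absorb. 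The paper's tightness argument is also different from yours: it does not identify covariance operators in trace norm but uses the coordinate-wise criterion (Lemma 1.8.1 of van der Vaart and Wellner) together with the exact identity $\E\langle s_n^{-1}W_n,\psi_j\rangle^2=\E\langle U,\psi_j\rangle^2$ and dominated convergence. By contrast, the Lindeberg lower bound $n\epsilon_n s_n\to\infty$ that you flag as the main obstacle does hold in the relevant regime and is not where the difficulty lies.
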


\begin{remark} \label{re:SCB}
	By \eqref{eq-strongCLT} and the continuous mapping theorem, we have
	$$
	\sup\lo {t \in T} \biggl|s\lo n\hi {-1}\left[\{\widehat{M}(X\lo {n + 1})\}(t) - \E(Y\lo {n+1}  | X\lo {n + 1}) (t)\right]\biggl| \stackrel{\ca D}\longrightarrow \sup\lo {t \in T} |\ca N(t)|$$
	as $n \rightarrow \infty$. Therefore, if we are able to find $C(\alpha)$ that satisfies $\Pr (\sup\lo {t \in T} |\ca N(t)| \leq C(\alpha)) = 1 - \alpha$, then a $(1 - \alpha)$ simultaneous confidence band for $\E(Y\lo {n+1} | X\lo {n+1})$ can be constructed as
	\begin{equation} \label{eq-simulCI}
		\left(\{\widehat{M}(X\lo {n + 1})\}(t) - s\lo n C(\alpha), \{\widehat{M}(X\lo {n + 1})\}(t) + s\lo n C(\alpha)\right).
	\end{equation}
	The determination of $C(\alpha)$ is illustrated in one of our simulation studies near the end of Section \ref{sec:densesimul}.
\end{remark}

%

\section{Simulation studies} \label{sec-simu}
In this section, we investigate the performance of the proposed methodology in prediction under different simulation scenarios. For this purpose, we compare our nonlinear function-on-function regression (to be abbreviated by NLFFR) method with several alternative methods: optimal penalized linear function-on-function regression (to be abbreviated by PLFFR) proposed by \cite{sun2018optimal} and linear function-on-function regression estimated via functional
principal component analysis (to be abbreviated by FPCA) proposed by \cite{yao2005b} and \cite{crambes2013}.
In addition, we evaluate the finite-sample performances of both the pointwise confidence interval and the simultaneous confidence band developed in Section \ref{sec-CLT}.

\subsection{Simulation of functional covariate and functional response}
We adopt a similar strategy in \cite{li2017} to generate functional covariates. Specifically, we construct $\ca H \lo X$ as the RKHS induced by two kernels: the Gaussian radial basis function (GRB) and the Brownian motion covariance function (BMC). When the GRB kernel is employed, the functional covariate $X$ is generated by
$X(\cdot) = \sum\lo {k = 1}\hi 5 a\lo k \ka\lo T(\cdot, t\lo k)$, where $a\lo 1, \ldots, a\lo 5$ are independently sampled from $N(0, 1)$, $t\lo 1, \ldots, t\lo 5$ are independently sampled from $U[0, 1]$ and $\gamma\lo T = 7$. When the BMC kernel is employed, $X$ is generated as 
$$
X(t) = \sum\lo {j = 1}\hi {100} \sqrt{2} [(j - 1/2)\pi]\hi {-1} a\lo j \sin [(j - 1/2)\pi t], 
$$
where $a\lo j$'s are independently sampled from $N(0, 1)$. For each kernel, we consider both dense and sparse designs for the observed time points of $X$. In the dense design, we choose 50 equally spaced points in $[0, 1]$ as the observed time points of $X$ for each subject, while in the sparse design, we randomly select 10 points from the aforementioned 50 equally spaced points for each subject. The left panel of Figure \ref{fig:design} depicts 10 sample paths of $X$ generated by these two kernels in the dense design. 

Two models are then used to generate the functional response $Y$:
\begin{align*}
	\mbox{Model}~~1:\quad Y(t) & = \left(\frac{1}{1 + e\hi {\langle X, b\lo 1 \rangle\lo {\ca H \lo X}}} + \langle X, b\lo 2  \rangle\lo {\ca H \lo X} \hi 2\right) \rho(t) + \sigma \epsilon(t), \\
	\mbox{Model}~~2: \quad Y(t) & = \left\{\cos\left(\langle X, b\lo 3\rangle\lo{\ca H \lo X}\right)\right\}\rho(t)+ \sigma \epsilon(t). 
\end{align*}
In both models, when the GRB kernel is used, $b\lo j (\cdot) = \ka\lo T(\cdot, t\lo j)$ with $t\lo 1 =0.6,~t\lo 2 = 0.9$ and $t\lo 3 = 0.1$; when the BMC kernel is used, for $j = 1, 3$,
$b\lo j(t) = \nu\lo j(t) : = \sqrt{2} a\lo j \sin [(j - 1/2)\pi t]$, which is actually the $j$th eigenfunction of the covariance operator of the standard Brownian motion, and $b \lo 2 (t) = 0$. Regardless of the choice of kernel, $\rho(t) = \sum\lo {j = 1}\hi 5 \nu\lo j(t)$ and $\epsilon(t)$ is generated from the standard Brownian motion. The choices of $\rho$ and $\epsilon$ ensure that the true conditional mean $E(Y | X)$ resides in the RKHS generated by the BMC kernel. 
The right panel of Figure \ref{fig:design} shows the shape of $\rho(t)$, which indicates that the (true) conditional mean has a relative large fluctuation around 0.18. 
We consider two different values of $\sigma$: 0.1 and 2, to deliver different signal-to-noise ratios. 

\begin{figure}[H]
	\centering
	\includegraphics[width=14cm]{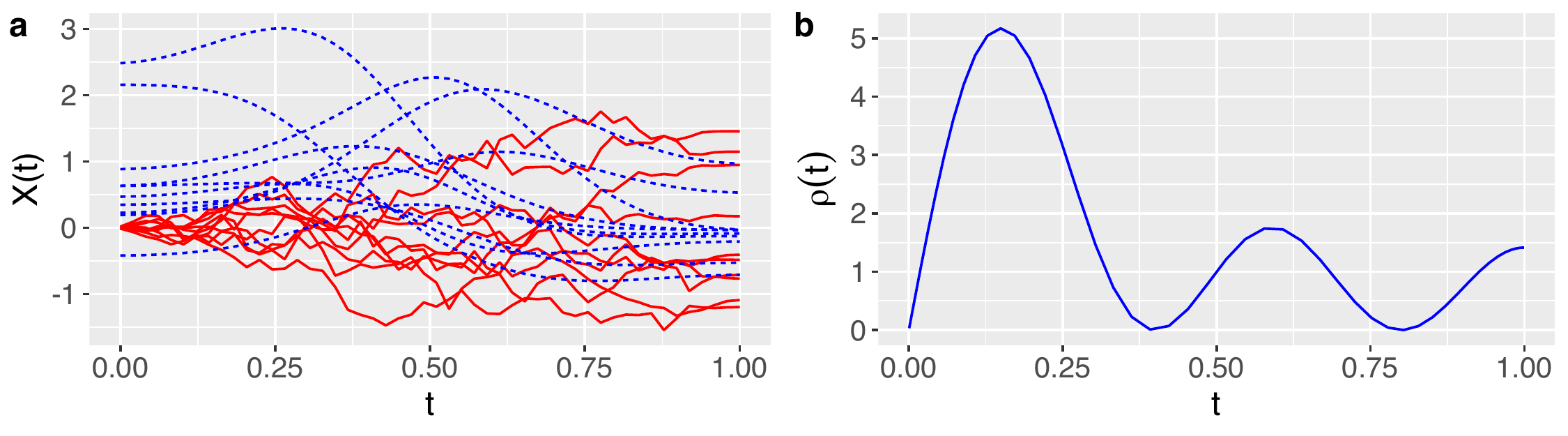}
	\caption{(a): trajectories of the functional covariate $X$ generated by the GRB kernel (blue dotted lines) and the BMC kernel (red solid lines). (b): function $\rho(t)$ in models 1 and 2.}
	\label{fig:design}
\end{figure}

In each simulation scenario, we randomly generate 100 pairs of $(X\lo i, Y\lo i)$'s as the training set and 500 pairs as the test set. 
For the two alternative estimators, the prediction error is defined as the median of the integrated
squared errors $\mbox{ISE} = \int\lo 0\hi 1 \{\hat{Y}\lo i(t) - Y\lo i(t)\}\hi 2 dt$ calculated on the test set. 
The spaces $\ca H \lo X$ and $\ca H \lo Y$ are always constructed using the same kernel: either GRB or BMC, and GRB is always used to construct $\frak M \lo X$. We leverage the GCV criteria proposed in Section \ref{sec:tuning} to choose tuning parameters in $\ca H \lo X$, $\ca H \lo Y$ and $\frak M \lo X$.
To better assess the performance of our proposed method in comparison with other methods, each simulation scenario is repeated 200 times.

\subsection{Results for dense design} \label{sec:densesimul}
In the dense design, $X \lo i$ and $Y \lo i$ are observed at 50 equally spaced time points in [0, 1]. Table \ref{tab:densepred} summarizes the medians and the inter quartiles of the prediction errors for each method across the 200 simulation runs. Our method has much better prediction accuracy than its competitors regardless of the signal-to-noise level. Moreover, even when we used the wrong kernel, for instance when $X$ is generated by the BMC kernel but we use the GRB kernel to construct both $\ca H \lo X$ and $\ca H \lo Y$, our method still achieves satisfactory prediction accuracy. This demonstrates the robustness of our method against the choice of the kernel when constructing $\ca H \lo X$ and $\ca H \lo Y$.  

We next construct the pointwise confidence intervals described in Theorem \ref{theorem:clt}. 
In particular, we randomly selected one subject from  model 2 with $\sigma = 0.1$, and constructed a confidence interval for  $\E(Y | x\lo 0)(t)$ at any $t \in [0, 1]$. Figure \ref{fig:pointPI} displays the pointwise 95\% confidence intervals. Regardless of the choice of the kernel used to generate $X$ or construct $\ca H \lo X$ in model fitting, the intervals cover the true conditional mean reasonably well. In particular, the estimated conditional mean shows a relatively large fluctuation around 0.18 due to the shape of $\rho$ shown in the right panel Figure \ref{fig:design}. After around $t = 0.25$, the magnitude of $\rho$ becomes relatively smaller; it implies smaller variability of the true conditional mean at $t > 0.25$. Consequently, the pointwise confidence intervals become considerably narrower in this region, which is consistent with the shape of $\rho$.

\begin{table}[H] 
	\tabcolsep 0.1in
	\centering
	\caption{Summary of the medians and the interquartile ranges (in parentheses) of the prediction errors across the 200 simulation runs under different simulation scenarios for each method in the dense design. The column of $X$ indicates which kernel is used to generate $X$ in model 1 or 2, and the columns of NLFFR (GRB) and NLFFR (BMC) indicate which kernel is used to construct $\ca H \lo X$ and $\ca H \lo Y$ when using the proposed NLFFR.}
	{\renewcommand{\arraystretch}{1.5}
		\begin{tabular}{@{}|c|c|c|c|c|c|c|@{}}
			\hline 
			Model & $X$ & $\sigma$ & \multicolumn{4}{c|}{Methods}  \\
			\cline{1-7}
			& & & FPCA & PLFFR& NLFFR (GRB) & NLFFR (BMC) \\
			\multirow{4}{*}{1} & \multirow{2}{*}{GRB} & 0.1 & 6.23 (1.69) & 6.77 (1.52) & 3.44 (5.07) & 1.73 (1.21)  \\
			& & 2 & 8.65 (1.70) & 9.24 (1.54) & 5.82 (5.01) & 4.13 (1.06)  \\
			&  \multirow{2}{*}{BMC} & 0.1 & 1.21 (0.23) & 1.26 (0.12) & 0.56 (0.08) & 0.43 (0.06)     \\
			& & 2 & 2.63 (0.24) & 2.69 (0.17) & 1.76 (0.12) &  1.63 (0.11) 
			\\
			\hline
			\multirow{4}{*}{2} & \multirow{2}{*}{GRB} & 0.1 & 2.12 (0.20) & 3.01 (0.23) & 0.19 (0.08) & 0.20 (0.12)   \\
			& & 2 & 3.53 (0.27) & 4.15 (0.23) & 2.21 (0.27) & 2.24 (0.29)  \\
			&  \multirow{2}{*}{BMC} & 0.1 & 1.74 (0.19) & 2.43 (0.22) & 0.47 (0.32) &  0.45 (0.30)   \\
			& & 2 & 3.31 (0.26) & 3.86 (0.26) & 2.68 (0.42) &  2.68 (0.39) \\
			\hline
	\end{tabular}}
	\label{tab:densepred}
\end{table}

\begin{figure}[H]
	\centering
	\includegraphics[width=14cm]{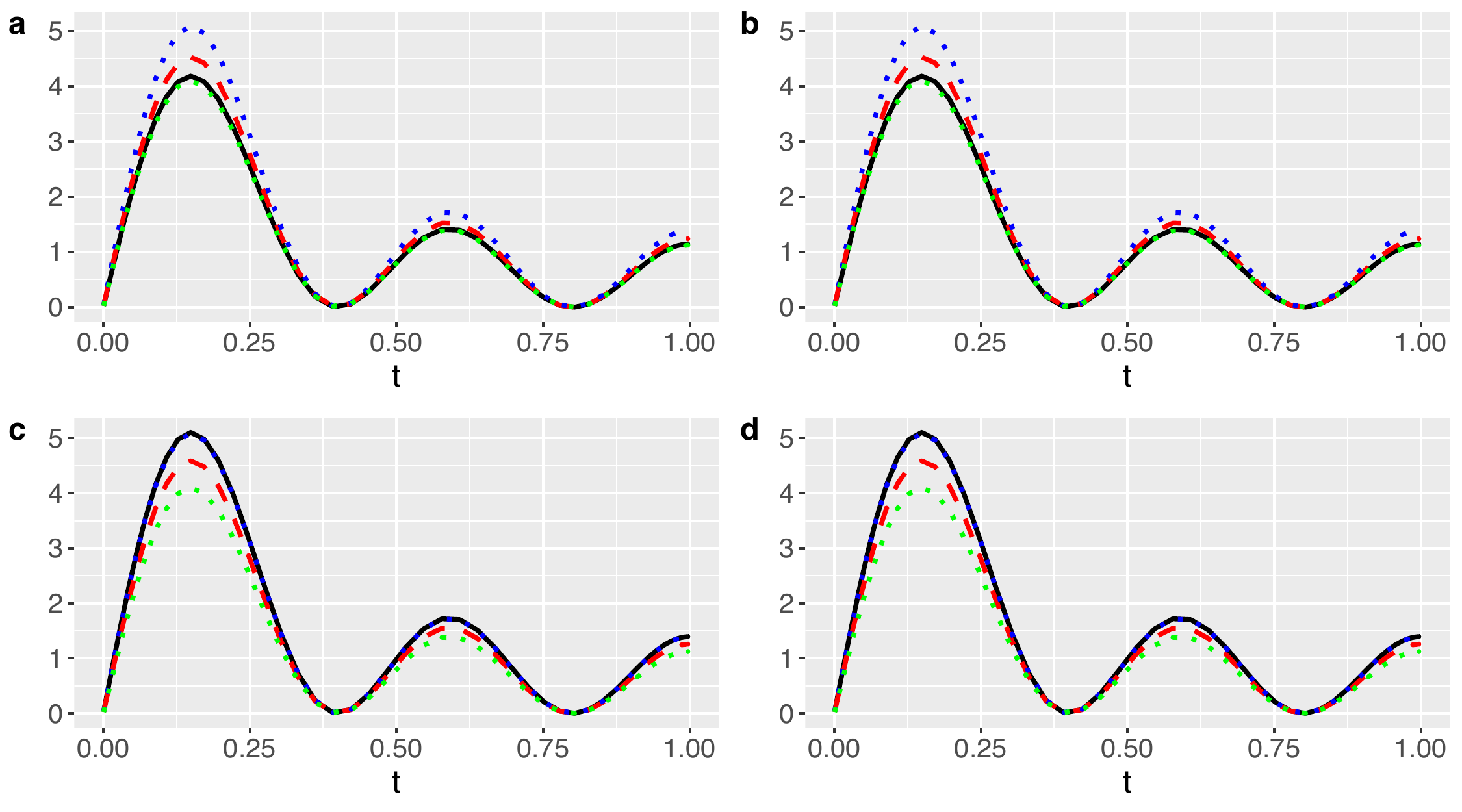}
	\caption{Pointwise confidence intervals for one randomly sampled subject in the test set from model 2. (a) \& (b): $X$ is generated by the GRB kernel, while in model fitting $\ca H \lo X$ is constructed via the the GRB and the BMC kernel, respectively. (c) \& (d): $X$ is generated by the BMC kernel, while in model fitting $\ca H \lo X$ is constructed via the the GRB and the BMC kernel, respectively.
		In each panel, the solid black line represents the true conditional mean function $\E(Y | x\lo 0)(t)$, the red dashed line represents the estimated mean, and the blue and green dotted lines represent the upper and the lower bounds of 95\% pointwise confidence intervals, respectively.}
	\label{fig:pointPI}
	
\end{figure}

We further study the simultaneous confidence band of $\E(Y\lo {n+1} | X\lo {n+1})$ given by \eqref{eq-simulCI}. Estimation of $s\lo n$ in \eqref{eq-simulCI} is straightforward. To determine the value of $C(\alpha)$, we first calculate $\hat{U}\lo i$'s on the training set based on the observed $Y\lo i$ and the estimated mean. Then a plugged-in estimate of $\Sigma\lo {UU}$ is available. We generated a large number of sample paths of a centered Gaussian process with the estimated $\Sigma\lo {UU}$ as the covariance function. Let $Z\lo i(t), i = 1, \ldots, N$ denote the randomly generated sample paths. For each of them, $\sup\lo {t \in T}|Z\lo i(t)|$ is approximated by evaluating $|Z\lo i(t)|$ on a dense grid of $T$ and then taking the maximum. The value of $C(\alpha)$ is taken as the $\alpha$-upper empirical quantile of $\sup\lo {t \in T}|Z\lo i(t)|$'s. 
Table \ref{tab:denseCP} presents the average of the true coverage probabilities of the 95\% simultaneous confidence bands across the 200 simulation runs with $\sigma = 0.1$. The true coverage probabilities for both models 1 and 2 are close to the nominal level (95\%) in most cases. Note that the coverage probability for the design when $X$ is generated from the GRB kernel in model 1 is slightly lower than the nominal level. This result is consistent with as is shown in Table \ref{tab:densepred}: compared with other designs, the prediction accuracy of the proposed method is slightly worse in this design.

\begin{table}[H] 
	\tabcolsep 0.15in
	\centering
	\caption{Summary of the mean of the coverage probability of simultaneous confidence bands over the 200 simulation runs with $\sigma = 0.1$. The first column indicates the kernel used to construct $\ca H \lo X$ in model fitting, and the row with $X:(\cdots)$ indicates the kernel used to generate $X$ in either model 1 or 2. }
	
	\begin{tabular}{@{}llcccc@{}}
		\hline 
		\multirow{3}{*}{Method}  & \multicolumn{5}{c}{Model}  \\
		\cline{2-6}
		& \multicolumn{2}{c}{1} & & \multicolumn{2}{c}{2} \\
		\cline{2-3} \cline{5-6}
		& $X$: GRB & $X$: BMC & & $X$: GRB & $X$: BMC\\
		
		\hline
		
		GRB  & 0.896 & 0.976 & & 0.924 & 0.946   \\
		
		BMC & 0.898 & 0.972 & & 0.916 & 0.962    \\

		\hline
	\end{tabular}
	\label{tab:denseCP}
\end{table}

\subsection{Results for sparse design}
In the sparse design, $X \lo i$ and $Y \lo i$ are observed at 10 time points on [0, 1], randomly selected from the 50 equally spaced time points in the dense design. For the two alternative methods, we employ the principal component analysis through conditional expectation (PACE) method proposed by \cite{yao2005a} to recover each sparse trajectory first. Our method can still fit such data without any extra pre-processing. Prediction errors on the test set fitted by each method are summarized in Table \ref{tab:sprasepred}. In comparison with the dense case, our proposed method displays a similar advantage over the two competitors in terms of prediction accuracy.

\begin{table}[H] 
	\tabcolsep 0.1in
	\centering
	\caption{Summary of the medians and the interquartile ranges (in parentheses) of the prediction errors across the 200 simulation runs under different simulation scenarios for each method in the sparse design. The column of $X$ indicates which kernel is used to generate $X$ in model 1 or 2, and the columns of NLFFR (GRB) and NLFFR (BMC) indicate which kernel is used to construct $\ca H \lo X$ and $\ca H \lo Y$.} 
	{\renewcommand{\arraystretch}{1.5}
		\begin{tabular}{@{}|c|c|c|c|c|c|c|@{}}
			\hline 
			Model & $X$ & $\sigma$ & \multicolumn{4}{c|}{Methods}  \\
			\cline{1-7}
			& & & FPCA & PLFFR& NLFFR (GRB) & NLFFR (BMC) \\
			\multirow{4}{*}{1} & \multirow{2}{*}{GRB} & 0.1 & 6.35 (1.67) & 6.97 (2.73) & 3.48 (2.61) & 2.17 (1.14)  \\
			& & 2 & 9.12 (2.16) & 9.60 (3.11) & 5.92 (5.49) & 4.66 (1.28)  \\
			&  \multirow{2}{*}{BMC} & 0.1 & 1.10 (0.24) & 1.42 (0.27) & 0.65 (0.10) & 0.54 (0.09)     \\
			& & 2 & 2.54 (0.28) & 2.88 (0.27) & 1.89 (0.14) &  1.82 (0.16) 
			\\
			\hline
			\multirow{4}{*}{2} & \multirow{2}{*}{GRB} & 0.1 & 2.11 (0.22) & 3.01 (0.31) & 0.21 (0.07) & 0.20 (0.08)   \\
			& & 2 & 3.50 (0.29) & 4.12 (0.30) & 2.17 (0.24) & 2.16 (0.25)  \\
			&  \multirow{2}{*}{BMC} & 0.1 & 1.77 (0.18) & 2.46 (0.25) & 0.43 (0.18) &  0.46 (0.19)   \\
			& & 2 & 3.21 (0.25) & 3.83 (0.28) & 2.60 (0.31) &  2.62 (0.32) \\
			\hline
	\end{tabular}}
	\label{tab:sprasepred}
\end{table}

\section{Data application} \label{sec-real}
In this section, we apply our proposed method and the aforementioned competitors to a data application. We are not only interested in predication accuracy of our method in real applications, but also the pointwise confidence intervals and the simultaneous confidence band introduced in 
Section \ref{sec-CLT}.

\begin{figure}[H]
	\centering
	\includegraphics[width=14cm]{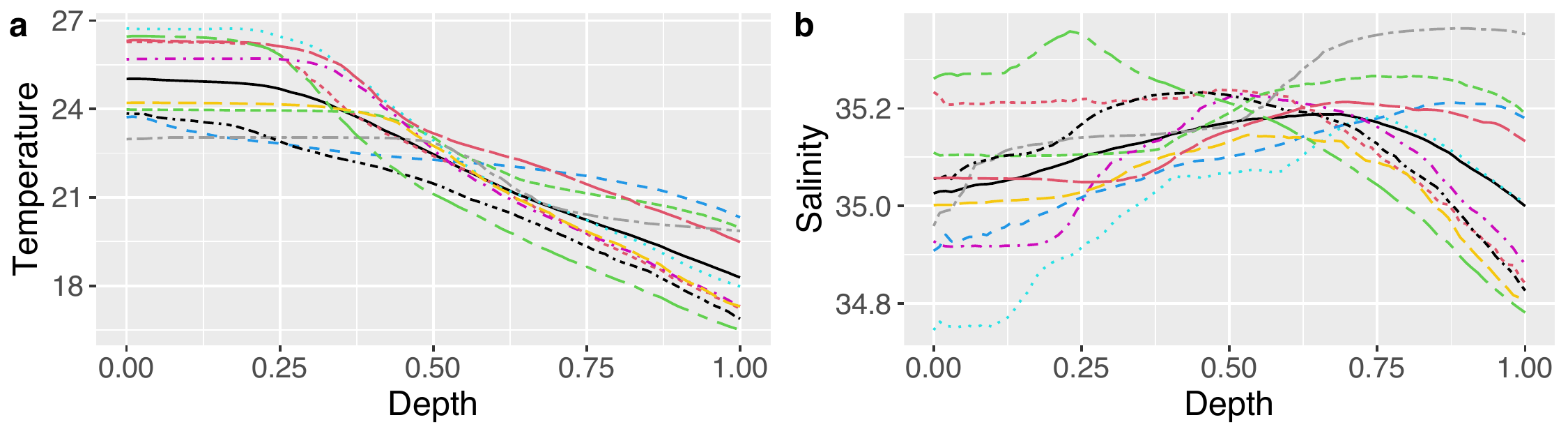}
	\caption{(a): ten sample curves of Temperature and the sample mean curve of them (black solid line). (b):  ten sample curves of Salinity and the sample mean curve of them (black solid line). }
	\label{fig:ocean}
\end{figure}

As indicated by the website (http://hahana.soest.
	hawaii.edu/hot/hot-dogs/cextraction.html), the Hawaii Ocean Time-series (HOT) program has been collecting time course observations on the hydrography, chemistry and biology of the water column at a station north of Oahu, Hawaii since October 1988. One goal of this program is to learn about concentrations of some materials in the upper water column (0 - 200 m below the sea surface). With the aid of CTD sampling support, profiles of temperature, salinity, oxygen and potential density as a function of pressure (or equivalently depth) are available. In our study, we took a portion of the whole data set. The data set has five variables: Temperature, Salinity, Potential Density, Oxygen and Chloropigment and, in a single day, each of them has 101 measurements, one per two meters from 0 to 200 meters. They can be treated as a function of depth, and trajectories collected from different days are viewed as different sample curves. There are 116 sample curves in total for each variable. 

In this study, we are interested in using the trajectories of Temperature to predict those of Salinity. 
As indicated by \cite{good2013}, Temperature is strongly associated with Salinity and there exists a nonlinear relationship between them. This assertion can be further justified by Figure \ref{fig:ocean}, which shows the trajectories of Temperature and Salinity of 10 randomly selected samples, where the depth was rescaled to [0, 1] from [0, 200]. The trends of these two groups of mean curves suggest that Temperature decreases as the depth increases, whereas Salinity goes up first and then drops down as depth increases. Additionally, the response variable, Salinity, displays more variability near the boundary than in the interior region.

To evaluate prediction accuracy of each method, we randomly and evenly split the whole data set into a training set and a test set. 
Each method was fitted to the training set and then the fitted function-on-function regresion was used to predict the response in the test set. This process was repeated $M = 200$ times to assess variability in predictions. The medians and the interquartile ranges
of the prediction errors across the 200 splits are shown in Table \ref{tab:Ocean}.
Our proposed method greatly outperforms the two competitors and the poor performances of the FPCA and PLFFR methods indicate that the relationship between Salinity and Temperature cannot be adequately fitted by a linear function-one-function regression model. 

\begin{table}[H] 
	\centering	
	\caption{Summary of the averages and standard deviations (in parentheses) of the prediction errors across the 200 random splits.}
	\begin{tabular}{|c|c|c|c|c|}
		\hline
		& FPCA & PLFFR & NLFFR (GRB) & NLFFR(BMC)  \\
		\hline
		median	&  $1.24 \times 10 \hi 3$ & $1.24 \times 10 \hi 3$  & $1.20 \times 10 \hi {-2}$ & $1.22 \times 10 \hi {-2}$  \\
		\hline
		IQR	 & (0.51) & (38.78) & ($1.48 \times 10 \hi {-3}$) &  ($1.48 \times 10 \hi {-5}$) \\
		\hline
	\end{tabular} 
	\vspace*{-5pt}
	\label{tab:Ocean}
\end{table}

\begin{figure}[H]
	\centering
	\includegraphics[width=14cm]{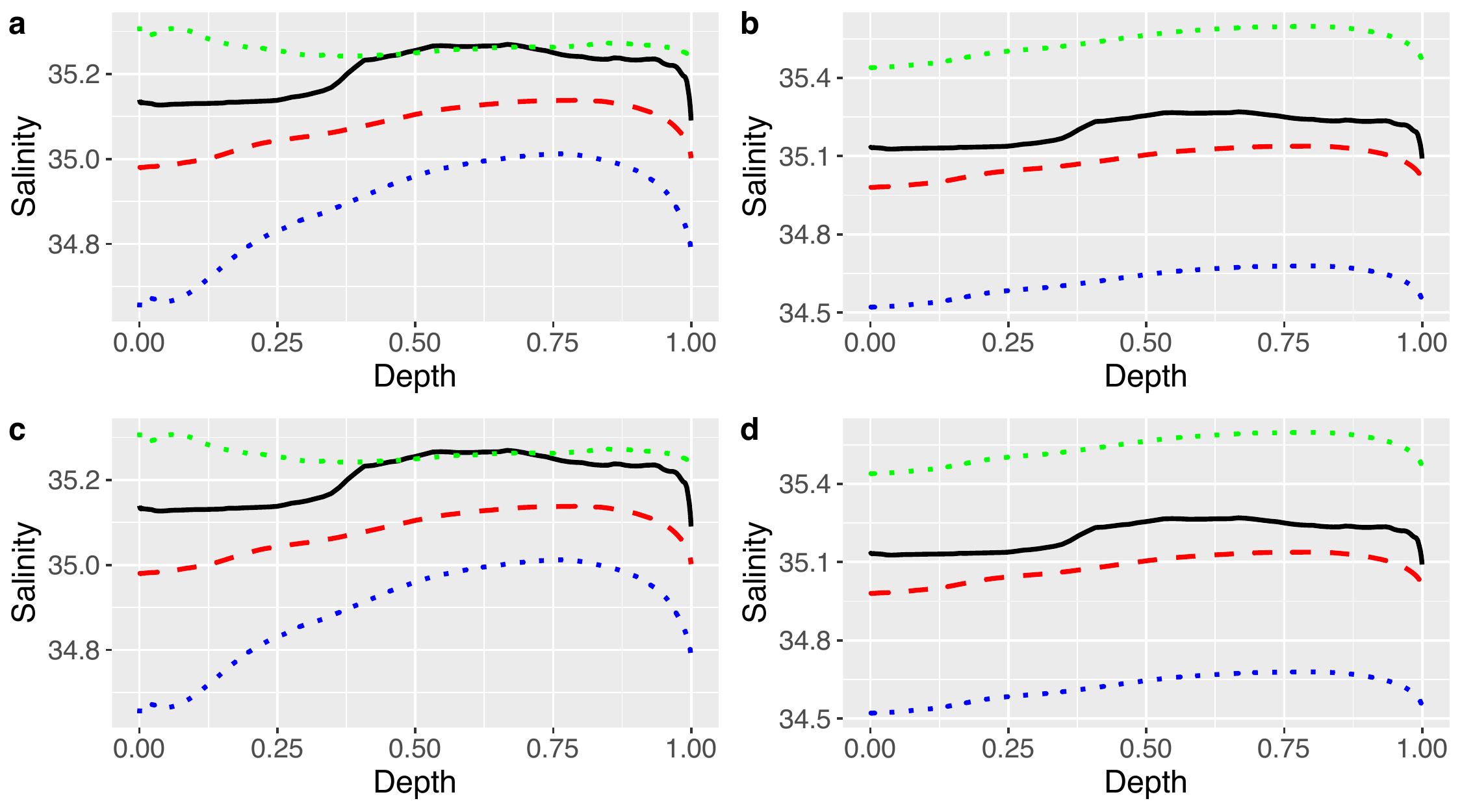}
	\caption{(a) \& (b): Pointwise confidence intervals and simultaneous confidence bands constructed from the GRB kernel for one random sample in the test set. (c) \& (d): Pointwise confidence intervals and simultaneous confidence bands constructed from the BMC kernel for one random sample in the test set. In each panel, the solid black line represents the observed trajectory of $Y$, the red dashed line represents the estimated mean, and the blue and green dotted lines represent the upper and the lower bounds of 95\% pointwise (or simulataneous) confidence intervals, respectively. }
	\label{fig:oceanCI}
\end{figure}

We also constructed the pointwise confidence intervals defined by Theorem \ref{theorem:clt} and the simultaneous confidence band by \eqref{eq-simulCI} for this regression problem. Figure \ref{fig:oceanCI} shows both the 95\% pointwise confidence intervals and the 95\% simultaneous confidence band constructed by the GRB and the BMC kernels for one randomly selected sample from the test set. The shapes of both the pointwise confidence intervals and the simultaneous bands are similar under these two kernels. It implies that pointwise confidence intervals and simultaneous confidence bands are robust to the choice of the kernel used to construct $\ca H \lo X$ and $\ca H \lo Y$. Not surprisingly, the simultaneous confidence band is wider than the pointwise confidence intervals for both kernels. Furthermore, the two left panels of Figure \ref{fig:oceanCI} indicate that the predicted mean response tends to be more variable near the boundary in comparison with the interior region. This finding is consistent with what we have seen from the right panel of Figure \ref{fig:ocean}.

\section{Conclusions} \label{sec-conc}

In this paper we have proposed a nonlinear function-on-function regression model based on a linear operator in RKHS. Compared with the current linear function-on-function regression approaches, our approach shows a remarkable improvement in prediction accuracy. In addition, with the aid of nested Hilbert spaces, our method avoids the large number of parameters that need to be estimated when the tensor products of spline basis functions or the eigenfunctions of the predictor and response are deployed in linear function-on-function regression (\citealp{ramsay2005}; \citealp{yao2005b}; \citealp{sun2018optimal}). The estimation procedure can accommodate irregularly and sparsely observed functional predictor and response.

Existing asymptotic development on function-on-function regression was focused on consistency and convergence rates. For instance, \cite{sun2018optimal} studied the minimax rate in mean prediction using an RKHS-based approach. Both consistency and the convergence rate were established by \cite{luo2017} in a linear function-on-function regression model. However, little work has been done to develop statistical inferences for function-on-function regression. Though there were several precursors in this regard [see \cite{yao2005b} and \cite{crambes2013} for example], they were mainly concerned with linear models. In comparison, our theoretical development includes both convergence rate, pointwise and uniform central limit theorem of the regression estimate.

\setcounter{equation}{0}
\setcounter{figure}{0}
\setcounter{table}{0}
\makeatletter
\renewcommand{\theequation}{A\arabic{equation}}
\renewcommand{\thefigure}{A\arabic{figure}}
\renewcommand{\bibnumfmt}[1]{[A#1]}
\renewcommand{\citenumfont}[1]{A#1}



\section*{Appendix}
In this section we provide the proofs of the theorems, lemmas, and corollaries in the manuscript. The equation labels such as (1) and (2) are for the equations in the manuscript; equation labels such as (A1) and (A2) are for the equations in this appendix.

\begin{proof}[Proof of Theorem \ref{th-sol}]
	Denote $g - \E (\langle g, Y \rangle \lo {\ca H \lo Y} )$ and $(B g) - \E[(Bg)(X)]$ by $g\lo c$ and $(Bg)\lo c$, respectively. 
	Note that 
	\begin{align*}
		\ali \quad \E [ \{\langle g\lo c, Y \rangle \lo {\ca H \lo Y}  - (B g)\lo c(X)  \}  \hi 2 ] \\
		\ali = \E [ \{ \langle g\lo c, Y \rangle \lo {\ca H \lo Y} - (B \lo 0  g)\lo c(X) + (B \lo 0  g)\lo c(X) - (B  g)\lo c(X)  \} \hi 2 ] \\
		\ali = \E [ \{ \langle g\lo c, Y \rangle \lo {\ca H \lo Y} - (B \lo 0  g)\lo c(X)  \} \hi 2 ]
		+2  \E \{[( \langle g\lo c, Y \rangle \lo {\ca H \lo Y} - (B \lo 0  g)\lo c(X)][ (B \lo 0  g)\lo c(X) - (B  g)\lo c(X) ] \} \\
		\ali \quad+   \E [ \{ (B \lo 0  g)\lo c(X) - (B  g)\lo c(X)  \} \hi 2 ]. 
	\end{align*}
	The cross product term is
	\begin{align*}
		\ali \quad\E \{[\langle g\lo c, Y \rangle \lo {\ca H \lo Y} - (B \lo 0  g)\lo c(X)][ (B \lo 0  g)\lo c(X) - (B  g)\lo c(X) ] \} \\
		\ali =  \E \{( \langle g\lo c, Y \rangle \lo {\ca H \lo Y}) (B \lo 0  g)\lo c(X)  \}   - \E \{( \langle g\lo c, Y \rangle \lo {\ca H \lo Y})  (B  g)\lo c(X)  \} \\
		\ali\quad - \E \{ (B \lo 0  g)\lo c(X) (B \lo 0  g)\lo c(X)  \}  + \E \{ (B \lo 0  g)\lo c(X)  (B  g)\lo c(X)  \} \\
		\ali =  \langle g, \Sigma \lo {YX} B \lo 0 g \rangle \lo {\ca H \lo Y}   - \langle g, \Sigma \lo {YX} B g \rangle \lo {\ca H \lo Y} - \langle B \lo 0 g, \Sigma \lo {XX} B \lo 0 g \rangle \lo {\frak M \lo X}    + \langle B \lo 0 g, \Sigma \lo {XX} B g \rangle \lo {\frak M \lo X}  \\
		\ali =  \langle g, \Sigma \lo {YX} \Sigma \lo {XX} \mpinv \Sigma \lo {XY} g \rangle \lo {\ca H \lo Y}  - \langle g, \Sigma \lo {YX} B g \rangle \lo {\ca H \lo Y} 
		- \langle \Sigma \lo {XX} \mpinv \Sigma \lo {XY}  g, \Sigma \lo {XX} \Sigma \lo {XX} \mpinv \Sigma \lo {XY}  g \rangle \lo {\frak M \lo X}  \\
		\ali\quad + \langle \Sigma \lo {XX} \mpinv \Sigma \lo {XY}  g, \Sigma \lo {XX} B g \rangle \lo {\frak M \lo X} \\
		\ali = 0,
	\end{align*} 
	where the last equality holds since $\Sigma \lo {XX}\Sigma \lo {XX} \mpinv$ is an identity mapping from $\ran(\Sigma \lo {XX})$ onto $\ran(\Sigma \lo {XX})$. Therefore, 
	\begin{align*}
		\ali \quad\E [ \{ \langle g\lo c, Y \rangle \lo {\ca H \lo Y} - (B g)\lo c(X)  \} \hi 2 ] \\
		\ali = \E [ \{ \langle g\lo c, Y \rangle \lo {\ca H \lo Y}- (B \lo 0  g)\lo c(X)  \} \hi 2 ]
		+   \E [ \{ (B \lo 0  g)\lo c(X) - (B  g)\lo c(X)  \} \hi 2 ] \\
		\ali \ge \E [ \{ \langle g\lo c, Y \rangle \lo {\ca H \lo Y} - (B \lo 0  g)\lo c(X)  \} \hi 2 ]
	\end{align*}
	as desired. 
\end{proof}

\begin{proof}[Proof of Proposition \ref{prop-reg}]
	Take an arbitrary $h \in \frak M \lo X$. Then we have
	\begin{align*}
		\ali \cov[\langle g, Y \rangle \lo {\ca H \lo Y} - (B\lo 0g)(X), h(X) ] \\
		\ali = \cov[\langle g, Y \rangle \lo {\ca H \lo Y}, h(X)] - \cov[(B\lo 0g)(X), h(X)]\\
		\ali = \langle h, \Sigma \lo {XY} g \rangle \lo {\frak M \lo X} - \langle \Sigma \lo {XX} (B\lo 0g), h \rangle \lo {\frak M \lo X} \\
		\ali = \langle h, \Sigma \lo {XY} g \rangle \lo {\frak M \lo X} - \langle \Sigma \lo {XX} \Sigma \lo {XX} \mpinv \Sigma \lo {XY}g, h \rangle \lo {\frak M \lo X} \\
		\ali = 0, 
	\end{align*}
	where the last equation holds since $\Sigma \lo {XX}\Sigma \lo {XX} \mpinv$ is an identity mapping from $\ran(\Sigma \lo {XX})$ onto $\ran(\Sigma \lo {XX})$. 
	Since $h$ is an arbitrary function chosen from $\frak M \lo X$ and $\frak M \lo X + \R$ is dense in $L\lo 2(P\lo X)$, 
	$\E[\langle g, Y \rangle \lo {\ca H \lo Y} | X] - (B\lo 0g)(X)$ must be a constant. Taking an unconditional expectation leads to \eqref{eq-reg}. 
\end{proof}

\begin{proof}[Proof of Proposition \ref{prop-condE}]
	Because $\Sigma\lo{YX}\Sigma\lo{XX}\mpinv$ is bounded, its domain can be extended 
	from $\ran(\Sigma\lo{XX})$ to $\cran(\Sigma\lo{XX})$, which is $\frak M \lo X \hi 0$. 
	Therefore, we take the domain of $B\lo 0\hi *$ as $\frak M \lo X \hi 0$. 
	To show \eqref{eq-condE}, it suffices to verify that for any $h \in \ca H \lo Y$, 
	\begin{equation}
		\langle \E(Y\lo c | X), h \rangle \lo {\ca H \lo Y} = \langle B\lo 0 \hi * \ka\lo c(\cdot, X), h\rangle \lo {\ca H \lo Y}.
		\label{eq-condrep}
	\end{equation}
	Obviously, the left-hand side is $\E[\langle h, Y\lo c \rangle \lo {\ca H \lo Y}| X]$, which, by Proposition \ref{prop-reg}, is equal to $(B\lo 0h)(X) - \E[(B\lo 0h)(X)]$. 
	The right-hand side of \eqref{eq-condrep} is 
	\begin{align*}
		\langle B\lo 0 \hi * \ka\lo c(\cdot, X), h\rangle \lo {\ca H \lo Y}
		\ali = \langle \ka\lo c(\cdot, X), B\lo 0h \rangle \lo {\frak M \lo X} \\
		\ali = (B\lo 0h)(X) - \E[(B\lo 0h)(X)],
	\end{align*}
	which agrees with the  left-hand side of \eqref{eq-condrep}. 
\end{proof}

\begin{proof}[Proof of Lemma \ref{lemma:reg and res}]
	{\em 1}. Since, for any $f \lo 1, f \lo 2 \in  \frak M \lo X$, $f \lo 1 \otimes (B \lo 0 \hi * f \lo 2) = (f \lo 1\otimes f \lo 2) B \lo 0$, we have
	\begin{align*}
		\Sigma \lo {XU} = \ali \E [ (\ka (\cdot, X) - \mu \lo X) \otimes \{Y  - \mu \lo Y - B \lo 0 \hi * (\ka (\cdot, X) - \mu \lo X)\}] \\
		= \ali  \E [ (\ka (\cdot, X) - \mu \lo X) \otimes  (Y - \mu \lo Y)  ] - \E [ (\ka (\cdot, X) - \mu \lo X) \otimes (B \lo 0 \hi * (\ka (\cdot, X) - \mu \lo X))] \\
		= \ali \Sigma \lo {XY} - \Sigma \lo {XX} B \lo 0 = 0.
	\end{align*}
	
	\noindent {\em 2}. By definition,
	\begin{align*}
		\hat \Sigma \lo {XU} = \ali \E \lo n
		[ (\ka (\cdot, X) - \hat \mu \lo X) \otimes \{Y - \mu \lo Y - B \lo 0 \hi * (\ka (\cdot, X) - \mu \lo X)\}] \\
		= \ali
		\E \lo n
		[ (\ka (\cdot, X) - \hat \mu \lo X) \otimes  (Y - \mu \lo Y) ]
		- \E \lo n
		[ (\ka (\cdot, X) - \hat \mu \lo X) \otimes \{B \lo 0 \hi * (\ka (\cdot, X) - \mu \lo X)\} ].
	\end{align*}
	The first term on the right is $\hat \Sigma \lo {XY}$. Since $\E \lo n
	[\ka (\cdot, X) - \hat{\mu} \lo X] = 0$, the second term is unchanged if we replace $\mu \lo X$ in $\ka (\cdot, X) - \mu \lo X$ by $\hat \mu \lo X$. Thus it can be rewritten as
	\begin{align*}
		\E \lo n
		[ (\ka (\cdot, X) - \hat \mu \lo X) \otimes (B \lo 0 \hi * (\ka (\cdot, X) - \hat \mu \lo X)) ] = \hat{\Sigma} \lo {XX} B \lo 0,
	\end{align*}
	as desired.  
\end{proof}

\begin{proof}[Proof of Lemma \ref{lemma:hat vs tilde}]
	By the definitions of $\hat \Sigma \lo {XU}$ and $\tilde \Sigma \lo {XU}$ and some simple calculation, we have
	\begin{align}\label{eq:difference}
		\hat \Sigma \lo {XU} - \tilde \Sigma \lo {XU} = (\hat \mu \lo X - \mu \lo X) \otimes \hat \mu \lo U.
	\end{align}
	Hence
	\begin{align*}
		\| \hat \Sigma \lo {XU} - \tilde \Sigma \lo {XU} \| \loo{HS} = \| (\hat \mu \lo X - \mu \lo X ) \otimes \hat \mu \lo U \| \loo {HS} = \| \hat \mu \lo X - \mu \lo X \| \lo {\frak M \lo X } \, \| \hat \mu \lo U \| \lo {\ca H \lo Y}.
	\end{align*}
	By Chebychev's inequality, it can be easily shown that $\| \hat \mu \lo X - \mu \lo X \| \lo {\frak M \lo X } = O \lo P(n \hi {-1/2})$ and $\| \hat \mu \lo U \| \lo {\ca H \lo Y} = O \lo P (n \hi {-1/2})$, which imply the asserted result. 
\end{proof}

\begin{proof} [Proof of Lemma \ref{lemma:calculus}]
	Let $m \lo n = \lfloor \epsilon \lo n \hi {-1/\alpha} \rfloor$. Then, by Assumption \ref{assumption:KL for kappa}({ii}), 
	\begin{align}\label{eq:integral approximation}
		\begin{split}
			\tsum \lo {j \in \nat}(\lambda \lo i + \epsilon \lo n ) \hi {-2} \lambda  \lo j
			\le \ali \tsum \lo {j = 1} \hi { m \lo n} \lambda \lo i \hi {-1}
			+
			\epsilon \lo n \hi {-2} \tsum \lo {j = m \lo n + 1} \hi \infty \lambda  \lo j  \\
			\asymp \ali \int \lo 1 \hi {m \lo n} x \hi \alpha dx
			+
			\epsilon \lo n \hi {-2} \int \lo {m \lo n} \hi \infty x \hi {-\alpha } d x   \asymp \epsilon \lo n \hi {-(\alpha+1)/\alpha},
		\end{split}
	\end{align}
	as desired.
\end{proof}

\begin{proof}[Proof of Theorem \ref{theorem:convergence rate}]
	{\em 1}. Using Lemma \ref{lemma:reg and res}, we decompose $\hat B$ as $\hat B \loo {reg} + \hat B \loo {res}$, where
	\begin{align*}
		\hat B \loo {reg} = \hat V \hat \Sigma \lo {XX} B \lo 0, \quad \hat B \loo {res} = \hat V \hat \Sigma \lo {XU}.
	\end{align*}
	As suggested by the notation, $\hat B \loo {reg}$ represents the regression part of $\hat B$, whereas $\hat B \loo {res}$ the residual part.
	Let $B \lo n = V \lo n \Sigma \lo {XY}$,  which represents the population-level approximation of $B \lo 0$ via Tychonoff regularization, and further decompose $\hat B \loo {reg}$ as $\hat B \loo {reg}- B \lo n + B \lo n$. We have
	\begin{align*}
		\hat B - B \lo 0
		=   \hat B \loo {res} +   (\hat B \loo {reg} - B \lo n)  +   ( B \lo n - B \lo 0 ).
	\end{align*}

	We first analyze the regression term $\hat B \loo {reg} - B \lo n$. By construction,
	\begin{align*}
		\hat B \loo {reg} - B \lo n=\hat V \hat \Sigma \lo {XX} B \lo 0 - B \lo n = \hat V \hat \Sigma \lo {XX} B \lo 0 - V \lo n \Sigma \lo {XX} B \lo 0 = (\hat V \hat  \Sigma \lo {XX} - V \lo n \Sigma  \lo {XX} ) B \lo 0.
	\end{align*}
	Since $\hat V$ and $\hat \Sigma \lo {XX}$ commute, and $V \lo n$ and $\Sigma \lo {XX}$ commute, we can rewrite
	\begin{align*}
		\hat V \hat  \Sigma \lo {XX} -\Sigma  \lo {XX}  V \lo n = \ali \hat V (\hat \Sigma \lo {XX} V \lo n \inv - \hat V \inv  \Sigma \lo {XX} ) V \lo n = \epsilon \lo n \hat V (\hat \Sigma \lo {XX} - \Sigma \lo {XX} ) V \lo n.
	\end{align*}
	Therefore,
	\begin{align*}
		\|\hat B \loo {reg} - B \lo n \| \loo {OP}
		\le \ali \|  \epsilon \lo n \hat V \| \loo {OP} \, \|\hat \Sigma \lo {XX} - \Sigma \lo {XX} \| \loo {OP} \, \| V \lo n B \lo 0\| \loo {OP}   = O \lo P ( n \hi {-1/2} )  \, \| V \lo n B \lo 0 \| \loo {OP},
	\end{align*}
	where the second equality holds because  $\|\hat \Sigma \lo {XX} - \Sigma \lo {XX} \| \loo {OP} = O \lo P (n \hi {-1/2})$ (Lemma \ref{lemma:fuku et al}) and $\| \epsilon \lo n \hat V \| \loo {OP} \le \| I \| \loo {OP} = 1$. By Assumption \ref{assumption:finte moments and beta}(ii),   $  V \lo n B \lo 0  = V \lo n V \Sigma \lo {XX}  \hi {1+\beta} S \lo {XY}$ for a bounded operator $S \lo {XY}$. Hence if $ \beta \in (0, 1]$, 
	\begin{align*}
		\nonumber
		\| V \lo n B \lo 0 \| \loo {OP} & = \| V \lo n V \Sigma \lo {XX}  \hi {1+\beta} S \lo {XY} \| \loo {OP} = \| V \lo n \Sigma \lo {XX} \hi {\beta} \| \loo {OP} \| S \lo {XY}\| \loo {OP} \\
		\nonumber 
		& \leq \|(\Sigma\lo{XX} + \epsilon\lo n I) \hi {-1 + \beta}\| \loo{OP} \| S \lo {XY}\| \loo {OP} \\
		\nonumber
		& = \epsilon \lo n \hi {\beta - 1} \|(\Sigma\lo{XX} + \epsilon\lo n I) \hi {-1 + \beta} (\epsilon\lo n I) \hi {1 - \beta}\|\loo{OP} \| S \lo {XY}\| \loo {OP} \\
		&	= O (\epsilon \lo n \hi {\beta - 1}).
	\end{align*}
	If $\beta > 1$, one has
	\begin{align*}
		\| V \lo n \Sigma \lo {XX} \hi {\beta} \| \loo {OP}  \leq  \|(\Sigma\lo{XX} + \epsilon\lo n I) \inv \Sigma\lo{XX}\| \loo{OP} \|\Sigma\lo{XX}\|\loo{OP} = O(1).
	\end{align*}
	It follows that
	\begin{equation} \label{eq:Vn B0}
		\| V \lo n B \lo 0 \| \loo {OP} = O(\epsilon \lo n \hi {\beta \wedge 1 - 1}).
	\end{equation}
	Consequently,
	\begin{align}\label{eq:A n2}
		\|\hat B \loo {reg} - B \lo n \| \loo {OP}= O \lo P ( n \hi {-1/2}\epsilon \lo n \hi {  \beta \wedge 1 - 1} ).
	\end{align}

	Secondly, we analyze the bias term $ B \lo n - B \lo 0 $. Since
	\begin{align*}
		( V \lo n - V  )\Sigma \lo {XY} = V \lo n  [ \Sigma \lo {XX} - (\Sigma \lo {XX}  + \epsilon \lo n  I ) ] V  \Sigma \lo {XY}= -  \epsilon \lo n  V \lo n B \lo 0,
	\end{align*}
	we have, by (\ref{eq:Vn B0}),
	\begin{align}\label{eq:A n3}
		\|B \lo n - B \lo 0 \| \loo {OP}= O (\epsilon \hi {\beta \wedge 1}).
	\end{align}

	Thirdly, we analyze  $\hat B \loo {res}$, which can be further decomposed as
	\begin{align*}
		(\hat V \hat \Sigma \lo {XU} - \hat V \tilde \Sigma \lo {XU}  )  + (\hat V \tilde \Sigma \lo {XU} - V \lo n \tilde  \Sigma \lo {XU} ) +   V \lo n \tilde  \Sigma \lo {XU}.
	\end{align*}
	Since $\| \hat V \| \loo {OP} \le \epsilon \lo n \inv \| I \| \loo {OP} = \epsilon \lo n \inv$ and, by Lemma \ref{lemma:hat vs tilde},   $\| \hat \Sigma \lo {XU} - \tilde \Sigma \lo {XU} \| \loo {OP} = O \lo P ( n \inv)$, we have
	\begin{align}\label{eq:A n11}
		\|\hat V \hat \Sigma \lo {XU} - \hat V \tilde \Sigma \lo {XU} \| \loo {OP} =O \lo P ( n \hi {-1} \epsilon \lo n \inv).
	\end{align}
	Since $\hat V - V \lo n = \hat V ( \Sigma \lo {XX} - \hat \Sigma \lo {XX} ) V \lo n$, we have
	\begin{align}\label{eq:to be combined}
		\begin{split}
			\|\hat V \tilde \Sigma \lo {XU} - V \lo n \tilde  \Sigma \lo {XU}  \| \loo {OP} \le \ali  \| \hat V \| \loo {OP} \, \| \Sigma \lo {XX} - \hat \Sigma \lo {XX} \| \loo {OP} \, \| V \lo n \tilde \Sigma \lo {XU}   \| \loo {OP} \\
			= \ali  O \lo P ( n \hi {-1/2} \epsilon \lo n \inv ) \, \| V \lo n \tilde \Sigma \lo {XU}\| \loo {OP}.
		\end{split}
	\end{align}
	The term $\| V \lo n \tilde \Sigma \lo {XU} \| \loo {OP}$ is bounded by  $\|  V \lo n \tilde \Sigma \lo {XU} \| \loo {HS}$, whose square is
	\begin{align*}
		\|  V \lo n \tilde \Sigma \lo {XU} \| \loo {HS} \hi 2=\ali  \|  n \inv \tsum \lo {i=1} \hi n V \lo n [(\ka (\cdot, X \lo i) - \mu \lo X) \otimes U \lo i] \| \loo {HS} \hi 2.
	\end{align*}
	Since $\E[(\ka (\cdot, X \lo i) - \mu \lo X) \otimes U \lo i]=\Sigma \lo {XU}=0$ and $(X \lo 1, U \lo 1), \ldots, (X \lo n, U \lo n)$ are i.i.d., we have
	\begin{align*}
		\E ( \|  V \lo n \tilde \Sigma \lo {XU} \| \loo {HS} \hi 2 )
		=\ali   n \hi {-2} \tsum \lo {a=1} \hi n \tsum \lo {b=1} \hi n
		\E \left(\langle   V \lo n [(\ka (\cdot, X \lo a  ) - \mu \lo X) \otimes U  \lo a ], \,   V \lo n [(\ka (\cdot, X \lo b  ) - \mu \lo X) \otimes U  \lo b] \rangle \loo {HS} \hi 2 \right) \\
		=\ali   n \hi {-2} \tsum \lo {a=1} \hi n
		\E \left(\|   V \lo n [(\ka (\cdot, X \lo a  ) - \mu \lo X) \otimes U \lo a \| \loo {HS} \hi 2 \right) \\
		=\ali   n \hi {-1}
		\E \left(\|   V \lo n [(\ka (\cdot, X   ) - \mu \lo X) \otimes U  \| \loo {HS} \hi 2 \right).
	\end{align*}
	The squared Hilbert-Schmidt norm on the right-hand side is
	\begin{align*}
		\ali \|   V \lo n [(\ka (\cdot, X   ) - \mu \lo X) \otimes U  \| \loo {HS} \hi 2 \\
		\ali \hspace{1in} =\tsum \lo {j \in \nat} \,  \langle \varphi \lo j, \| U \| \hi 2 \langle \varphi \lo j, [ V \lo n ( \ka (\cdot, X) - \mu \lo X)] \otimes [V \lo n ( \ka (\cdot, X) - \mu \lo X )] \varphi \lo j \rangle \lo {\frak M \lo X} \\
		\ali \hspace{1in} = \| U \| \hi 2 \, \tsum \lo {j \in \nat} \, (\lambda \lo j + \epsilon \lo n ) \hi {-2}  \zeta \lo j \hi 2,
	\end{align*}
	where, for the last equality, we have used the expansion (\ref{eq:kappa expansion}).
	Taking expectation on both sides, and invoking the condition  $X \indep U$ (Assumption \ref{assumption:KL for kappa}(i)), we have
	\begin{align*}
		\E (\|   V \lo n [(\ka (\cdot, X   ) - \mu \lo X) \otimes U  \| \loo {HS} \hi 2) = \E( \| U \| \hi 2 )   \tsum \lo {j \in \nat} \, (\lambda \lo j + \epsilon \lo n ) \hi {-2} \lambda \lo j.
	\end{align*}
	By Lemma \ref{lemma:calculus}, the right hand side is of the order  $O(\epsilon \lo n \hi {-(\alpha+1) / \alpha})$. Hence $\E \| V \lo n \tilde \Sigma \lo {XU} \| \loo {HS} \hi 2$ is of the order $O (n \inv \epsilon \lo n \hi {-(\alpha+1)/\alpha})$, which, by Chebychev's inequality, implies
	\begin{align*}
		\| V \lo n \tilde \Sigma \lo {XU} \| \loo {HS} = O \lo P ( n \hi {-1/2} \epsilon \lo n \hi {-(\alpha+1)/(2\alpha)}) \,  \Rightarrow \, \| V \lo n \tilde \Sigma \lo {XU} \| \loo {OP} = O \lo P ( n \hi {-1/2} \epsilon \lo n \hi {-(\alpha+1)/(2\alpha)}).
	\end{align*}
	Combining this with (\ref{eq:to be combined}) we have
	\begin{align*}
		\|\hat V \tilde \Sigma \lo {XU} - V \lo n \tilde  \Sigma \lo {XU}  \| \loo {OP} =  O \lo P ( n \hi {-1/2} \epsilon \lo n \inv n \hi {-1/2} \epsilon \lo n \hi {-(\alpha+1)/(2\alpha)}) = O \lo P ( n \inv \epsilon \lo n \hi {- (3 \alpha + 1)/(2 \alpha)} ).
	\end{align*}
	So
	\begin{align}\label{eq:B res}
		\begin{split}
			\hat B \loo {res} = \ali   O \lo P ( n \hi {-1} \epsilon \lo n \inv + n \inv \epsilon \lo n \hi {- (3 \alpha + 1)/(2 \alpha)} + n \hi {-1/2} \epsilon \lo n \hi {-(\alpha+1)/(2\alpha)}) \\
			= \ali   O \lo P (n \inv \epsilon \lo n \hi {- (3 \alpha + 1)/(2 \alpha)} + n \hi {-1/2} \epsilon \lo n \hi {-(\alpha+1)/(2\alpha)}).
		\end{split}
	\end{align}
	Combining (\ref{eq:A n2}), (\ref{eq:A n3}),  and  (\ref{eq:B res}), we have (\ref{eq:goal 1}).
	
	{\em 2}. For the right-hand side of (\ref{eq:goal 1}) to go to 0 we need
	\begin{align*}
		n \hi {-1/2} \epsilon \lo n \hi {\beta\wedge 1 - 1} \prec 1, \quad    n\hi {-1} \epsilon \lo n \hi {- (3 \alpha+ 1)/ (2 \alpha)} \prec 1, \quad  n \hi {-1/2} \epsilon \lo  n \hi {-(\alpha + 1)/(2 \alpha)} ) \prec 1,
	\end{align*}
	which are satisfied if
	\begin{align*}
		\epsilon \lo n \succ n \hi {-1/[2 \{1-(\beta \wedge 1)\}]}, \quad \epsilon \lo n \succ n \hi {-(2 \alpha)/(3 \alpha+1)}, \quad \epsilon \lo n \succ n \hi {- \alpha/ (\alpha + 1)}.
	\end{align*}
	It is easy to check that, for $\alpha > 1$, we have $-(2 \alpha)/(3 \alpha+1) > - \alpha/ (\alpha + 1)$. Therefore, if the first two relations above hold, then the right-hand side of (\ref{eq:goal 1}) tends to 0.
\end{proof}

\begin{proof} [Proof of Theorem \ref{theorem:optimal rate}]
	{\em 1}. If $\beta > (\alpha - 1)/(2 \alpha)$, then $ \ell  \lo 1 ( \delta) < \ell \lo 4 (\delta)$ for all $\delta > 0$, and consequently
	\begin{align*}
		m (\delta) = \max\{ \ell \lo 2 (\delta), \ell \lo 3 ( \delta), \ell \lo 4 ( \delta) \}.
	\end{align*}
	By computation, the intersection of $\ell \lo  2$ and $\ell \lo 4$ occurs at $\delta \lo {2,4} = {\alpha}/{(2\alpha \beta+ \alpha + 1)}$, and the intersection of $\ell \lo 3$ and $\ell \lo 4$ occurs at $\delta \lo {3,4} = {1}/{2}$. Moreover,   $\beta > ( \alpha - 1) / (2 \alpha)$ implies $\delta \lo {2,4} < 1/2 = \delta \lo {3,4}$. Hence the relative positions of the three lines $\ell \lo 2$, $\ell \lo 3$, $\ell \lo 4$ are  as depicted in Figure 1, left panel, and the minimum of $\max\{ \ell \lo 2 (\delta), \ell \lo 3 ( \delta), \ell \lo 4 ( \delta) \}$ is achieved at $\delta \loo {opt} = \delta \lo {2,4}$, with $m (\delta \loo {opt} ) = \ell \lo 2 (\delta \lo {2,4}) =  -\alpha (\beta \wedge 1)/\{2 \alpha (\beta \wedge 1) + \alpha + 1\}$.

	\noindent {\em 2}. If $\beta \le (\alpha - 1)/(2 \alpha)$, the $ \ell  \lo 1 ( \delta) \ge \ell \lo 4 (\delta)$ for all $\delta > 0$, and
	\begin{align*}
		m (\delta) = \max\{\ell \lo 1 (\delta),  \ell \lo 2 (\delta), \ell \lo 3 ( \delta) \}.
	\end{align*}
	The intersection of $\ell \lo  1$ and $\ell \lo 2$ occurs at  $\delta \lo {1,2}  = 1/2$, and the intersection of $\ell \lo 1$ and $\ell \lo 3$ occurs at $\delta \lo {1,3} =  {\alpha}/{(2\alpha \beta+ \alpha + 1)}$. Moreover,  $\beta < ( \alpha - 1) / (2 \alpha)$ implies $\delta \lo {1,3} > 1/2 = \delta \lo {1,2}$. Hence the relative positions of $\ell \lo 1$, $\ell  \lo 2$ and $\ell \lo 3$ are as shown in right plot of Figure \ref{fig:delta}, and the minimum of $\max\{\ell \lo 1 ( \delta ),  \ell \lo 2 (\delta), \ell \lo 3 ( \delta) \}$ is achieved at $\delta \loo {opt} = \delta \lo {1,2}=1/2$ with $m (\delta \loo {opt} ) = \ell \lo 2 (\delta \lo {1,2}) = {- \beta/2}$. 
\end{proof}

\begin{figure}[H]
	\vspace{-.1in}
	\centering
	\includegraphics[width=.4 \textwidth,height=.4\textwidth]{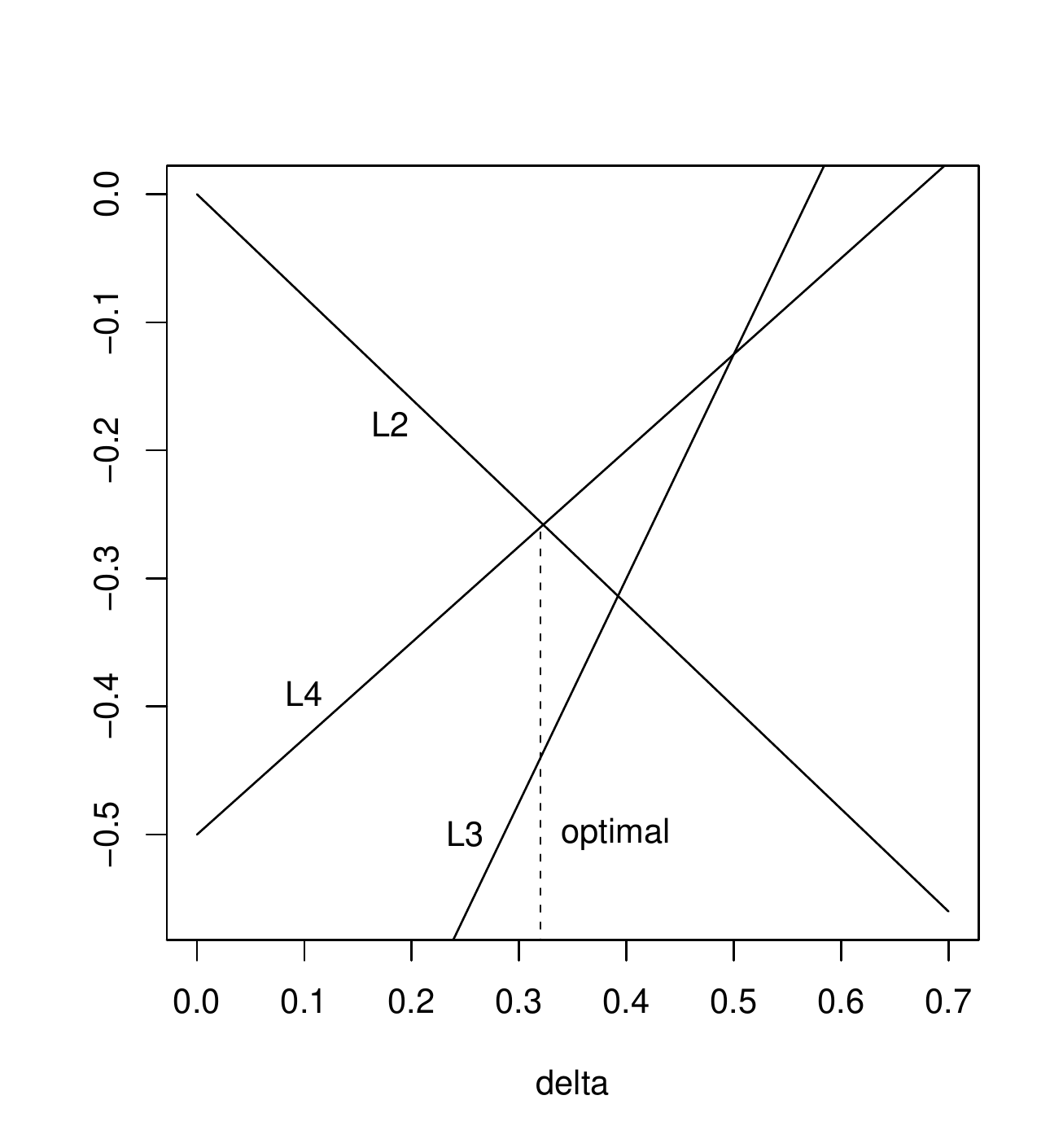}\hspace{.2in}\includegraphics[width=.4 \textwidth,height=.4\textwidth]{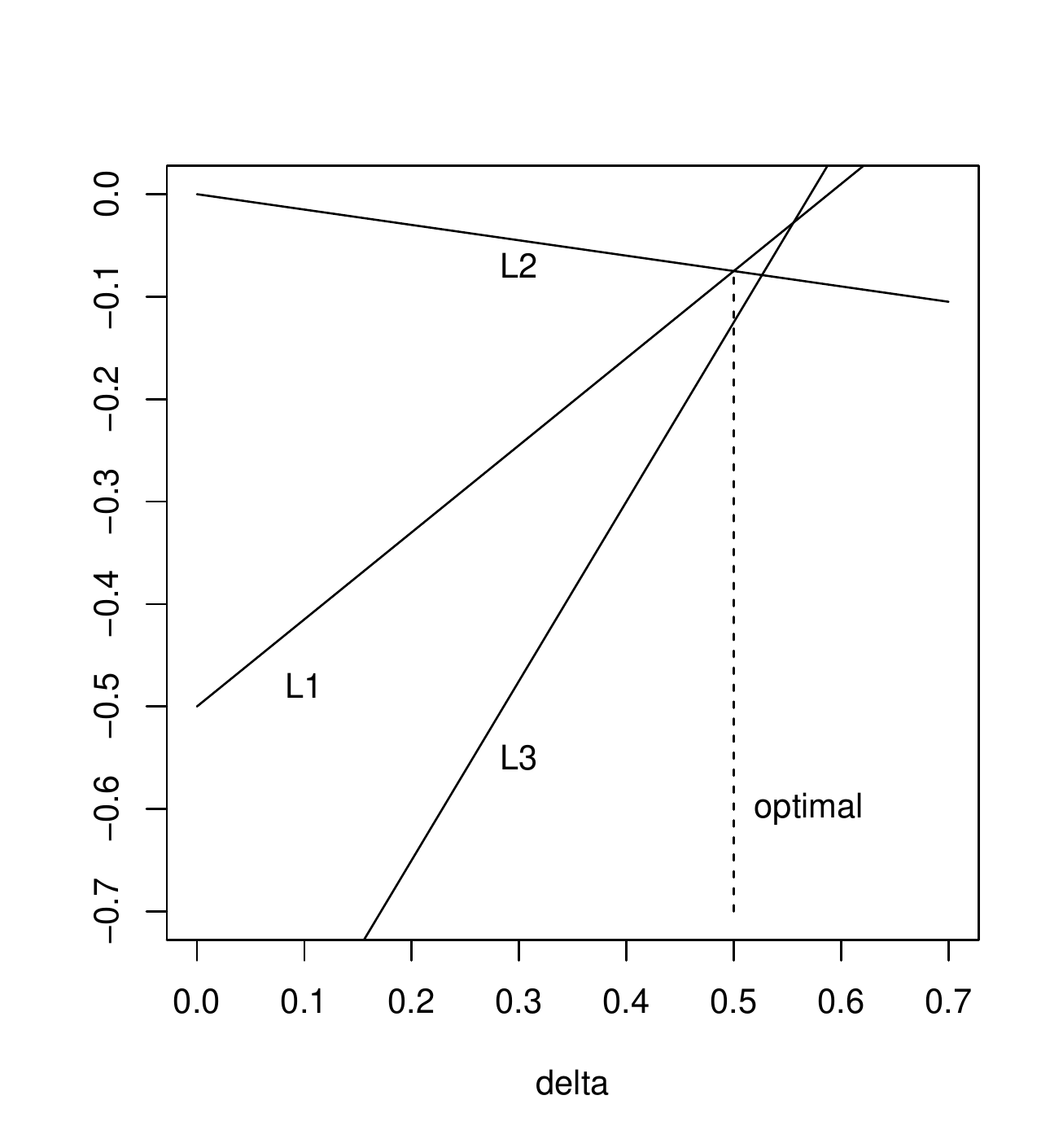}
	\vspace{-.2in}
	\caption{Optimal tuning parameter in two scenarios of $\beta$.  Left panel: L2, L3, L4 represent the lines $\ell \lo 2$, $\ell \lo 3$, $\ell \lo 4$  with $\beta > (\alpha-1)/(2 \alpha)$. Right panel: L1, L2, L3 represent the lines $\ell \lo 1$, $\ell \lo 2$, $\ell \lo 3$ with $\beta < (\alpha-1)/(2 \alpha)$.}
	\vspace{.3in}
	\label{fig:delta}
\end{figure}

\begin{proof}[Proof of Corollary \ref{corollary:regression estimate}]
	By (\ref{eq:true regression function}) and (\ref{eq:estimated regression function}), we have
	\begin{align}\label{eq:hat E expansion}
		\begin{split}
			\ali \widehat \E (Y | x \lo 0 )(t) - \E ( Y | x \lo 0)(t) \\
			\ali =  \langle \ka \lo T (\cdot, t), [\hat  B \hi * ( \ka (\cdot, x \lo 0) - \hat \mu \lo X)- B \lo 0  \hi * ( \ka (\cdot, x \lo 0) - \mu \lo X )]\rangle \lo {\ca H \lo Y} +  \hat  \mu \lo Y (t)  -    \mu \lo Y (t) \\
			\ali  =  \langle (\hat  B   - B \lo 0)\, \ka \lo T (\cdot, t),  \ka (\cdot, x \lo 0) - \mu \lo X \rangle \lo {\frak M \lo X}  + \langle (\hat  B  - B \lo 0 ) \, \ka \lo T (\cdot, t),  \mu \lo X - \hat \mu \lo X  \rangle \lo {\frak M \lo X} \\
			\ali \hspace{.2in}  + \langle  B \lo 0 \, \ka \lo T (\cdot, t),  \mu \lo X - \hat \mu \lo X \rangle \lo {\frak M \lo X}  +  [\hat  \mu \lo Y (t)  -    \mu \lo Y (t)].
		\end{split}
	\end{align}
	Hence
	\begin{align}\label{eq:hat E bound}
		\begin{split}
			\ali |\widehat \E (Y | x \lo 0 )(t) - \E ( Y | x \lo 0)(t) | \\
			\ali  \le  \|\hat  B   - B \lo 0 \| \loo {OP} \, \|\ka \lo T (\cdot, t) \| \lo {\ca H \lo Y} \, \| \ka (\cdot, x \lo 0) - \mu \lo X \| \lo {\frak M \lo X}
			+ \|\hat  B  - B \lo 0 \| \loo {OP} \,   \|\ka \lo T (\cdot, t) \| \lo {\ca H \lo Y} \,  \\
			\ali \hspace{.2in}  \|   \mu \lo X - \hat \mu \lo X  \| \lo {\frak M \lo X} +  \|  B \lo 0 \| \loo {OP} \, \|  \ka \lo T (\cdot, t)\| \lo {\ca H \lo Y} \, \| \mu \lo X - \hat \mu \lo X\| \lo {\frak M \lo X} +  | \hat  \mu \lo Y (t)  -    \mu \lo Y (t)|.
		\end{split}
	\end{align}
	Since $\hat \mu \lo Y$ and $\hat \mu \lo X$ are sample averages, by Chebychev's inequality,
	\begin{align*}
		\|   \mu \lo X - \hat \mu \lo X  \| \lo {\frak M \lo X} =O \lo P ( n \hi {-1/2}), \quad  \hat  \mu \lo Y (t)  -    \mu \lo Y (t)=O \lo P ( n \hi {-1/2}).
	\end{align*}
	Hence the right-hand side of (\ref{eq:hat E bound}) is dominated by $\|\hat  B   - B \lo 0 \| \loo {OP}$, which proves (\ref{eq:hat E rate}). The rest of the corollary is obvious. \end{proof}

\begin{proof}[Proof of Theorem \ref{theorem:clt}]
	{\em 1}. For convenience, let $g$ and $f$ denote the functions $\ka \lo T (\cdot, t)$ and $\ka (\cdot, x \lo 0) - \mu \lo X$. Then we can reexpress  $A \lo {n,3}$   as $n\inv \tsum \lo {i=1} \hi n Z \lo {ni}$ where
	\begin{align*}
		Z \lo {ni} =  \langle g, U \lo i \rangle \lo {\ca H \lo Y} \, \langle V\lo n  f, \ka (\cdot, X \lo i) - \mu \lo X \rangle \lo {\frak M \lo X}.
	\end{align*}
	Note that  $Z \lo {n1}, \ldots, Z \lo {nn}$ are i.i.d. random variables and, since $X \lo i  \indep U \lo i$,
	we have $\E  Z \lo {ni} =  0$. Hence
	\begin{align}\label{eq:E inner squared}
		\sigma \lo {n,3} \hi 2 = \E [\langle f, V \lo n \tilde   \Sigma \lo {XU} g  \rangle \lo {\frak M \lo X} \hi 2 ] = n \inv \E ( Z \lo {n1} \hi 2) = n  \inv \, \E [  \langle g, U \rangle \lo {\ca H \lo Y} \hi 2 \, \langle V \lo n f, \ka (\cdot, X) - \mu \lo X \rangle \lo {\frak M \lo X} \hi 2].
	\end{align}
	By $U \indep X$  and (\ref{eq:kappa expansion}), the right-hand side is
	\begin{align*}
		\ali n  \inv \, \E [  \langle g, U \rangle \lo {\ca H \lo Y} \hi 2 ] \, \E [ \langle V \lo n f, \ka (\cdot, X) - \mu \lo X \rangle \lo {\frak M \lo X} \hi 2] \\
		\ali \hspace{1in} = n  \inv \, \E [  U (t) \hi 2 ] \, \E [ \langle V \lo n f, \tsum \lo {j \in \nat} \zeta \lo j \varphi \lo j  \rangle \lo {\frak M \lo X} )\hi 2] \\
		\ali \hspace{1in}  = n  \inv \, \E [  U (t) \hi 2 ] \, \E [ (\tsum \lo {j \in \nat} \zeta \lo j (\lambda \lo j + \epsilon \lo n ) \inv  \langle  f,\varphi \lo j  \rangle \lo {\frak M \lo X}) \hi 2]. \end{align*}
	Since $\zeta \lo 1, \zeta \lo 2, \ldots$ are uncorrelated, we have
	\begin{align*}
		\E [ (\tsum \lo {j \in \nat} \zeta \lo j (\lambda \lo j + \epsilon \lo n ) \inv  \langle  f,\varphi \lo j  \rangle \lo {\frak M \lo X} \hi 2]
		=\ali \tsum \lo {j \in \nat} \E [ (\zeta \lo j \hi 2 (\lambda \lo j + \epsilon \lo n ) \hi {-2}  \langle  f,\varphi \lo j  \rangle \lo {\frak M \lo X} \hi 2] \\
		=\ali \tsum \lo {j \in \nat} \lambda \lo j  (\lambda \lo j + \epsilon \lo n ) \hi {-2}  \langle  f,\varphi \lo j  \rangle \lo {\frak M \lo X} \hi 2.
	\end{align*}
	Note that for any $j \geq 1$, $\varphi \lo j$ is a member of $\frak M \lo X \hi 0$, the effective domain of $\Sigma\lo{XX}$.
	Since $\langle f, \varphi \lo j \rangle \lo {\frak M \lo X} = \varphi \lo j (x \lo 0)- \E \varphi \lo j(X) = \varphi \lo j (x \lo 0)$, we have the desired equality in  part {\em 1}.
	
	\medskip
	
	\noindent {\em 2}. As argued in the proof of Corollary \ref{corollary:regression estimate}, the last three terms on the right-hand side of (\ref{eq:hat E expansion}) are all of the parametric order $O \lo P ( n \hi {-1/2})$ or smaller. Therefore we only need to consider the term
	\begin{align*}
		\langle (\hat  B \lo 0  - B \lo 0)\,g,  f \rangle \lo {\ca H \lo Y} = A \lo {n,1} + \cdots + A \lo {n,5}.
	\end{align*}
	By Assumption \ref{assumption:dominating}, $A \lo {n,3}$ is the dominating term, and so we only need to derive its asymptotic distribution.
	Since
	$A \lo {n,3}= n\inv \tsum \lo {i=1} \hi n Z \lo {ni}$, where   $\{ n \inv Z \lo {ni}: i = 1, \ldots, n, n \in \nat \}$ is a triangular array, we use Lyapounov's central limit theorem. Thus, for a $c > 0$, let
	\begin{align*}
		L \lo n (c ) = \sigma \lo {n,3} \hi  {-2-c} \tsum \lo {i=1} \hi n \, \E | n \inv Z \lo {ni}  | \hi {2 + c}.
	\end{align*}
	We need to  verify $L \lo n (c) \to 0$ as $n \to\infty$ for some $c > 0$. Take $c = 2$.
	Then
	\begin{align*}
		L \lo n (2)  = \sigma \lo {n,3}  \hi {-4}  \tsum \lo {i=1} \hi {n} \E ( | n\inv Z \lo {ni} | \hi 4 )= n \hi {-1} [ \E ( Z \lo {ni} \hi 2) ] \hi {-2}  \E (  Z \lo {ni}  \hi 4 ).
	\end{align*}
	By $U \indep X$,
	\begin{align*}
		\E (  Z \lo {ni} \hi 4 ) =  \E \langle U, g \rangle \lo {\ca H \lo Y} \hi 4 \, \E \langle V \lo n f, \ka (\cdot,X)- \mu \lo X  \rangle \lo {\frak M \lo X} \hi 4
		\asymp \E \langle V \lo n f, \ka (\cdot,X)- \mu \lo X  \rangle \lo {\frak M \lo X} \hi 4.
	\end{align*}
	Since the kernel $\ka$ is bounded, the  right-hand side is upper bounded by
	\begin{align*}
		\ali  \E (\langle V \lo n f, \ka (\cdot,X)- \mu \lo X  \rangle \lo {\frak M \lo X} \hi 2  \, \| V \lo n \| \loo {OP} \hi 2 \, \| f \| \lo {\frak M \lo X} \hi 2 \,\| \ka (\cdot,X)- \mu \lo X  \| \lo {\frak M \lo X} \hi 2 ) \\
		\ali \hspace{1in} \le  M \epsilon \lo n \hi {-2}   \E (\langle V \lo n f, \ka (\cdot,X)- \mu \lo X  \rangle \lo {\frak M \lo X} \hi 2   )
	\end{align*}
	for some $M > 0$.
	Also, recall that
	\begin{align}\label{eq:E Z squared}
		\E ( Z \lo {ni} \hi 2) = \E \langle U, g \rangle \lo {\ca H \lo Y} \hi 2 \, \E  ( \langle f, V \lo n ( \ka (\cdot, X)  - \mu \lo X ) \rangle \lo {\frak M \lo X} \hi 2 ).
	\end{align}
	Consequently,
	\begin{align*}
		L \lo n (2) = \frac{ O ( n \inv \epsilon \lo n \hi {-2} )}{ \E  ( \langle f, V \lo n ( \ka (\cdot, X)  - \mu \lo X ) \rangle \lo {\frak M \lo X} \hi 2 )}.
	\end{align*}
	By expansion (\ref{eq:kappa expansion}), the denominator above can be bounded below as follows:
	\begin{align*}
		\E  ( \langle f, V \lo n ( \ka (\cdot, X)  - \mu \lo X ) \rangle \lo {\frak M \lo X} \hi 2 )= \ali \tsum \lo {j \in \nat} \, ( \lambda \lo j + \epsilon \lo n ) \hi {-2} \lambda \lo j f \lo j \hi 2 \ge  ( \lambda \lo 1 + \epsilon \lo n ) \hi {-2} \lambda \lo 1 f \lo 1 \hi 2 \ge  ( \lambda \lo 1 + \epsilon \lo 1 ) \hi {-2} \lambda \lo 1 f \lo 1 \hi 2.
	\end{align*}
	Hence $L \lo n (2) = O(n \inv \epsilon \lo n \hi {-2})$. Since $\epsilon \lo n \succ n \hi {-1/2}$, we have $L \lo n (2) \to 0$.
\end{proof}

\begin{proof}[Proof of Lemma \ref{le-main}] 
	{\em 1}. By Proposition \ref{prop-condE}, recall that 
	\begin{align} \nonumber
		\ali \widehat{M}(X\lo{n + 1}) - \E(Y\lo{n+1} | X\lo{n + 1}) \\
		\label{eq-Mhatexpansion}
		& = \hat{\Sigma}\lo{YX}\hat{V}[\ka(\cdot, X\lo{n+1}) - \widehat{\mu}\lo X] - \Sigma\lo {YX}\Sigma\lo {XX}\mpinv [\ka(\cdot, X\lo{n+1}) - {\mu}\lo X] + \widehat{\mu}\lo Y - \mu\lo Y \\
		\nonumber
		& = (\hat{\Sigma}\lo {YX}\hat{V} - \Sigma\lo{YX}\Sigma\lo {XX}\mpinv)[\ka(\cdot, X\lo {n+1}) - {\mu}\lo X] + \hat{\Sigma}\lo {YX}\hat{V}(\mu\lo X - \widehat{\mu}\lo X) + \widehat{\mu}\lo Y - \mu\lo Y \\
		\nonumber
		& : = D\lo {1n} + D\lo {2n} + \widehat{\mu}\lo Y - \mu\lo Y 
	\end{align}
	in obvious correspondence. We study the second term first. By Lemma \ref{lemma:reg and res} (2.) and Lemma \ref{lemma:hat vs tilde}, we have
	\begin{align*}
		\|D\lo {2n}\| & = \|\hat{\Sigma}\lo {YX}(\hat{\Sigma}\lo {XX} + \epsilon\lo n I)\inv(\mu\lo X - \widehat{\mu}\lo X)\| \lo {\frak M \lo X} \\
		& = \|(\hat{\Sigma}\lo {UX} + B\lo 0\hi*\hat{\Sigma}\lo {XX}) (\hat{\Sigma}\lo {XX} + \epsilon\lo n I)\inv (\mu\lo X - \widehat{\mu}\lo X)\| \lo {\frak M \lo X}\\
		& = O\lo P(\epsilon\lo n\hi{-1} n\hi {-1/2}) [\|\hat{\Sigma}\lo {UX} - \tilde \Sigma \lo {UX}\|\loo{OP} + \|\tilde \Sigma \lo {UX}\| \loo {OP}] + O\lo P(n\hi {-1/2})\\
		& = O\lo P(\epsilon\lo n\hi{-1} n\hi {-1}) + O\lo P(n\hi {-1/2}) = O\lo P(n\hi {-1/2}). 
	\end{align*}
	The third relation holds since $\|\tilde \Sigma \lo {UX}\| \loo {OP} = O\lo P (n\hi {-1/2})$. For $ i = 1, \ldots, n + 1$, let $G\lo {i} := \ka(\cdot, X\lo{i}) - {\mu}\lo X\in \frak M \lo X \hi 0 $ ($P\lo X$- a.s.), and
	$\tilde{G}\lo i =  \ka(\cdot, X\lo {i}) - \hat{\mu}\lo X$. Then $D\lo {1n}$ can be expressed as:
	\begin{align*}
		D\lo {1n}\ali = \left(\hat{\Sigma}\lo{YX}\hat{V} - \Sigma\lo{YX}V\right) G\lo{n+1} \\
		\ali = B\lo 0\hi*(\hat{\Sigma}\lo{XX}\hat{V}-\Sigma\lo{XX}V\lo n)G\lo{n+1} + 
		B\lo 0\hi* ({\Sigma}\lo{XX}V\lo n - I) G\lo {n+1}   +  \frac{1}{n} \sum\lo {i = 1}\hi n U\lo i \langle \hat{V} \tilde{G}\lo i, G\lo {n+1} \rangle \lo {\frak M \lo X}   \\
		\ali - \frac{\bar{U} }{n} \sum\lo {i = 1}\hi n \langle \hat{V}\tilde{G\lo i}, G\lo {n+1} \rangle \lo {\frak M \lo X}.
	\end{align*}
	Since $\sum\lo {i = 1}\hi n \tilde{G}\lo i = 0$, the last term is 0. Denote the first three terms by $E\lo {1n}, E\lo {2n}$ and $E\lo {3n}$, respectively.

	We first deal with the bias term, $E\lo {2n}$. By the Karhunen-Lo\`eve theorem, we can rewrite $G \lo{n + 1} = \ka(\cdot, X\lo{n + 1}) - \mu\lo X$ as
	$\sum\lo {j = 1}\hi {\infty} \zeta\lo j \varphi\lo j$, where $\zeta\lo 1, \zeta\lo 2, \cdots$ are mean 0, uncorrelated random variables with variance $\lambda\lo 1, \lambda\lo 2, \cdots$.
	We now derive the form of $\E\|E\lo{2n}\|\lo{\ca H \lo Y}\hi 2$ to determine stochastic order of $E\lo{2n}$. Let $\{\psi\lo k: k \in \nat\}$ be an ONB of $\ca H \lo Y$. 
	Under Assumption \ref{assumption:finte moments and beta}({\em ii}), we have $\langle B\lo 0\hi* \varphi\lo j, \psi\lo k \rangle \lo {\ca H \lo Y} =  \langle\Sigma\lo{XX}\hi{\beta}\varphi\lo j, S\lo{XY}\psi\lo k\rangle \lo {\frak M \lo X} = \lambda\lo j\hi{\beta}s\lo {jk}$, where $s\lo {jk} = \langle \varphi\lo j, S\lo{XY}\psi\lo k\rangle \lo {\frak M \lo X}$. 
	Since $S\lo{XY}$ is a Hilbert-Schmidt operator, $\sum\lo j \sum\lo k s\lo {jk}\hi 2 < \infty$ holds, and hence
	\begin{align*}
		\E\|E\lo{2n}\|\lo{\ca H \lo Y}\hi 2 & = \E \left\{ \left\|B\lo 0\hi* \left[{\Sigma}\lo{XX}({\Sigma}\lo{XX} + \epsilon\lo n I)\hi {-1} - I\right] G\lo {n+1}\right\|\lo{\ca H \lo Y}\hi 2 \right\} \\
		& = \E \left\{ \left\|B\lo 0\hi* \sum\lo {j = 1}\hi {\infty} \frac{\epsilon\lo n\zeta\lo j}{\epsilon\lo n + \lambda\lo j} \varphi\lo j\right\|\lo {\ca H \lo Y}\hi 2 \right\} \\
		& = \E \left\{ \sum\lo {k = 1}\hi {\infty} \left\langle B\lo 0\hi* \sum\lo {j = 1}\hi{\infty} \frac{\epsilon\lo n\zeta\lo j}{\epsilon\lo n + \lambda\lo j} \varphi\lo j, \psi\lo k \right\rangle\lo{\ca H \lo Y}\hi 2 \right\} \\
		& = \epsilon\lo n\hi 2 \sum\lo {j = 1}\hi {\infty} \frac{\lambda\lo j\hi {2\beta + 1}}{(\epsilon\lo n + \lambda\lo j)\hi2} \sum\lo {k = 1}\hi {\infty} s\lo {jk}\hi 2.
	\end{align*}
	By straightforward calculation, we can verify that, when $\beta \geq 1/2$, $\sup\lo {j \geq 1} \frac{\lambda\lo j\hi {2\beta + 1}}{(\epsilon\lo n + \lambda\lo j)\hi 2} = O(1)$ as $n \rightarrow \infty$ regardless of the decaying rate of $\epsilon\lo n$. As a result, $E\lo {2n} = O\lo P(\epsilon\lo n)$.

	Next, we consider $E\lo{1n}$. Denote $\hat{\Sigma}\lo{XX} - \Sigma\lo{XX}$ by $\Delta$.
	Since $\hat{\Sigma}\lo{XX}\hat{V} -\Sigma\lo{XX}V\lo n = \epsilon\lo n\hat{V} \Delta V\lo n$, we have
	$$
	E\lo {1n} = \epsilon\lo n B\lo 0\hi* \hat{V} \Delta V\lo n G\lo {n+1} = (\epsilon\lo n S\lo{YX} \Sigma\lo{XX}\hi {\beta}\hat{V}\Delta) (V\lo n G\lo {n+1}) : = E\lo {11n} \times E\lo {12n}. 
	$$
	By Lemma \ref{lemma:fuku et al},  $\|\Delta\| = O\lo P(n\hi {-1/2})$.
	If we assume that $\beta \geq 1$, 
	\begin{align*}
		\|E\lo{11n}\| \ali \leq \epsilon\lo n \|S\lo{YX}\| \loo {OP}\times \|\Sigma\lo{XX}\hi{\beta - 1}\Sigma\lo{XX} (\hat{\Sigma}\lo{XX} + \epsilon\lo n I)\hi {-1}\| \loo {OP}\times \|\Delta\| \loo {OP} \\
		\ali = O\lo P(n\hi {-1/2}\epsilon\lo n) \times \|\Sigma\lo{XX}\hi{\beta - 1}\| \times \|[(\Sigma\lo{XX} - \hat{\Sigma}\lo{XX}) + \hat{\Sigma}\lo{XX}] (\hat{\Sigma}\lo{XX} + \epsilon\lo n I)\inv\|\loo {OP} \\
		\ali = O\lo P(n\hi {-1/2}\epsilon\lo n) O\lo P(n\hi {-1/2}\epsilon\lo n\hi{-1} + 1) = O\lo P(n\hi {-1/2}\epsilon\lo n),
	\end{align*}
	where the last relation holds since $\epsilon\lo n \succ n\hi {-1/2}$. By Lemma \ref{lemma:calculus}, we have 
	$$
	\E\left[\|E\lo{12n}\|\lo{\frak M \lo X}\hi 2\right]  = \sum\lo{j=1}\hi {\infty} \frac{\lambda\lo j}{(\epsilon\lo n + \lambda\lo j)\hi 2} = O(\epsilon\lo n\hi {-(\alpha+1)/\alpha}).
	$$ 
	It follows that 
	$E\lo{1n} = O\lo P(n\hi{-1/2}\epsilon\lo n) \times O\lo P(\epsilon\lo n\hi{-(\alpha+1)/(2\alpha)}) = O\lo P(n\hi {-1/2}\epsilon\lo n\hi {(\alpha-1)/(2\alpha)})$. 

	Lastly, we consider 
	$$
	E\lo {3n} = n\inv \sum\lo {i = 1}\hi n U\lo i \langle \hat{V} \tilde{G}\lo i, G\lo {n+1} \rangle \lo {\frak M \lo X} = \sum\lo {i=1}\hi n Z\lo{i,n} + \bar{U} \langle \hat{V} (\mu\lo X - \hat{\mu}\lo X), G\lo{n+1} \rangle \lo {\frak M \lo X},
	$$
	where $Z\lo{i,n} = \frac{1}{n} U\lo i  \langle \hat{V} G\lo i, G\lo {n+1} \rangle \lo {\frak M \lo X}$. 
	The remainder term satisfies that
	$$
	\|\bar{U} \langle \hat{V} (\mu\lo X - \hat{\mu}\lo X), G\lo {n+1} \rangle \lo {\frak M \lo X} \| \lo {\ca H \lo Y}  \leq \|\bar{U} \|\lo {\ca H \lo Y} \cdot \|\mu\lo X - \hat{\mu}\lo X \| \lo {\frak M \lo X}  \cdot \left\{\|V\lo n G\lo {n+1}\|\lo{\frak M \lo X} + \|(\hat{V} - V\lo n)G\lo{n+1}\|\lo{\frak M \lo X}\right\}. 
	$$
	Based our previous calculations of $\E(\|E\lo {12n}\|\hi 2)$, we have $\|V\lo n G\lo {n+1}\|\lo{\frak M \lo X} = O\lo P(\epsilon\lo n\hi {-(\alpha+1)/(2\alpha)})$. Note that
	$\hat{V} - V\lo n = -\hat{V}\Delta V\lo n$. Hence 
	\begin{align*}
		||(\hat{V} - V\lo n)G\lo {n+1}\|\lo{\frak M \lo X} \ali = \|\hat{V}\Delta V\lo n G\lo {n+1}\|\\
		\ali \leq \|\hat{V}\Delta\|  \|V\lo n G\lo {n+1}\| \\
		\ali = O\lo P(\epsilon\lo n\hi {-1}n\hi{-1/2}) O\lo P(\epsilon\lo n\hi {-(\alpha+1)/(2\alpha)}) \\
		\ali = o\lo P(\epsilon\lo n\hi {-(\alpha+1)/(2\alpha)}),
	\end{align*}
	where the last equality holds because $n\hi{-1/2} \prec \epsilon\lo n$. 
	It follows that the remainder is $O\lo P(n\hi {-1} \epsilon\lo n\hi {-(\alpha+1)/(2\alpha)})$. 

	\medskip
	
	\noindent {\em 2}. By definition, we have 
	\begin{align*}
		\|W\lo n\|\lo {\ca H \lo Y}\hi 2 \ali =  \frac{1}{n\hi 2} \sum\lo {i = 1}\hi n \|U\lo i\|\lo{\ca H \lo Y}\hi 2  \langle \hat{V} G\lo i, G\lo {n+1} \rangle \lo {\frak M \lo X}\hi 2 \\
		\ali + \frac{1}{n\hi 2} \sum\lo{i \neq i\hi {\prime}}\hi n \langle U\lo i, U\lo{i\hi {\prime}}\rangle\lo{\ca H \lo Y}  \langle \hat{V} G\lo i, G\lo {n+1} \rangle \lo {\frak M \lo X} \langle \hat{V} G\lo {i\hi{\prime}}, G\lo {n+1} \rangle \lo {\frak M \lo X}.
	\end{align*}
	By $(U \lo 1, \ldots, U \lo n) \indep (X\lo 1, \ldots, X\lo n)$ and $U \lo i \indep U \lo {i\hi{\prime}}$ if $i \neq i\hi{\prime}$, the expectation of the second term is 0. Moreover,  
	$\E\|W\lo n\|\lo {\ca H \lo Y}\hi 2 = n\inv\E[\|U\lo 1\|\lo{\ca H \lo Y}\hi 2 ] \E[\langle \hat{V} G\lo i, G\lo {n+1} \rangle \lo {\frak M \lo X}\hi 2] = n\inv{\sigma\lo {U}\hi 2}  \E[\langle \hat{V} G\lo i, G\lo {n+1} \rangle \lo {\frak M \lo X}\hi 2]$, where $\sigma\lo U\hi 2 = \tr(\Sigma\lo {UU}) < \infty$ since $\E\|Y\|\lo {\ca H \lo Y}\hi 2 < \infty$. 
	By the law of iterated expectations,
	\begin{align*}
		\E[\langle \hat{V} G \lo i, G\lo {n+1} \rangle \lo {\frak M \lo X}\hi 2] \ali = \E \left[\E\left(\langle \hat{V} G\lo i, G \lo {n+1} \rangle \lo {\frak M \lo X} \hi 2 \mid X\lo 1, \ldots, X\lo n\right) \right] \\
		\ali = \E\left[ \E\left\{\langle \hat{V} G\lo i, (G\lo{n+1} \otimes G\lo{n+1})\hat{V} G\lo i\rangle \lo {\frak M \lo X} \mid X \lo 1, \ldots, X \lo n \right\} \right] \\
		\ali = \E[\langle\hat{V}\Sigma\lo {XX}\hat{V} G\lo i, G\lo i \rangle\lo {\frak M \lo X}] \\
		\ali = \E[\tr (\hat{V}\Sigma\lo{XX}\hat{V} (G\lo i \otimes G\lo i))],
	\end{align*}
	where the third equality holds because $\E(G\lo{n + 1} \otimes G \lo {n + 1} | X \lo 1, \ldots, X \lo n) = \E(G\lo{n + 1} \otimes G \lo {n + 1}) = \Sigma\lo{XX}$, and the last equation holds because $\tr(A(g \otimes h)) = \langle Ag, h \rangle\lo {\ca H}$ for any $g, h \in \ca H$ and $A \in \ca B(\ca H)$.
	
	Let $\tilde{\Sigma}\lo{XX} = n \inv \sum\lo{i = 1}\hi n (\ka(\cdot, X\lo i) - \mu \lo X) \otimes (\ka(\cdot, X\lo i) - \mu\lo X)$. Since $\hat{V}\Sigma\lo{XX}\hat{V}(G\lo 1 \otimes G \lo 1), \ldots, \hat{V}\Sigma\lo{XX}\hat{V}(G\lo n \otimes G \lo n)$ have the same distribution, they have the same expectation. Hence
	\begin{align*}
		\E[\tr (\hat{V} \Sigma\lo {XX}\hat{V} (G\lo i \otimes G\lo i))] = \E[\tr (\hat{V}\Sigma\lo {XX}\hat{V} \tilde{\Sigma}\lo {XX})].
	\end{align*}
	By the properties of the trace of linear operators, we have
	\begin{align*}
		\E[\tr (\hat{V}\Sigma\lo {XX}\hat{V} \tilde{\Sigma}\lo {XX})] = \tr(\Sigma\lo {XX}\E(\hat{V} \tilde{\Sigma}\lo {XX}\hat{V})) \leq \tr(\Sigma\lo{XX}) \|\E(\hat{V} \tilde{\Sigma}\lo {XX}\hat{V})\|\loo{OP}.
	\end{align*}
	Rewriting $\hat{V} \tilde{\Sigma}\lo {XX}\hat{V}$ as $\hat{V} [(\tilde{\Sigma}\lo {XX} - \hat{\Sigma}\lo{XX}) +  \hat{\Sigma}\lo{XX}]\hat{V}$, we have
	\begin{align*}
		\|\E(\hat{V} \tilde{\Sigma}\lo {XX}\hat{V})\| \loo {OP} \ali  \leq \E(\|\hat{V} \tilde{\Sigma}\lo{XX}\hat{V}\|\loo{OP} ) \leq \E(\|\hat{V}\|\hi 2 \loo{OP} \|\tilde{\Sigma}\lo{XX} - \hat{\Sigma}\lo{XX}\|\loo{OP}) + \E(\|\hat{V}\|\loo{OP} \cdot \| \hat{\Sigma}\lo{XX}\hat{V}\|\loo{OP}) \\
		\ali \leq C\epsilon\lo n\hi{-2}n\hi{-1} + \epsilon\lo n\hi{-1} 
	\end{align*}
	for some constant $C > 0$. The last inequality holds since $\E( \|\tilde{\Sigma}\lo{XX} - \hat{\Sigma}\lo{XX}\|\loo{OP}) \leq C n\hi {-1}$ by the proof of Lemma 5 in \cite{fukumizu2007}. Since $\epsilon\lo n \succ n\hi {-1/2}$, we have $\epsilon\lo n\hi{-2}n\hi{-1} \prec \epsilon\lo n\hi {-1}$, proving Part {\em{2}}. 
\end{proof}

\begin{proof}[Proof of Theorem \ref{th-strongCLT}] 
	As argued following the proof of Lemma \ref{le-main}, 
	\begin{align*}
		\widehat{M}(X\lo {n + 1}) - \E(Y\lo {n+1} | X\lo {n + 1})  = F\lo {1n} + F\lo {2n} + W\lo n + R\lo n + O\lo P(n\hi{-1/2}).
	\end{align*}
	Since, by Assumption \ref{assumption: uniform dominating}, $W\lo n$ is the dominating term. We focus on the weak convergence of $W\lo n$ in $\ca H \lo Y$.  
	
	We first show the finite-dimensional convergence of $W \lo n$; that is,
	for any deterministic $y \in \ca H \lo Y$, 
	\begin{equation} \label{eq-finite}
		s\lo n\hi {-1} \langle W\lo n, y\rangle \lo {\ca H \lo Y} \stackrel{\ca D}\longrightarrow N(0, \sigma\hi 2\lo{U, y}),
	\end{equation}
	where $\sigma\lo{U,y} = \langle y, \Sigma \lo{UU} y \rangle \lo{\ca H \lo Y}$. 
	Let $\ca F\lo i$ denote the $\sigma$-algebra generated by $\{X\lo 1, U\lo 1, \ldots, X\lo i, U\lo i\}$ (or equivalently by $\{X\lo 1, Y\lo 1, \ldots, X\lo i, Y\lo i\}$). Let $H\lo i(y) : = \langle Z\lo {i, n}, y \rangle \lo {\ca H \lo Y}$. Obviously $\E[H\lo i(y)]$ is 0, and $H\lo i(y)$ is a martingale difference sequence with respect to the filtration $\ca F \lo i$. 
	To find its variance, we employ the law of iterated expectations:
	\begin{align*}
		\E \{H\lo i\hi 2(y) | \ca F \lo i\} & = \E\{\langle Z \lo {i, n}, y \rangle \lo {\ca H \lo Y} \hi 2 | \ca F \lo i \} \\
		& = \E \left\{ \left\langle \frac{1}{n} U\lo i \langle \hat{V}G \lo i, G\lo{n + 1} \rangle\lo{\frak M \lo X}, y \right\rangle \hi 2  | \ca F \lo i\right\} \\
		& = \frac{1}{n \hi 2} \E \left\{\langle \hat{V}G \lo i, G\lo{n + 1} \rangle\lo{\frak M \lo X} \hi 2 \langle U \lo i, y \rangle \hi 2 | \ca F \lo i\right\} \\
		& = \frac{\langle y, U\lo i  \rangle \hi 2 \lo {\ca H \lo Y}}{n\hi 2} \E \left\{ \langle \hat{V}G \lo i, G\lo{n + 1} \rangle\lo{\frak M \lo X} \langle \hat{V}G \lo i, G\lo{n + 1} \rangle\lo{\frak M \lo X} | \ca F \lo i \right\}
		\\
		& = \frac{\langle y, U\lo i  \rangle \hi 2 \lo {\ca H \lo Y}}{n\hi 2} \E \left\{ \langle \hat{V}G \lo i, (G\lo{n + 1} \otimes G \lo {n + 1}) \hat{V} G \lo i \rangle\lo{\frak M \lo X}  | \ca F \lo i \right\} \\
		& = \frac{\langle y, U\lo i  \rangle \hi 2 \lo {\ca H \lo Y}}{n\hi 2} \langle \hat{V}G \lo i, \Sigma\lo{XX} \hat{V} G \lo i \rangle\lo{\frak M \lo X} \\
		& = \frac{\langle y, U\lo i  \rangle \hi 2 \lo {\ca H \lo Y}}{n\hi 2} \|\Sigma\lo{XX}\hi{1/2} \hat{V} G\lo i\|\hi 2\lo{\frak M \lo X}. 
	\end{align*}
	Therefore, 
	\begin{align*}
		\E\{H\lo i\hi 2(y)\} & = \E\left[\E \{H\lo i\hi 2(y) | \ca F \lo i\}\right] \\
		& = \frac{1}{n\hi 2}\E \left[ \|\Sigma\lo{XX}\hi{1/2} \hat{V} G\lo i\|\hi 2        \E\left\{\langle y, U\lo i  \rangle\hi 2 \lo{\ca H \lo Y} | X\lo 1, \ldots, X\lo i\right\}        \right] \\
		& = \frac{\sigma \hi 2\lo {U,y}}{n\hi 2} \E\left( \|\Sigma\lo {XX}\hi{1/2} \hat{V} G\lo i\|\lo {\frak M \lo X}\hi 2 \right) = \frac{\sigma\lo {U,y}\hi 2 s\lo n\hi 2}{n},
	\end{align*}
	where, for the third equality, we used $U \lo i \indep (X\lo 1, \ldots, X\lo n)$.
	The convergence in \eqref{eq-finite} then follows from the central limit theorem for martingale difference arrays in \cite{mcleish1974}. 
	
	Next, we show that the sequence $s\lo n\hi {-1}W\lo n$ is asymptotically tight. By Lemma 1.8.1 of \cite{van1996weak}, it suffices to show that for any $\eta > 0$, 
	\begin{equation} \label{eq-tight}
		\limsup\lo {J \rightarrow \infty} \limsup\lo{n \rightarrow \infty} \Pr \left(\sum\lo {j > J}\langle s\lo n\hi {-1}W\lo n, \psi\lo j\rangle\hi 2\lo{\ca H \lo Y} > \eta\right) = 0, 
	\end{equation}
	where $\{\psi\lo j: j \in \nat \}$ is any ONB of $\ca H \lo Y$. For any $j \in \nat $, we have
	\begin{align*}
		\E\left(\langle s\lo n\hi {-1}W\lo n, \psi\lo j\rangle\hi 2\lo {\ca H \lo Y}\right) & = \E \left(\left\langle \frac{s\lo n \inv}{n} \sum\lo{i = 1}\hi n U\lo i \langle \hat{V}G \lo i, G\lo{n + 1} \rangle\lo{\frak M \lo X}, \psi\lo j \right\rangle \hi 2 \right) \\
		& = \frac{s\lo n \hi {-2}}{n \hi 2} \E \left\{\left(  \sum\lo{i = 1}\hi n \langle \hat{V}G \lo i, G\lo{n + 1} \rangle\lo{\frak M \lo X} \langle U \lo i, \psi\lo j\rangle \right) \hi 2\right\} \\
		& = \frac{s\lo n \hi {-2}}{n \hi 2} \sum\lo{i = 1}\hi n \sum\lo{i\pri = 1}\hi n 
		\E \left\{\langle \hat{V}G \lo i, G\lo{n + 1} \rangle\lo{\frak M \lo X} 
		\langle \hat{V}G \lo {i\pri}, G\lo{n + 1} \rangle\lo{\frak M \lo X} 
		\langle U \lo i, \psi\lo j\rangle \langle U \lo {i\pri}, \psi\lo j\rangle\right\}\\
		& = \frac{s\lo n \hi {-2}}{n \hi 2} \sum\lo{i = 1}\hi n \sum\lo{i\pri = 1}\hi n 
		\E\left\{\langle \hat{V}G \lo i, G\lo{n + 1} \rangle\lo{\frak M \lo X} 
		\langle \hat{V}G \lo {i\pri}, G\lo{n + 1} \rangle\lo{\frak M \lo X}\right\}
		\E\left\{\langle U \lo i, \psi\lo j\rangle \langle U \lo {i\pri}, \psi\lo j\rangle\right\}\\
		& = \frac{s\lo n \hi {-2}}{n \hi 2} \sum\lo{i = 1}\hi n
		\E\left\{\langle \hat{V}G \lo i, G\lo{n + 1} \rangle\lo{\frak M \lo X} 
		\hi 2\right\} \E\left\{\langle U \lo i, \psi\lo j\rangle \hi 2\right\}\\
		& = \E\left\{\langle U \lo i, \psi\lo j\rangle \hi 2\right\},
	\end{align*}
	where, for the fourth and fifth equalities, we used $(U \lo 1, \ldots, U \lo n) \indep (X\lo 1, \ldots, X\lo n)$, $U \lo i \indep U \lo {i\hi{\prime}}$ if $i \neq i\hi{\prime}$ and $\E(U \lo i) = 0$. Therefore, as $J \rightarrow \infty$, 
	$$
	\E\left(\sum\lo {j > J}\langle s\lo n\hi {-1}W\lo n, \psi\lo j\rangle\hi 2\lo {\ca H \lo Y}\right) = \E\sum\lo {j > J}\langle U, \psi\lo j\rangle\lo{\ca H \lo Y}\hi 2 \rightarrow 0,
	$$
	by the dominated convergence theorem, because the right-hand side is bounded by $\|U\|\lo{\ca H \lo Y}\hi 2$, which has a finite expectation. Thus the sequence $s\lo n\hi {-1}W\lo n$ is asymptotically tight. 
	
	Combining the above results, we obtain \eqref{eq-strongCLT}. 
\end{proof}
	
	
\bibliographystyle{apalike}
\bibliography{rkhs}
	
\end{document}